\newcommand\defeq{\stackrel{\mathclap{\normalfont\mbox{\tiny def}}}{=}}
\newcommand\BackgroundPic{%
\put(0,0){%
\parbox[b][\paperheight]{\paperwidth}{%
\vfill
\centering
\includegraphics[width=\paperwidth,height=\paperheight,%
keepaspectratio]{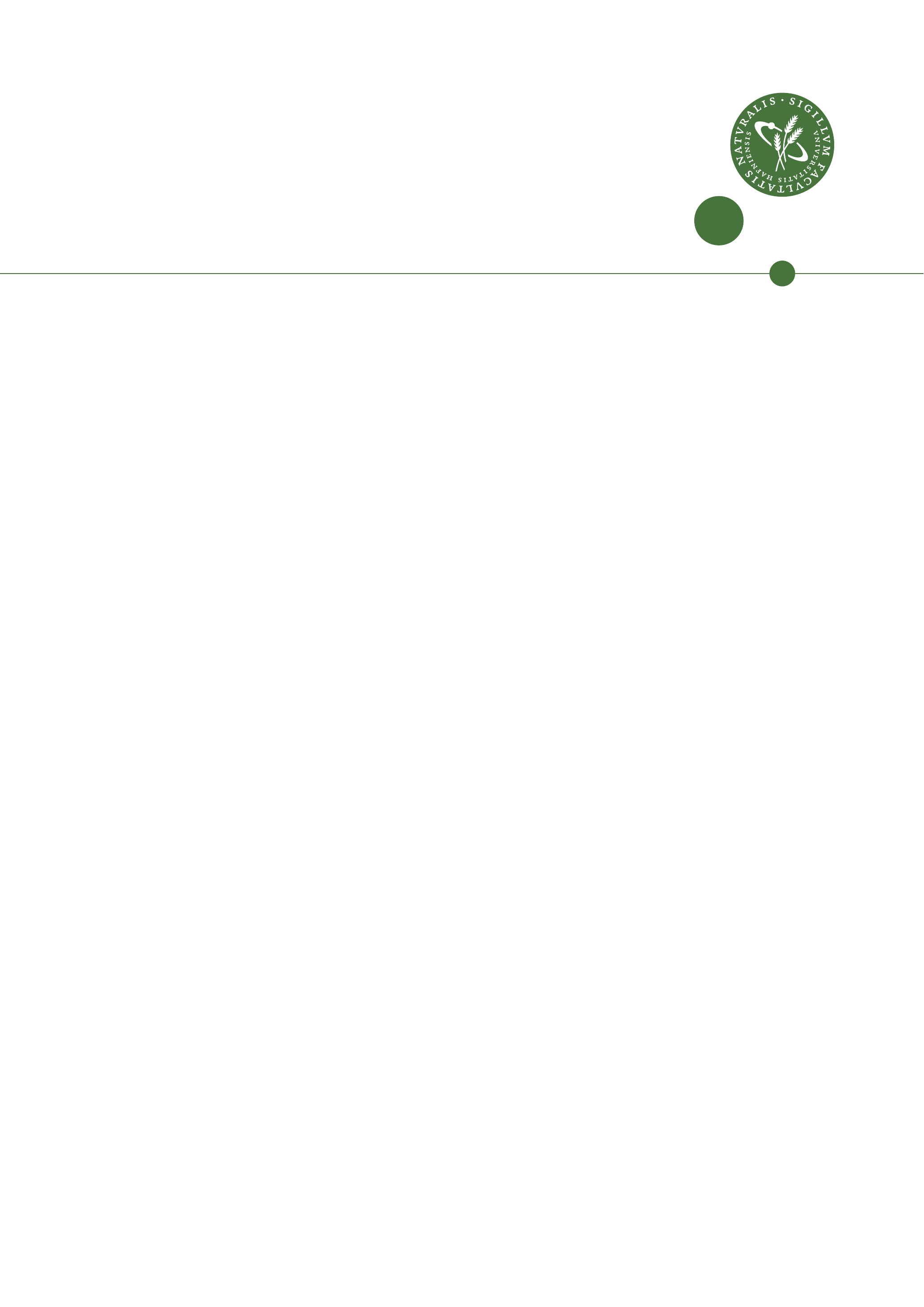}%
\vfill
}}}
\sodef\an{}{0.1em}{0.5em}{0em}							% custom spacing in/between words in titles
\sodef\ann{}{0.1em}{0.5em}{0em}							% custom spacing in/between words in smaller text
\theoremstyle{plain}
\newtheorem{thm}{Theorem}
\newtheorem{lem}[thm]{Lemma}
\newtheorem{prop}[thm]{Proposition}
\newtheorem{cor}[thm]{Corollary}
\newtheorem{defn}[thm]{Definition}
\theoremstyle{definition}
\theoremstyle{remark}
\newtheorem*{rem}{Remark}
\newtheorem*{note}{Note}
\newcommand\norm[1]{\left\lVert#1\right\rVert}
\title{\vspace{-2.0cm} How to Generate Pseudorandom Permutations Over Other Groups} 
\author{Hector B. Hougaard}
\date{\vspace{-5ex}}
\begin{document}

\begin{titlepage}
\AddToShipoutPicture*{\BackgroundPic} % University/department logo
\thispagestyle{empty}

\newgeometry{margin=25mm,bottom=30mm}
\fontfamily{ptm}\selectfont
\scshape

\vspace{65pt}

{\centering	% vertically center each block (separate with \vfill)
	
	\vspace*{130pt}
	
	\newcommand{\ts}{31}	% size of title text
	
	\fontsize{14}{14}\selectfont
	\an{How to} \par
	\fontsize{\ts}{\ts}\selectfont
	\ann{Generate Pseudorandom Permutations} \par \vspace{4pt}
	\fontsize{\ts}{\ts}\selectfont
	\an{Over Other Groups}
	%	\an{Graph }$C^*$-\an{algebras}
	
	\vspace{26pt}
	
	\fontsize{16}{16}\selectfont
	\ann{Hector Bjoljahn Hougaard}\par \vspace{.6em}
	\fontsize{12}{12}\selectfont
	\ann{August 2017}

	\vfill
	
	\fontsize{14}{14}\selectfont
	\ann{Advisors:} \\
	\ann{Gorjan Alagic} \\
	\fontsize{10}{10}\selectfont
	\ann{and} \\
	\fontsize{14}{14}\selectfont
	\ann{Florian Speelman}

	\vspace{42pt}
	
	\fontsize{10}{16}\selectfont
	\ann{Master}'s\ann{ Thesis in Mathematics} \\
	\ann{Department of Mathematical Sciences, University of Copenhagen}
	
}

\normalfont
\restoregeometry
\cleardoublepage
\end{titlepage}

\newpage
\pagenumbering{gobble}
\begin{abstract}
Recent results by Alagic and Russell have given some evidence that the Even-Mansour cipher may be secure against quantum adversaries with quantum queries, if considered over other groups than $(\mathbb{Z}/2)^n$. This prompts the question as to whether or not other classical schemes may be generalized to arbitrary groups and whether classical results still apply to those generalized schemes.

In this thesis, we generalize the Even-Mansour cipher and the Feistel cipher. We show that Even and Mansour's original notions of secrecy are obtained on a one-key, group variant of the Even-Mansour cipher. We generalize the result by Kilian and Rogaway, that the Even-Mansour cipher is pseudorandom, to super pseudorandomness, also in the one-key, group case. Using a Slide Attack we match the bound found above. After generalizing the Feistel cipher to arbitrary groups we resolve an open problem of Patel, Ramzan, and Sundaram by showing that the $3$-round Feistel cipher over an arbitrary group is not super pseudorandom.

We generalize a result by Gentry and Ramzan showing that the Even-Mansour cipher can be implemented using the Feistel cipher as the public permutation. In this result, we also consider the one-key case over a group and generalize their bound.

Finally, we consider Zhandry's result on quantum pseudorandom permutations, showing that his result may be generalized to hold for arbitrary groups. In this regard, we consider whether certain card shuffles may be generalized as well.
\end{abstract}

\newpage
\pagenumbering{gobble}
\begin{center}
\vspace*{130pt}
\textit{Tak til min mor, \\}
\textit{min far, \\}
\textit{Finn og Dorte. \\}
\vspace*{10pt}
\textit{Med hele mit hjerte,\\}
\textit{tak for alt Jeres støtte.}
\end{center}

\newpage
\pagenumbering{gobble}
\tableofcontents

\newpage
\pagenumbering{arabic}
\section{Introduction}
In many branches of mathematics, physics, economics, and others, randomness is of great importance. In mathematics, it is generally assumed, especially in probability theory, that true randomness exists. The question therefore becomes whether or not such true randomness exists in reality. In cryptography as well, it is assumed that true randomness exists, lest all cryptographic security falls apart. However, the authenticity of sources of true randomness can be debated. Cryptography offers an alternative to true randomness: pseudorandomness. Pseudorandomness is something which looks truly random to certain observers, but is not actually so. Over time, cryptographers have constructed many candidates for pseudorandom generators, pseudorandom functions, and pseudorandom permutations. One type of construction, is the so-called block cipher.

In \citep{EM}, Even and Mansour introduced and proved security for the DES inspired block cipher scheme we now call the Even-Mansour (EM) scheme. Given a public permutation, $P$, over $n$-bit strings, with two different, random, secret, $n$-bit keys $k_1$ and $k_2$, a message $x\in \lbrace 0,1 \rbrace^n$ could be enciphered as
\begin{align*}
EM_{k_1,k_2}^P (x) = P(x \oplus k_1)\oplus k_2,
\end{align*}
with an obvious decryption using the inverse public permutation. The scheme was minimal, in the sense that they needed to XOR a key before and after the permutation, otherwise the remaining key could easily be found. As an improvement, Dunkelman, Keller, and Shamir \citep{DKS} showed that there was only a need for a single key and the scheme would still retain an indistinguishability from random, i.e. it was pseudorandom. As another consideration of block ciphers, the Feistel cipher construction of Luby and Rackoff \citep{LubyR} showed how to build pseudorandom permutations from pseudorandom functions. 

Eventually, Kuwakado and Morii showed that both the EM scheme \citep{BrokenEM} and the Feistel scheme \citep{BrokenFeistel} could be broken by quantum adversaries with quantum queries. Rather than discard these beautiful constructions entirely, Alagic and Russell \citep{Gorjan} considered whether it would be possible to define the two-key EM scheme over abelian groups in order to retain security against quantum adversaries with quantum queries. What they showed was a security reduction to the Hidden Shift Problem, over certain groups, such as $\mathbb{Z}/2^n$ and $S_n$. This result inspires us to ask whether the EM and Feistel schemes can be generalized over all groups, and if so, whether or not we can get pseudorandomness in some model.

\subsection{Prior Work}
In extension of their simplification of the EM scheme, Dunkelman, Keller, and Shamir \citep{DKS} attacked the construction using variants of slide attacks in order to show that the security bound was optimal. They further considered other variants of the EM scheme, such as the Addition Even-Mansour with an Involution as the Permutation (two-keyed). Also Kilian and Rogaway \citep{Kilr} were inspired by DESX and EM to define their $FX$ construction, of which the EM scheme is a special case.

As referred to above, Kuwakado and Morii were able to break the EM scheme \citep{BrokenEM} and the $3$-round Feistel scheme \citep{BrokenFeistel} on $n$-bit strings, using Simon's algorithm, if able to query their oracle with a superposition of states. Kaplan et al. \citep{Kaplan}, using Kuwakado and Morii's results, showed how to break many classical cipher schemes, which in turn incited Alagic and Russell \citep{Gorjan}.

In their work with the Hidden Shift Problem, \citep{Gorjan} posit that a Feistel cipher construction over other groups than the bit strings group might be secure against quantum adversaries with quantum queries. Many Feistel cipher variants exist, with different relaxations on the round functions, see for example \citep{NaorReingold} and \citep{PatelRamzanSundaram}, the latter of which also considered Feistel ciphers over other groups. Vaudenay \citep{Vaudenay} also considered Feistel ciphers over other groups in order to protect such ciphers against differential analysis attacks by what he called decorrelation.

Removed from the schemes considered below and with a greater degree of abstraction, Black and Rogaway \citep{BlackRogaway} consider ciphers over arbitrary domains which Hoang, Morris, and Rogaway \citep{HMR, MR14} do as well. More generally, Zhandry \citep{Zhandry} defines the notion of Function to Permutation Converters in order to show that the existence of quantum one-way functions imply the existence of quantum pseudorandom permutations.

\subsection{Summary of Results}
We work in the Random Oracle Model and consider groups $G$ in the family of finite groups, $\mathcal{G}$. We consider pseudorandom permutations, given informally as the following.

\begin{defn}
\textbf{[Informal]} A keyed permutation $P$ on a group $G$ is a \textbf{Pseudorandom Permutation (PRP)} on $G$ if it is indistinguishable from a random permutation for all probabilistic distinguishers having access to only polynomially many permutation-oracle queries.
\end{defn}

A \textbf{Super Pseudorandom Permutation (SPRP)} is a permutation where the distinguisher is given access to the inverse permutation-oracle as well.

We define the \textbf{Group Even-Mansour (EM) scheme} on $G$ to be the encryption scheme having the encryption algorithm
\begin{align*}
E_k(m) = P(m\cdot k) \cdot k,
\end{align*}
where $m\in G$ is the plaintext and $k\in G$ is the uniformly random key.

We define two problems for the Group Even-Mansour scheme: \textbf{Existential Forgery} (EFP) and \textbf{Cracking} (CP). In EFP, the adversary must eventually output a plaintext-ciphertext pair which satisfies correctness. In CP, the adversary is given a ciphertext and asked to find the corresponding plaintext.

It holds that for our Group EM scheme, the probability that an adversary succeeds in the EFP is polynomially bounded:
\begin{thm}\label{IntuitiveEFP}
\textbf{[Informal]} If $P$ is a uniformly random permutation on $G$ and $k\in G$ is chosen uniformly at random. Then, for any probabilistic adversary $\mathcal{A}$, the success probability of solving the EFP is negligible, specifically, bounded by
\begin{align*}
O\left( \frac{st}{|G|} \right),
\end{align*}
where $s$ and $t$ are the amount of encryption/decryption- and permutation/inverse permutation-oracle queries, respectively.
\end{thm}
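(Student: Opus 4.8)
The plan is to run a game-based argument in which we track the internal evaluation points of the random permutation $P$ and isolate a single ``bad'' collision event responsible for all non-trivial forgeries. Writing the forgery condition as $P(m^{*}\cdot k) = c^{*}\cdot k^{-1}$, I would first reformulate what a fresh pair $(m^{*},c^{*})$ must achieve. Each encryption query $E_k(m_i)=c_i$ reveals exactly the constraint $P(m_i\cdot k)=c_i\cdot k^{-1}$, each decryption query reveals a constraint of the same shape, and each of the $t$ permutation/inverse queries reveals a pair $P(x_l)=y_l$ with $x_l,y_l$ seen directly by $\mathcal{A}$. The key structural observation is that the two oracles touch $P$ at the two families of points $\{m_i\cdot k\}$ and $\{x_l\}$, and that $\mathcal{A}$ can steer the first family only through the secret shift $k$.

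Step two is to define the bad event $\mathsf{Bad}$ to be the occurrence of a collision $m_i\cdot k = x_l$ between an internal encryption point and a directly queried permutation point (together with its output-side analogue $c_i\cdot k^{-1}=y_l$ for queries issued in the inverse direction); note that such a collision forces $k = m_i^{-1}x_l$ and thereby leaks the key. I would then prove the central lemma that, conditioned on $\mathsf{Bad}$ not having occurred, the simulated answers of both oracles can be produced by lazily sampling $P$ on disjoint regions, so that the adversary's entire view is independent of $k$ and $k$ remains uniformly distributed. Because each freshly formed pair of points of the two types collides with probability $1/|G|$ once $k$ is uniform and independent, a union bound over the at most $st$ such pairs gives $\Pr[\mathsf{Bad}] = O(st/|G|)$.

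Step three is to bound the forgery probability conditioned on $\neg\mathsf{Bad}$. Since the view is independent of $k$, the output $(m^{*},c^{*})$ is effectively fixed before $k$ is revealed, and the forgery equation $P(m^{*}\cdot k)=c^{*}\cdot k^{-1}$ can hold only if either (i) $m^{*}\cdot k$ lands on one of the $t$ directly queried points $x_l$ with the matching output relation, which happens with probability $O(t/|G|)$ over the uniform $k$, or (ii) $m^{*}\cdot k$ is a fresh point, in which case $P$ there is uniform over its unused range and the equality holds with probability $O(1/(|G|-s-t))$. As $s\ge 1$, both terms are dominated by $st/|G|$, and combining with $\Pr[\mathsf{Bad}]$ yields the claimed bound.

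The part I expect to be most delicate is the conditioning lemma in Step two: making rigorous, in the presence of \emph{adaptive} and interleaved queries to the four oracle interfaces, the claim that $k$ stays uniform and independent of the view until $\mathsf{Bad}$ fires. The natural way to handle this is to present the argument as a sequence of identical-until-bad games with a lazily sampled $P$, and to verify that every response an oracle gives before $\mathsf{Bad}$ can be generated without consulting $k$. One must also carefully align the freshness requirement on the output pair, so that trivial ``forgeries'' obtained directly from the encryption/decryption oracle are excluded and do not contaminate the count.
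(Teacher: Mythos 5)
Your outline is correct: it is the canonical Even--Mansour collision argument, and the bound comes out as claimed. One thing you could not have known is that the paper does not actually prove this theorem at all: it explicitly omits the EFP proof, stating that Even and Mansour's original argument \citep{EM}, as adapted to the one-key setting in \citep{DKS}, goes through once the notions are rephrased for a possibly non-abelian group, and it records only the resulting statement (Theorem~\ref{MainEFP}). That said, your proposal coincides with the machinery the paper does spell out for the neighbouring super-pseudorandomness result (Theorem~\ref{PseudoEMbounded}): the paper's notion of a key $k$ being \emph{bad} w.r.t.\ the transcripts is precisely your event ``$m_i\cdot k = x_j$ or $c_i\cdot k^{-1}=y_j$''; its Games R, X, and R' implement your identical-until-bad lazy sampling, with the delicate conditioning step you flag handled in its appendices; and its count of at most $2st$ bad keys is your union bound. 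The only piece genuinely specific to the EFP --- your Step three, bounding the probability of a successful fresh forgery once the view is independent of $k$, split into the cases $m^{*}\cdot k = x_l$ (key pinned down, probability $O(t/|G|)$) and $m^{*}\cdot k$ fresh (probability $O(1/(|G|-s-t))$), with the trivial forgery $m^{*}=m_i$, $c^{*}=c_i$ excluded by freshness --- is correct and is exactly the part the paper leaves entirely to the cited source. So your attempt does not diverge from the paper; rather, it fills in the proof the paper chose to omit, using the same technique the paper applies elsewhere.
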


By a basic reduction, and for the latter, by an inference result, we also get that
\begin{thm}
\textbf{[Informal]} If $P$ is a super pseudorandom permutation on $G$ and $k\in G$ is chosen uniformly at random. For any probabilistic adversary $\mathcal{A}$, the success probability of solving the EFP is negligible.
\end{thm}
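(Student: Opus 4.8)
The plan is to prove this as a straightforward reduction to the random-permutation case already established in Theorem~\ref{IntuitiveEFP}. A super pseudorandom permutation is, by definition, indistinguishable from a uniformly random permutation for every efficient distinguisher that has both forward and inverse oracle access, so any noticeable gap in the adversary's forging success between the ``$P$ is an SPRP'' world and the ``$P$ is truly random'' world would itself yield such a distinguisher. Since Theorem~\ref{IntuitiveEFP} fixes the forging probability at $O(st/|G|)$ when $P$ is uniformly random, the same bound, up to a negligible additive term, must persist when $P$ is only an SPRP. Note that because the EFP adversary issues both encryption and decryption queries, the simulation below must invert $P$; this is exactly why the correct hypothesis here is super pseudorandomness rather than plain pseudorandomness.

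Concretely, suppose toward a contradiction that some probabilistic adversary $\mathcal{A}$ solves the EFP against $E_k$ built from the SPRP $P$ with non-negligible probability $\varepsilon$. I would build a distinguisher $\mathcal{D}$ that is handed oracle access to a permutation $\Pi$ and its inverse $\Pi^{-1}$, where $\Pi$ is either $P$ or a uniformly random permutation. The distinguisher samples a uniform key $k \in G$ and runs $\mathcal{A}$, answering each of $\mathcal{A}$'s four oracle types by routing it through its own two oracles: a permutation query on $x$ by $\Pi(x)$, an inverse query by $\Pi^{-1}$, an encryption query on $m$ by $\Pi(m\cdot k)\cdot k$, and a decryption query on $c$ by $\Pi^{-1}(c\cdot k^{-1})\cdot k^{-1}$. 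When $\mathcal{A}$ halts with a claimed forgery $(m^*,c^*)$, $\mathcal{D}$ verifies it by computing $\Pi(m^*\cdot k)\cdot k$ and outputs $1$ (``real'') exactly when the pair is a valid, fresh forgery, and $0$ otherwise.

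The analysis then splits on $\mathcal{D}$'s hidden oracle. When $\Pi = P$, the view presented to $\mathcal{A}$ is distributed exactly as in the real EFP game, so $\mathcal{D}$ outputs $1$ with probability $\varepsilon$. When $\Pi$ is uniformly random, the simulated game is precisely the random-permutation EFP of Theorem~\ref{IntuitiveEFP}, so $\mathcal{D}$ outputs $1$ with probability $O(st/|G|)$. Hence $\mathcal{D}$'s distinguishing advantage is $|\varepsilon - O(st/|G|)|$; were $\varepsilon$ non-negligible this advantage would be non-negligible, contradicting that $P$ is an SPRP. Therefore $\varepsilon$ is negligible.

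The step that needs the most care is the bookkeeping that makes the reduction both faithful and efficient. I must confirm that $\mathcal{D}$ runs in polynomial time and issues only polynomially many oracle queries (the final verification adds merely a constant number), and, more delicately, that the \emph{freshness} condition of the EFP --- that $(m^*,c^*)$ was not already obtained directly from an encryption or decryption query --- is tracked identically in the real game and in the simulation, so that $\mathcal{D}$'s acceptance predicate coincides exactly with $\mathcal{A}$'s success event in the $\Pi=P$ branch. Once that equivalence is pinned down, the quantitative conclusion follows immediately by combining Theorem~\ref{IntuitiveEFP} with the negligible SPRP advantage.
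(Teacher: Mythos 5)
Your proposal is correct and takes essentially the same approach as the paper: the paper justifies this statement only by remarking that the random-permutation bound (Theorem~\ref{MainEFP}) ``may be extended to instances where the permutation is a pseudorandom permutation by a simple reduction,'' and your distinguisher construction is exactly that reduction, spelled out with the right bookkeeping (simulating all four oracles via $\Pi,\Pi^{-1}$, verifying freshness, and noting that inverse-oracle access is what forces the super-pseudorandomness hypothesis).
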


\begin{cor}
\textbf{[Informal]} If $P$ is a super pseudorandom permutation on $G$ and $k\in G$ is chosen uniformly at random. For any \underline{polynomial-time} probabilistic adversary $\mathcal{A}$, the success probability of solving the CP is negligible.
\end{cor}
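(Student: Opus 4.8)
The plan is to reduce the Cracking Problem (CP) to the Existential Forgery Problem (EFP): I will show that any polynomial-time adversary solving the CP with non-negligible probability yields an adversary solving the EFP with essentially the same probability, so that the preceding Theorem --- EFP-success is negligible when $P$ is an SPRP --- immediately forces CP-success to be negligible. The intuition is simply that a cracker is already a forger: if one can recover the plaintext underlying a given ciphertext, one can exhibit a valid plaintext--ciphertext pair.

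Concretely, let $\mathcal{A}$ be a polynomial-time CP-adversary. I would build an EFP-adversary $\mathcal{B}$ as follows. Running inside the EFP game with oracle access to $E_k, E_k^{-1}, P, P^{-1}$, $\mathcal{B}$ samples a uniformly random challenge ciphertext $c^\ast \in G$ --- which, since $E_k$ is a bijection, is distributed exactly as the encryption of a uniform plaintext and is therefore a legitimate CP-challenge --- hands $c^\ast$ to $\mathcal{A}$, and faithfully relays each of $\mathcal{A}$'s oracle queries to its own oracles. When $\mathcal{A}$ halts with a candidate $m^\ast$, $\mathcal{B}$ outputs the pair $(m^\ast, c^\ast)$. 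Two properties must be checked. \emph{Correctness}: whenever $\mathcal{A}$ wins we have $E_k(m^\ast) = c^\ast$, so the pair satisfies the scheme's correctness relation. \emph{Freshness}: the CP-rules forbid $\mathcal{A}$ from ever querying $c^\ast$ to $E_k^{-1}$, so $c^\ast$ is fresh on the decryption side; and since $E_k$ is a bijection, the only way $m^\ast$ could already have appeared on the encryption side is if $\mathcal{A}$ had queried $m^\ast$ directly to $E_k$, an event whose probability is at most $q/|G|$ (with $q$ the number of encryption queries) and hence negligible for a polynomial-query $\mathcal{A}$. Thus $\mathcal{B}$ produces a genuine forgery whenever $\mathcal{A}$ cracks, up to a negligible additive loss, while making no more oracle queries than $\mathcal{A}$.

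Finally, because $\mathcal{A}$ runs in polynomial time, so does $\mathcal{B}$; in particular $\mathcal{B}$ is a legitimate (polynomial-query, polynomial-time) adversary to which the Theorem applies, giving that its EFP-success is negligible. Unwinding the reduction, $\mathcal{A}$'s CP-success is negligible as well, which is the claim. I expect the only delicate point to be the freshness bookkeeping in the middle step: one must argue that a successful cracker genuinely \emph{infers} $m^\ast$ rather than stumbling onto it through ordinary encryption queries, and must confirm that the simulated uniform challenge $c^\ast$ is distributed identically to the real CP-challenge so that no advantage is lost in the simulation. It is precisely here that the polynomial-time hypothesis is essential --- it bounds the number of encryption queries, controlling both the collision probability above and the query budget needed to invoke the Theorem --- which is why the Corollary, unlike the EFP Theorem it rests on, is stated for polynomial-time adversaries only.
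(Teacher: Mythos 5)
Your overall strategy---reduce the Cracking Problem to the Existential Forgery Problem and then invoke the preceding theorem---is exactly the route the paper takes; the paper, however, does not reprove that reduction but simply cites Even and Mansour's inference result (``polynomial-time EFP security infers polynomial-time CP security'') and observes that nothing obstructs it over an arbitrary group. The genuine gap is in your reconstruction of that inference, specifically the freshness step. You claim that the only way the pair $(m^\ast,c^\ast)$ can be stale is if $\mathcal{A}$ queried $m^\ast$ to $E_k$, ``an event whose probability is at most $q/|G|$.'' That bound is false for precisely the adversaries that matter. The CP game only neuters $D_k$-queries on $c_0$; it places no restriction whatsoever on $E_k$-queries, so a perfectly legitimate cracker may locate or confirm its answer through the encryption oracle---for instance, a cracker that computes a candidate $m$, queries $E_k(m)$, and outputs $m$ only after seeing $E_k(m)=c^\ast$, or one that searches a polynomial list of candidates by encrypting each. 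For such an adversary the event ``$\mathcal{A}$ queried $m^\ast$ to $E_k$'' occurs with probability equal to its cracking probability, not $q/|G|$: the element $m^\ast$ is not independent of $\mathcal{A}$'s view, since the $E_k$-oracle itself reveals it. Your $\mathcal{B}$ then outputs a previously-queried pair and never forges, so the reduction as written proves nothing against the natural class of verifying or searching crackers.

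The standard repair---and the real reason the corollary needs the polynomial bound---is a guess-the-query argument: $\mathcal{B}$ picks $j\in_R \lbrace 1,\ldots,q+1 \rbrace$; if $j\leq q$ it runs $\mathcal{A}$ up to its $j$-th $E_k$-query $m_j$ and outputs $(m_j,c^\ast)$ \emph{without making that query}; if $j=q+1$ it runs $\mathcal{A}$ to completion and outputs the pair formed from $\mathcal{A}$'s final answer and $c^\ast$. Whenever $\mathcal{A}$ cracks, either it never queried $m^\ast$ to $E_k$ (then the case $j=q+1$ wins) or there is a first such query (then the corresponding value of $j$ wins, and the pair is fresh because that query was intercepted). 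Hence $\mathcal{B}$ forges with probability at least $Succ(\mathcal{A})/(q+1)$, and since $q$ is polynomial, negligible forging probability forces negligible cracking probability. This is where the polynomial-time hypothesis genuinely enters: it makes the $1/(q+1)$ loss harmless and keeps $\mathcal{B}$ within the polynomially many queries required by the EFP theorem---not, as you suggest, because it controls a $q/|G|$ collision term. One further small repair: $\mathcal{B}$ should answer $\mathcal{A}$'s $D_k$-queries on $c^\ast$ with $\perp$ itself rather than relaying them, so that $\mathcal{A}$'s view is exactly that of the CP game.
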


With the same bound as in Theorem~\ref{IntuitiveEFP}, we find that

\begin{thm}
\textbf{[Informal]} For any probabilistic adversary $\mathcal{A}$, limited to polynomially many encryption- and decryption-oracle queries and polynomially many permutation- and inverse permutation-oracle queries, the Group EM scheme over a group $G$ is a super pseudorandom permutation.
\end{thm}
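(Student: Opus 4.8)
The plan is to work entirely in the information-theoretic setting, where the public permutation $P$ is a uniformly random permutation of $G$ (the random oracle), and to bound the advantage of a distinguisher $\mathcal{A}$ having oracle access to four oracles: the encryption oracle, the decryption oracle, the permutation oracle $P$, and the inverse permutation oracle $P^{-1}$. I would compare two worlds. In the \emph{real world}, the first two oracles are $E_k(\cdot)=P(\,\cdot\,k)\cdot k$ and its inverse, computed from the same random $P$ that answers the direct queries and from a uniformly random key $k$. In the \emph{ideal world}, the first two oracles are answered by an independent uniformly random permutation $\pi$ and its inverse, while $P,P^{-1}$ still answer the direct queries. Without loss of generality $\mathcal{A}$ is deterministic and never makes a redundant query, and its advantage equals the statistical distance between the two induced distributions on transcripts, so it suffices to bound that distance by $O(st/|G|)$.

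I would record the interaction as a transcript $\tau$ consisting of the $t$ direct pairs $(x_i,y_i)$ with $y_i=P(x_i)$ (an inverse query at $v$ simply contributes the pair $(P^{-1}(v),v)$) together with the $s$ cipher pairs $(m_j,c_j)$, where a decryption query at $c$ contributes the pair $(E_k^{-1}(c),c)$; I would then augment $\tau$ by revealing the key $k$ at the end of the interaction, sampling a fresh uniform $k$ in the ideal world, where it was never used. The guiding intuition is that an encryption query secretly evaluates $P$ at the shifted point $m_j\cdot k$ and returns $P(m_j\cdot k)\cdot k$, while a decryption query secretly evaluates $P^{-1}$ at $c_j\cdot k^{-1}$; as long as none of these hidden evaluation points ever coincides with a point touched by a direct $P$/$P^{-1}$ query, the cipher oracle is indistinguishable from a fresh random permutation. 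Accordingly I call $\tau$ \emph{bad} if there exist indices with $m_j\cdot k = x_i$ (an input collision) or with $c_j\cdot k^{-1} = y_i$ (an output collision), and \emph{good} otherwise.

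The argument then splits into the two standard estimates of the H-coefficient method. First, in the ideal world the recorded query--answer pairs are produced independently of $k$, so I may fix them and then sample $k$ uniformly; for each of the at most $st$ candidate input collisions the event $m_j\cdot k=x_i$ is exactly $k=m_j^{-1}x_i$, of probability $1/|G|$, and symmetrically each output collision $c_j\cdot k^{-1}=y_i$ forces $k=y_i^{-1}c_j$, again of probability $1/|G|$; a union bound gives $\Pr[\tau\text{ bad}]\le 2st/|G|=O(st/|G|)$. Second, for a good transcript all $t+s$ revealed input points and all $t+s$ revealed output points are distinct, so in the real world the number of permutations consistent with $\tau$ is $(|G|-t-s)!$, giving
\begin{align*}
\Pr[\text{real}=\tau] = \frac{1}{|G|}\cdot\frac{(|G|-t-s)!}{|G|!},
\end{align*}
whereas in the ideal world the $s$ cipher constraints and the $t$ direct constraints are imposed on the two independent permutations separately, giving
\begin{align*}
\Pr[\text{ideal}=\tau] = \frac{1}{|G|}\cdot\frac{(|G|-s)!}{|G|!}\cdot\frac{(|G|-t)!}{|G|!}.
\end{align*}
Their ratio equals $\frac{|G|!\,(|G|-t-s)!}{(|G|-s)!\,(|G|-t)!}\ge 1$, since $(|G|-t)!/(|G|-t-s)!$ is a product of $s$ factors each at most the corresponding factor of $|G|!/(|G|-s)!$. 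Hence good transcripts are at least as likely in the real world, and the H-coefficient lemma yields advantage $\le 0 + 2st/|G| = O(st/|G|)$.

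The main obstacle I expect is justifying the independence of the recorded transcript from $k$ in the ideal world against an \emph{adaptive} distinguisher, which is what makes the clean union bound legitimate: one must argue that because $k$ plays no role in answering any query in the ideal world, the deterministic adversary produces a transcript that is a fixed function of $(\pi,P)$ alone, so that appending a freshly sampled $k$ cannot be correlated with it. The symmetric treatment of the two-sided (super) attack is the other point demanding care: decryption queries and inverse-permutation queries must be folded into the same transcript as input/output pairs so that the collision analysis covers both the $m_j\cdot k=x_i$ and the $c_j\cdot k^{-1}=y_i$ events, which is exactly what upgrades the Kilian--Rogaway pseudorandomness argument to super pseudorandomness while only doubling the constant. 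A final routine check is that the falling-factorial counting remains valid over a general finite group $G$: the shift maps $g\mapsto g\cdot k$ and $g\mapsto g\cdot k^{-1}$ are bijections, so distinctness of the $m_j$ (resp.\ $c_j$) transfers to distinctness of the hidden points, and nothing in the counting uses commutativity.
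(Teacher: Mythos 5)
Your proposal is correct, and it reaches the paper's bound by a genuinely different route. The paper follows the Kilian--Rogaway game-playing template: it defines a lazily-sampled random game (\textbf{Game R}) and a consistency-corrected real game (\textbf{Game X}), argues that the two are identical until a \emph{bad}-key event occurs so that $\text{Adv}(\mathcal{A}) \leq Pr_R\left[ BAD \right]$, and then passes to \textbf{Game R'}, where the key is drawn only after all queries are answered, to get $Pr_R\left[ BAD \right] = Pr_{R'}\left[ BAD \right] \leq 2st/|G|$; the two delicate equivalences (that Game X really is the real world, and that the BAD probabilities of R and R' agree) each require an appendix of case analysis. You instead run Patarin's H-coefficient method: you fold decryption and inverse-permutation queries into one transcript, reveal the key at the end (a fresh dummy key in the ideal world --- which is the paper's Game R' idea in disguise), declare bad exactly the paper's two collision events $m_j \cdot k = x_i$ and $c_j \cdot k^{-1} = y_i$, bound the ideal-world bad probability by the same union bound over at most $2st$ key values, and replace the identical-until-bad hops by exact counting: for good transcripts the real-world probability $\frac{1}{|G|}\cdot\frac{(|G|-t-s)!}{|G|!}$ dominates the ideal-world probability $\frac{1}{|G|}\cdot\frac{(|G|-s)!}{|G|!}\cdot\frac{(|G|-t)!}{|G|!}$, since each of the $s$ factors $|G|-i$ in the relevant falling factorial dominates the corresponding factor $|G|-t-i$; and, as you note, only bijectivity of translation, never commutativity, enters. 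What each approach buys: the paper's proof is self-contained and stays structurally close to the bit-string proof it generalizes, at the price of two appendix-length game equivalences; yours is shorter and more modular, trading those for the transcript-probability computations and an appeal to (or a proof of) the H-coefficient lemma, and it makes the real-versus-ideal comparison quantitative (a ratio bounded below by $1$) rather than merely event-based. Your flagged concern about adaptivity is handled by the standard H-coefficient framework: for a deterministic adversary the transcript is a function of the oracle randomness alone, so appending an independently sampled key in the ideal world is legitimate, and the probability of any attainable transcript factors through oracle compatibility exactly as your two displayed formulas assert.
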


We then apply a Slide Attack, to find an attack which matches the bound given above.

Considering the \textbf{Group Feistel cipher}, whose encryption algorithm consists of multiple uses of the round function
\begin{align*}
\mathcal{F}_f(x,y) = (y,x\cdot f(y)),
\end{align*}
where $f$ is a pseudorandom function on $G$, we show that the \textbf{$3$-round Feistel cipher} is pseudorandom but is not super pseudorandom, regardless of the underlying group $G$. We then note that the $4$-round Feistel cipher is super pseudorandom as proven in \citep{PatelRamzanSundaram}.

Finally, we consider the \textbf{Group Even-Mansour scheme instantiated using a $4$-round Feistel cipher} over $G^2 = G\times G$, which uses the encryption algorithm
\begin{align*}
\Psi_k^{f,g}(m) = \mathcal{F}_{g,f,f,g}(m \cdot k)\cdot k,
\end{align*}
where $f$ and $g$ are modelled as random functions, $m\in G^2$ the plaintext, and $k \in G^2$ is a uniformly random key. We then show one of our main results:

\begin{thm}
\textbf{[Informal]} For any probabilistic $4$-oracle adversary $\mathcal{A}$ with at most
\begin{itemize}
\item $q_c$ queries to the $\Psi$- and inverse $\Psi$-oracles (or random oracles),
\item $q_f$ queries to the $f$-oracle, and
\item $q_g$ queries to the $g$-oracle,
\end{itemize}
we have that the success probability of $\mathcal{A}$ distinguishing between $\Psi$ and a random oracle, is bounded by
\begin{align*}
 (2q_c^2 +4q_fq_c + 4q_gq_c + q_c^2 - q_c)|G|^{-1} + 2\cdot \begin{pmatrix}
	q_c \\ 2
	\end{pmatrix}(2|G|^{-1} + |G|^{-2}).
\end{align*}
\end{thm}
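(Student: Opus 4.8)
The plan is to prove indistinguishability by a hybrid (game-playing) argument interpolating between the real world, in which the adversary interacts with $\Psi_k^{f,g}$ for uniformly random functions $f,g$ and key $k$ while also querying $f$ and $g$ directly, and the ideal world, in which the $\Psi$-oracle is replaced by an independent uniformly random permutation on $G^2$ (with $f,g$ still answered at random). Throughout, I would lazily sample $f$ and $g$, storing their input-output pairs in two tables and answering both the adversary's direct queries and the internal Feistel evaluations from these same tables, so that consistency is automatic. Each call to the $\Psi$-oracle on input $m$ is expanded into its four round evaluations: writing $m\cdot k = (x_0,x_1)$, the network evaluates $g$ at $x_1$ and $x_4$ and $f$ at $x_2$ and $x_3$, where $x_2 = x_0\cdot g(x_1)$, $x_3 = x_1\cdot f(x_2)$, $x_4 = x_2\cdot f(x_3)$, and it returns $(x_4,x_5)\cdot k$ with $x_5 = x_3\cdot g(x_4)$. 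This is the natural group generalization of the Gentry--Ramzan instantiation.

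Next I would isolate the \emph{bad events} whose absence forces the two worlds to coincide. These fall into two families. The first is the family of Feistel-internal collisions: an internal round input ($x_1,x_2,x_3,$ or $x_4$) of one $\Psi$-query coinciding with the corresponding round input of another $\Psi$-query, or with one of the adversary's direct $f$- or $g$-queries. Since $f$ is used in two rounds and $g$ in two rounds, a union bound over these opportunities yields the $4q_fq_c$ and $4q_gq_c$ contributions together with the $q_c^2$-type terms. The second family consists of the Even-Mansour key-induced collisions: two distinct $\Psi$-queries $m\neq m'$ for which the masked inputs $m\cdot k$ and $m'\cdot k$, or the masked outputs, collide in a way impossible for a genuine permutation; this is exactly the source of the $2\binom{q_c}{2}(2|G|^{-1}+|G|^{-2})$ term, inherited from the Even-Mansour analysis underlying Theorem~\ref{IntuitiveEFP}.

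The core of the argument is to show that, conditioned on no bad event, every fresh internal function value is uniform over $G$ and independent of the adversary's view, so that each $\Psi$-output is distributed exactly as a fresh uniform pair, indistinguishable from the random permutation of the ideal world. This is the standard round-by-round freshness analysis for the Feistel network: for instance $x_2 = x_0\cdot g(x_1)$ is uniform and previously unqueried whenever $g(x_1)$ is a freshly sampled uniform element, which then propagates to $x_3$ and $x_4$. Because each such wire value lands among a bounded collection of already-defined points and is uniform over $G$, each individual collision has probability at most $|G|^{-1}$, and summing over all queries and rounds -- with $f$ and $g$ each contributing two rounds -- produces the stated bound after collecting terms.

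I expect the main obstacle to be the interaction between the two families of bad events, that is, cleanly composing the Feistel-internal freshness argument with the outer Even-Mansour key-masking without double counting. One must argue that the key $k$ randomizes the Feistel inputs so that the internal wire values are fresh, \emph{and} that the outer masking by $k$ independently hides the correspondence between plaintext-ciphertext pairs; keeping these two randomization sources from interfering -- and verifying that the conditioning used in one does not spoil the uniformity assumed in the other -- is where the delicate bookkeeping lies, and is what ultimately fixes the precise constants in the bound.
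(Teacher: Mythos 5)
Your overall plan---lazy sampling, a hybrid between the real and ideal worlds, and a union bound over bad events---is the same general strategy as the paper's proof (which follows Gentry--Ramzan), but your decomposition of the bad events contains a genuine error, visible in where you claim the term $2\binom{q_c}{2}\left(2|G|^{-1}+|G|^{-2}\right)$ comes from. Your ``second family'' of key-induced collisions cannot occur as stated: right-multiplication by $k$ is a bijection of $G^2$, so two distinct $\Psi$-queries $m\neq m'$ always have distinct masked inputs $m\cdot k\neq m'\cdot k$, and in the real world $\Psi$ is a genuine permutation, so masked outputs never collide either. In the paper this term has an entirely different origin: its $|G|^{-1}$ portion comes from events \textbf{B1}/\textbf{B2}, namely collisions between the middle-round $f$-inputs $X_i = x_i^L\cdot k^L\cdot g(x_i^R\cdot k^R)$ (resp.\ $Y_i$) of two distinct cipher queries---which belong to your \emph{first} family of Feistel-internal collisions---while its $|G|^{-2}$ portion is the birthday term for distinguishing a process that answers cipher queries with fresh uniform elements of $G^2$ from a genuine permutation (Lemma~\ref{Lem37}), a term in which the key $k$ plays no role whatsoever. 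Nothing in the bound is ``inherited from the Even-Mansour analysis underlying Theorem~\ref{IntuitiveEFP}''; the key's only function in this proof is to randomize the first- and fourth-round inputs so that the various collision events each have probability about $|G|^{-1}$.

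The obstacle you name in your last paragraph---composing the internal freshness argument with the adversary's direct access to $f$ and $g$ without circular conditioning---is precisely the part your proposal leaves unsolved, and it is where the paper does its real work. The paper resolves it with two concrete intermediate processes: $\tilde{\Psi}$, identical to the real world except that the adversary's \emph{direct} $g$-oracle is replaced by an independent random function $h$, so that the two roles of $g$ are decoupled (Lemma~\ref{NotBadPsitoBarPsi} shows $T_\Psi$ and $T_{\tilde{\Psi}}$ are identically distributed unless a first- or fourth-round input collides with a direct $g$-query, events \textbf{BG1}/\textbf{BG2}); and $\tilde{R}$, which answers cipher queries with fresh uniform values (Lemma~\ref{Lem43} shows $T_{\tilde{\Psi}}$ and $T_{\tilde{R}}$ agree unless some two inputs to $f$ coincide, events \textbf{B1}--\textbf{B5}). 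The stated bound then falls out of a transcript-by-transcript triangle-inequality computation along the chain $\Psi \rightarrow \tilde{\Psi} \rightarrow \tilde{R} \rightarrow R$, with each adjacent pair controlled by exactly one conditioned bad-event lemma. Without an analogue of these intermediate games, or some other explicit device performing the decoupling, ``keeping the two randomization sources from interfering'' is not delicate bookkeeping but the missing core of the argument, and the constants you hope to ``collect'' at the end cannot be assembled into the stated bound.
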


We may also rewrite our main theorem as the following:
\begin{thm}
\textbf{[Informal]} For any $4$-oracle adversary $\mathcal{A}$, with at most $q$ total queries, we have that the success probability of $\mathcal{A}$ distinguishing between $\Psi$ and a random oracle, is bounded by
\begin{align*}
2(3q^2-q)|G|^{-1} + (q^2-q)|G|^{-2}.
\end{align*}
\end{thm}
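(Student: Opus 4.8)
The plan is to obtain this statement as a purely arithmetic corollary of the preceding (refined) theorem: no new probabilistic analysis is required, only a substitution of the total query budget into the bound already proved. Writing $q$ for the total number of queries, the three per-oracle budgets satisfy $q_c + q_f + q_g \le q$, and in particular $q_c, q_f, q_g \le q$. The whole proof amounts to simplifying the refined bound into a sum of two terms, $(\cdot)|G|^{-1} + (\cdot)|G|^{-2}$, and then replacing each coefficient by an upper bound expressed solely in terms of $q$.

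First I would simplify. Since $2\binom{q_c}{2} = q_c^2 - q_c$, the summand $2\binom{q_c}{2}\bigl(2|G|^{-1} + |G|^{-2}\bigr)$ contributes $2(q_c^2 - q_c)$ to the $|G|^{-1}$ coefficient and $(q_c^2 - q_c)$ to the $|G|^{-2}$ coefficient. Collecting all $|G|^{-1}$ contributions gives
\[
2q_c^2 + 4q_f q_c + 4 q_g q_c + q_c^2 - q_c + 2(q_c^2 - q_c) = 5q_c^2 + 4q_c(q_f + q_g) - 3q_c,
\]
while the $|G|^{-2}$ coefficient is exactly $q_c^2 - q_c$. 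The $|G|^{-2}$ part is then handled immediately: since $q_c \le q$ and $t \mapsto t^2 - t$ is nondecreasing on the nonnegative integers, $q_c^2 - q_c \le q^2 - q$, which matches the claimed $(q^2 - q)|G|^{-2}$.

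The only real obstacle is the $|G|^{-1}$ coefficient $5q_c^2 + 4q_c(q_f + q_g) - 3q_c$, because the cross terms $q_f q_c$ and $q_g q_c$ cannot be controlled factor-by-factor: bounding $q_c, q_f, q_g$ each by $q$ separately is too lossy and only yields $13q^2$. The key point is that a query spent on $f$ or $g$ is a query denied to the cipher, i.e.\ one must use the joint constraint. Concretely I would substitute $q_f + q_g \le q - q_c$ to reduce the coefficient to the single-variable polynomial $5q_c^2 + 4q_c(q - q_c) - 3q_c = q_c^2 + 4q\,q_c - 3q_c$, which is increasing in $q_c$ on $[0,q]$ and hence maximized at $q_c = q$, where it equals $5q^2 - 3q$. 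Since $5q^2 - 3q \le 6q^2 - 2q$ (equivalently $q^2 + q \ge 0$), the coefficient is at most $6q^2 - 2q = 2(3q^2 - q)$. Assembling the two coefficient bounds yields exactly $2(3q^2 - q)|G|^{-1} + (q^2 - q)|G|^{-2}$, completing the proof.
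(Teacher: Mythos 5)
Your proposal is correct and takes essentially the same route as the paper: both treat the statement as a purely arithmetic corollary of the refined theorem, the one essential point being to exploit the joint budget $q_c+q_f+q_g\le q$ rather than bounding $q_c,q_f,q_g$ by $q$ separately. The only difference is bookkeeping: the paper absorbs the cross terms by padding $2(q_c^2+q_fq_c+q_gq_c)-q_c$ with nonnegative terms to complete it to $2(q_c+q_f+q_g)^2-(q_c+q_f+q_g)=2q^2-q$ and then handles $2\binom{q_c}{2}=q_c^2-q_c\le q^2-q$ separately, whereas you substitute $q_f+q_g\le q-q_c$ into the collected $|G|^{-1}$ coefficient and maximize the resulting quadratic at $q_c=q$, which lands exactly on the stated constant $2(3q^2-q)$ (and in fact yields the slightly sharper $5q^2-3q$ along the way).
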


We note that this main result is due to \citep{GentryRamzan}, however, we consider a one-key group version and add details to their proof sketches.

Using Zhandry \citep{Zhandry}, we generalize the notion of a \textbf{Function to Permutation Converter}:

\begin{defn}
\textbf{[Informal]} A Function to Permutation Converter (FPC) is a pair of efficient classical oracle algorithms $(F,R)$ where $F$ and $R$ are each others inverses, making at most polynomially many queries to an inner oracle, such that they are indistiguishable from a random permutation and its inverse.
\end{defn}

The FPCs are then used to show that the existence of quantum one-way functions implies the existence of quantum pseudorandom permutations. In order to show this, we make use of articles \citep{HMR, MR14} to show that certain FPCs exist for groups, by ordering the group elements. For the sake of generality, we show that the notion of a \textbf{Swap-or-Not shuffle} on a deck of cards of \citep{HMR} may be generalized to hold for arbitrary groups by defining a \textbf{Scoot-or-Not (SC) shuffle}. For the SC shuffle, we show that the distinguishing probability of an adversary making classical $q$ queries to an $r$ round implementation of the shuffle on $G$, is bounded by
\begin{align*}
\frac{8|G|^{3/2}}{r+4}\left( \frac{q+|G|}{2|G|}\right)^{r/4+1}.
\end{align*}

However, we are not able to generalize the Sometimes-Recurse shuffle of \citep{MR14} per group theoretic considerations.

\newpage
\section{Assumptions and General Definitions}\label{GenDefns}
Recall the following definition of a group, taken from \citep{DumbFoot}.

\begin{defn}\label{GroupDefn}
A \textbf{group} is an ordered pair $(G,\cdot)$ where $G$ is a set and $\cdot$ is a binary operation on $G$ satisfying the following axioms:
\begin{enumerate}
\item $(a\cdot b)\cdot c = a \cdot (b \cdot c)$, for all $a,b,c\in G$, i.e. $\cdot$ is \textbf{associative}.
\item there exists a unique \textbf{identity} element $1\in G$, such that for all $a\in G$, we have $a\cdot 1 = 1\cdot a = a$.
\item for each $a\in G$ there exists a unique \textbf{inverse} element $a^{-1}\in G$ such that $a\cdot a^{-1} = a^{-1} \cdot a = 1$.
\end{enumerate}
The group is called \textbf{abelian} (or \textbf{commutative}) if $a \cdot b = b\cdot a$ for all $a,b\in G$.
\end{defn}

In the following, we work in the Random Oracle Model such that we may assume the existence of a random permutation oracle on group elements. We let $\mathcal{G}$ be the family of all finite groups, e.g. a group $G\in\mathcal{G}$ is a pair of the set $G$ and operation $\cdot$ satisfying the group axioms. We also assume that for any group $G\in \mathcal{G}$, $|G|\leq 2^{poly(n)}$ for some $n\in \mathbb{N}$ and some polynomial $poly(\cdot)$.

We will need the concept of pseudorandom, which is also called indistinguishable from random, in several forms. On notation, we write $x\in_R X$ for an element chosen uniformly at random from a set $X$. In the following, we consider the positive integer $\lambda$ to be the security parameter, specified in unary per convention. We assume that for each $\lambda$ there exists a uniquely specified group $G(\lambda) = G_\lambda \in \mathcal{G}$ with size $|G_\lambda| \geq 2^\lambda$.

\begin{defn}
For an adversary $\mathcal{A}$, we write
\begin{align*}
\underset{x \in_R X}{Pr}\left[ \mathcal{A}^{O_x}(\lambda)=1 \right]
\end{align*}
to denote the probability that the adversary outputs $1$ as a function of $\lambda$ in a standard distinguishability experiment, when given access to the oracle $O_x$ which relies on the randomness of $x$.
\end{defn}

\begin{defn}
Let $F_{m,n}:G_\lambda \times G_m \rightarrow G_n$, for $G_m,G_n \in \mathcal{G}$, be an efficient, keyed function. $F_{m,n}$ is a \textbf{pseudorandom function (PRF)} if for all probabilistic distinguishers $\mathcal{A}$, limited to only polynomially many queries to the function-oracle, there exists a negligible function $negl(\cdot)$, such that
\begin{align*}
\left| \underset{k \in_R G_\lambda}{Pr}\left[ \mathcal{A}^{F_{m,n}(k,\cdot)}(\lambda)=1 \right] - \underset{\pi \in_R \mathfrak{F}_{G_m\rightarrow G_n}}{Pr}\left[ \mathcal{A}^{\pi(\cdot)}(\lambda)=1 \right] \right| \leq negl(\lambda),
\end{align*}
where $\mathfrak{F}_{G_m\rightarrow G_n}$ is the set of functions from $G_m$ to $G_n$.
\end{defn}

If $F:G \times G \rightarrow G$ is a pseudorandom function, we say that it is a \textbf{pseudorandom function on $G$}.

\begin{defn}
Let $P:G_\lambda \times G \rightarrow G$ be an efficient, keyed permutation. $P$ is a \textbf{pseudorandom permutation (PRP)} if for all probabilistic distinguishers $\mathcal{A}$, limited to only polynomially many queries to the permutation-oracle, there exists a negligible function $negl(\cdot)$, such that
\begin{align*}
\left| \underset{k \in_R G_\lambda}{Pr}\left[ \mathcal{A}^{P(k,\cdot)}(\lambda)=1 \right] - \underset{\pi \in_R \mathfrak{P}_{G\rightarrow G}}{Pr}\left[ \mathcal{A}^{\pi(\cdot)}(\lambda)=1 \right] \right| \leq negl(\lambda),
\end{align*}
where $\mathfrak{P}_{G\rightarrow G}$ is the set of permutations on $G$.
\end{defn}

\begin{defn}
Let $P:G_\lambda \times G \rightarrow G$ be an efficient, keyed permutation. $P$ is said to be a \textbf{super pseudorandom permutation (SPRP)} if for all probabilistic distinguishers $\mathcal{A}$, limited to only polynomially many queries to the permutation- and inverse permutation-oracles, there exists a negligible function $negl(\cdot)$, such that
\begin{align*}
\left| \underset{k \in_R G_\lambda}{Pr}\left[ \mathcal{A}^{P(k,\cdot),P^{-1}(k,\cdot)}(\lambda)=1 \right] - \underset{\pi \in_R \mathfrak{P}_{G\rightarrow G}}{Pr}\left[ \mathcal{A}^{\pi(\cdot),\pi^{-1}(\cdot)}(\lambda)=1 \right] \right| \leq negl(\lambda),
\end{align*}
where $\mathfrak{P}_{G\rightarrow G}$ is the set of permutations on $G$.
\end{defn}

A (super) pseudorandom permutation $P:G \times G \rightarrow G$ is said to be a \textbf{(super) pseudorandom permutation on $G$}.

\newpage
\section{Even-Mansour}\label{Even-MansourSection}
We first remark that the results in this section were initially proven in a project prior to the start of the thesis but were further worked on to complement this thesis. Thus we have chosen to include parts of it, while this inclusion accounts for the brevity in certain results. We begin by defining the one-key Even-Mansour scheme over arbitrary groups, which we will refer to as the Group EM scheme.

\begin{defn}
We define the \textbf{Group Even-Mansour scheme} to be the triple of a key generation algorithm, encryption algorithm, and decryption algorithm. The key generation algorithm takes as input the security parameter $1^\lambda$, fixes and outputs a group $G\in_R \mathcal{G}$ with $|G|\geq 2^\lambda$, and outputs a key $k\in_R G$. The encryption algorithm $E_k(m)$ takes as input the key $k$ and a plaintext $m\in G$ and outputs
\begin{align*}
E_k(m) = P(m\cdot k) \cdot k,
\end{align*}
where $P$ is the public permutation. The decryption algorithm $D_k(c)$ takes as input the key $k$ and a ciphertext $c\in G$ and outputs
\begin{align*}
D_k(c) = P^{-1}(c \cdot k^{-1}) \cdot k^{-1},
\end{align*}
where $P^{-1}$ is the inverse public permutation.
\end{defn}
This definition satisfies \textbf{correctness} as:
\begin{align*}
D_k(E_k(m)) &= P^{-1}(P(m \cdot k) \cdot k \cdot k^{-1} ) \cdot k^{-1} \\
			&= P^{-1}(P(m \cdot k)) \cdot k^{-1} \\
			&= m \cdot k \cdot k^{-1} \\
			&= m,
\end{align*}
and even satisfies $E_k(D_k(c))=c$, i.e. it is a permutation.

\subsection{Two Forms of Security for the Group EM Scheme}
In this subsection, we prove classical results about our new scheme. We do so by considering Even and Mansour's two notions of security: the Existential Forgery Problem and the Cracking Problem, the Cracking Problem being the stronger of the two.

\begin{defn}
In the \textbf{Existential Forgery Problem} (EFP), we consider the following game:
\begin{enumerate}
\item A group $G\in \mathcal{G}$ and a key $k\in_R G$ are generated.
\item The adversary $\mathcal{A}$ gets the security parameter, in unary, and the group $G$.
\item $\mathcal{A}$ receives oracle access to the $E_k, D_k, P,$ and $P^{-1}$ oracles.
\item $\mathcal{A}$ eventually outputs a pair $(m,c)$.
\end{enumerate}
If $E_k(m)=c$, and $(m,c)$ has not been queried before, we say that $\mathcal{A}$ succeeds.

In the \textbf{Cracking Problem} (CP), we consider the following game:
\begin{enumerate}
\item A group $G\in \mathcal{G}$ and a key $k\in_R G$ are generated.
\item The adversary $\mathcal{A}$ gets the security parameter, in unary, and the group $G$.
\item $\mathcal{A}$ is presented with $E_k(m_0)=c_0\in_R G$.
\item $\mathcal{A}$ receives oracle access to the $E_k, D_k, P,$ and $P^{-1}$ oracles, but the decryption oracle outputs $\perp$ if $\mathcal{A}$ queries $c=c_0$.
\item $\mathcal{A}$ outputs a plaintext $m$.
\end{enumerate}
If $D_k(c_0)=m$, then we say that $\mathcal{A}$ succeeds. The \textbf{success probability} is the probability that on a uniformly random chosen encryption $c_0 = E_k(m_0)$, $\mathcal{A}$ outputs $m_0$.
\end{defn}

Furthermore, Even and Mansour show that polynomial-time EFP security infers poly\-nomial-time CP security. There are no limiting factors prohibiting the problems and inference result from being employed on groups. In fact, there is nothing disallowing the use of the same proof of the EFP security for the EFP security of the one-key EM scheme, as noted in \citep{DKS}, which we therefore omit. Indeed, by redefining notions in the \citep{EM} proof to take into account that we are working over a not necessarily abelian group, we are able to prove that the Group EM scheme satisfies the EFP notion of security, specifically the following.

\begin{thm}\label{MainEFP}
Assume $P\in_R\mathfrak{P}_{G\rightarrow G}$ and let the key $k\in_R G$. For any probabilistic adversary $\mathcal{A}$, the success probability of solving the EFP is bounded by
\begin{align*}
Succ(\mathcal{A}) = Pr_{k,P}\left[ EFP(\mathcal{A})=1\right] = O\left( \frac{st}{|G|} \right),
\end{align*}
where $s$ is the number of $E/D$-queries and $t$ is the number of $P/P^{-1}$-queries, i.e. the success probability is negligible.
\end{thm}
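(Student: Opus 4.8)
The plan is to bound the adversary's success probability in the EFP by showing that, over the random choice of $P$ and $k$, a forgery $(m,c)$ with $E_k(m)=c$ essentially requires the adversary to have ``hit'' the hidden key $k$ through the alignment of an $E/D$-query with a $P/P^{-1}$-query. First I would fix the adversary's transcript of queries: let the $E/D$-oracle queries induce $s$ pairs of the form $(m_i, E_k(m_i))$ and the $P/P^{-1}$-oracle queries induce $t$ pairs $(u_j, P(u_j))$. The key observation is that a valid encryption relation $E_k(m)=P(m\cdot k)\cdot k$ links a plaintext $m$ to the point $m\cdot k$ inside the permutation; so the adversary can only ``learn'' useful information about $P$ at the point $m\cdot k$ if some $P$-query $u_j$ coincides with $m_i\cdot k$ for one of its $E/D$-queries, i.e.\ if $k = m_i^{-1}\cdot u_j$ for some $i,j$. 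Since $k$ is uniform and independent of the (at most $st$) such candidate values $m_i^{-1}\cdot u_j$ that the adversary can form, the probability that any collision of this form occurs is at most $st/|G|$.

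The central step is then a conditioning argument: conditioned on the event that no $E/D$-query aligns with any $P/P^{-1}$-query under the true key (the ``bad event'' above), the value $E_k(m)$ for any fresh $m$ is distributed as a uniformly random element of $G$ from the adversary's viewpoint, because $P(m\cdot k)$ is evaluated at a point the adversary has never queried and $P$ is a uniformly random permutation. Hence, for the final output pair $(m,c)$ to satisfy $E_k(m)=c$ without a prior alignment, the adversary must guess this fresh random value, which succeeds with probability $O(1/|G|)$ (accounting for the permutation's slightly non-uniform conditional distribution over unqueried points, which differs from uniform by at most a factor involving $|G|$ minus the number of queries, negligible here). Combining the two contributions via a union bound yields
\begin{align*}
Succ(\mathcal{A}) \leq \frac{st}{|G|} + O\!\left(\frac{1}{|G|}\right) = O\!\left(\frac{st}{|G|}\right).
\end{align*}

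The main obstacle I anticipate is making the ``freshness implies randomness'' claim fully rigorous over a \emph{non-abelian} group: in the original Even--Mansour setting over $(\mathbb{Z}/2)^n$ one freely rearranges the XOR relations, whereas here one must track left- versus right-multiplication carefully, since $E_k(m)=P(m\cdot k)\cdot k$ places the key on both sides and the candidate-collision set $\{m_i^{-1}\cdot u_j\}$ must be formed in the correct order. The careful bookkeeping is precisely the ``redefining notions to take into account that we are working over a not necessarily abelian group'' that the excerpt alludes to; I would handle it by defining the bad event purely in terms of the equation $k=m_i^{-1}\cdot u_j$ (and its decryption analogue $k = c_i^{-1}\cdot P(u_j)$ for $D$-queries paired with $P^{-1}$-queries) and verifying that outside this event the permutation is queried only at points whose images are independent of the transcript. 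Since the paper explicitly notes that the original \citep{EM} proof transfers with these adjustments, I would lean on that structure rather than reproving the information-theoretic core from scratch, and devote the writeup to justifying that each algebraic manipulation used in the abelian proof has a valid one-sided analogue here.
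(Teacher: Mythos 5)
Your high-level decomposition --- a key-alignment ``bad event'' of probability $O(st/|G|)$, a freshness argument giving $O(1/|G|)$ conditioned on no alignment, and a union bound --- is exactly the argument the paper intends: the paper omits this proof entirely, deferring to the original Even--Mansour argument (adapted to one key as in \citep{DKS}), and it spells out the same bad-key device in its proof of Theorem~\ref{PseudoEMbounded}, where a key is bad if $m_i\cdot k = x_j$ or $c_i\cdot k^{-1} = y_j$. However, two steps in your write-up fail as stated. First, the conditioning claim is wrong because the forgery output $(m,c)$ need not appear among the $E/D$-queries: ``no $E/D$-query aligns with any $P/P^{-1}$-query'' does not imply that $m\cdot k$ is a point of $P$ the adversary has never seen. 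Concretely, an adversary that makes no $E/D$-queries at all ($s=0$), queries $P$ once at $u$, guesses a key $k'$, and outputs $m=u\cdot(k')^{-1}$, $c=P(u)\cdot k'$ succeeds whenever $k=k'$, i.e.\ with probability at least $1/|G|>0$, while your bad event is vacuous. The standard repair is to count the output pair as one extra virtual $E$-query, so that the alignment event ranges over $(s+1)\cdot t$ pairs; this keeps the bound $O(st/|G|)$ (under the usual convention that the forgery itself counts as a query) and makes the freshness step legitimate. You should also justify the assertion that $k$ is ``uniform and independent'' of the candidate values $m_i^{-1}\cdot u_j$: the transcript is adaptive and the $E$-oracle answers depend on $k$, which is precisely the correlation that the paper's Games R, R$'$, and X are designed to remove in its proof of Theorem~\ref{PseudoEMbounded}.

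Second, the non-abelian bookkeeping that you yourself identify as the crux is carried out with the wrong orientation. The output-side alignment is the event $c_i\cdot k^{-1}=y_j$ (the point fed to $P^{-1}$ coinciding with a known point of the permutation), which solves to $k=y_j^{-1}\cdot c_i$; you wrote $k=c_i^{-1}\cdot y_j$, which is its group inverse. The two coincide only when $k=k^{-1}$ --- true over $(\mathbb{Z}/2)^n$, where the original proof lives, but false already over $\mathbb{Z}/N\mathbb{Z}$ for $N>2$, so the slip matters even for abelian groups. The collision probability is unharmed, since $k$ and $k^{-1}$ are equidistributed, but conditioning on avoiding the wrong set of keys gives no freshness guarantee for the $P^{-1}$-inputs in the transcript, so your conditional-uniformity step breaks exactly where the generalization is supposed to do its work. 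The safe procedure, which the paper follows, is to define badness by equations on the transcript ($m_i\cdot k = x_j$ or $c_i\cdot k^{-1}=y_j$) and only afterwards solve for $k$, keeping track of the side on which each multiplication acts.
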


By the Even and Mansour inference result, we get the corollary below.

\begin{cor}
Assume $P\in_R\mathfrak{P}_{G\rightarrow G}$ and let the key $k\in_R G$. For any probabilistic polynomial-time (PPT) adversary $\mathcal{A}$, the success probability of solving the Cracking Problem is negligible.
\end{cor}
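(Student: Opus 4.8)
The plan is to derive the corollary by combining Theorem~\ref{MainEFP} with the Even–Mansour inference that polynomial-time EFP security implies polynomial-time CP security, making that inference explicit as a reduction. First I would fix a PPT adversary $\mathcal{A}$ for the Cracking Problem. Because $\mathcal{A}$ runs in polynomial time, the numbers $s$ and $t$ of its $E/D$- and $P/P^{-1}$-queries are each bounded by a polynomial in $\lambda$; and since every $G \in \mathcal{G}$ is chosen with $|G| \geq 2^\lambda$, Theorem~\ref{MainEFP} already tells us that the EFP success probability of \emph{any} adversary with these query counts is $O(st/|G|) = O(poly(\lambda)\cdot 2^{-\lambda})$, which is negligible. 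It therefore suffices to convert a CP-solver into an EFP-solver of comparable success.

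To do this I would build an EFP adversary $\mathcal{B}$ that internally runs $\mathcal{A}$. The key observation is that in the CP game the challenge $c_0 = E_k(m_0)$ is the image of a uniformly random plaintext under the permutation $E_k$, hence is itself uniform on $G$; so $\mathcal{B}$ may sample $c_0 \in_R G$ on its own and hand it to $\mathcal{A}$ as the challenge, perfectly reproducing the CP challenge distribution (note $E_k$ is onto, so every $c_0 \in G$ is a legitimate ciphertext). The adversary $\mathcal{B}$ answers each oracle query of $\mathcal{A}$ by forwarding it to its own $E_k, D_k, P, P^{-1}$ oracles, with the single exception that a query $D_k(c_0)$ is answered by $\perp$, exactly as the CP game prescribes. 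When $\mathcal{A}$ halts with output $m$, the adversary $\mathcal{B}$ outputs the pair $(m, c_0)$. Since $\mathcal{B}$ only adds polynomial overhead and the same order of queries, Theorem~\ref{MainEFP} applies to $\mathcal{B}$.

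Correctness of the reduction is the substantive step. If $\mathcal{A}$ succeeds then $D_k(c_0) = m$, equivalently $E_k(m) = c_0$, so $(m,c_0)$ satisfies the forgery relation. What remains is the freshness requirement that $(m,c_0)$ must not have been queried during the run: the decryption oracle never revealed the preimage of $c_0$ (it returned $\perp$), so the only way the pair could be stale is if $\mathcal{A}$ had itself queried $E_k(m)$ and received $c_0$. I would simply have $\mathcal{B}$ abort in that event and bound its probability; since $\mathcal{A}$ makes at most $s$ encryption queries, and hitting the fixed uniform value $c_0$ on a given query happens with probability $O(1/|G|)$, this loss is $O(s^2/|G|)$, again negligible for PPT $\mathcal{A}$.

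The main obstacle is exactly this freshness bookkeeping, and it is also the reason the statement is restricted to polynomial-time adversaries: only the polynomial bound on $s$ and $t$ makes both the EFP estimate $O(st/|G|)$ and the reduction loss $O(s^2/|G|)$ negligible against $|G| \geq 2^\lambda$. Chaining these gives $Succ_{CP}(\mathcal{A}) \le Succ_{EFP}(\mathcal{B}) + O(s^2/|G|)$, and since both terms on the right are negligible, so is the Cracking-Problem success probability of $\mathcal{A}$, which is the claim.
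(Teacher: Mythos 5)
Your reduction skeleton is the right one, and it is in fact the content of the Even--Mansour inference result that the paper invokes without proof: sample $c_0 \in_R G$ yourself (legitimate, since $E_k$ is a permutation and the CP challenge is uniform), simulate the CP game by forwarding oracle queries, answer a $D_k(c_0)$ query with $\perp$ without forwarding it, and output $(m,c_0)$. The gap is in your freshness bookkeeping. You bound the probability that some encryption query of $\mathcal{A}$ returns $c_0$ by $O(s^2/|G|)$, on the grounds that \emph{hitting the fixed uniform value $c_0$ on a given query happens with probability $O(1/|G|)$}. That per-query estimate is only valid for queries chosen independently of $c_0$, but $\mathcal{A}$ is handed $c_0$ and queries adaptively. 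Consider the natural \emph{verifying} cracker: it computes a candidate preimage $m'$ by whatever means, queries $E_k(m')$ to confirm that the answer is $c_0$, and only then outputs $m'$. For this adversary the hitting event occurs with probability equal to its CP success probability, so your chained inequality reads $\mathrm{Succ}_{CP}(\mathcal{A}) \leq \mathrm{Succ}_{EFP}(\mathcal{B}) + \mathrm{Succ}_{CP}(\mathcal{A})$, which is vacuous. Put differently, an unconditional $O(s^2/|G|)$ bound on the hitting event is essentially equivalent in strength to the corollary being proven; it cannot come from a counting argument, and asserting it there makes the argument circular.

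The repair is a further reduction rather than a probability estimate. Let $\mathcal{B}$ additionally guess an index $i \in \{0,1,\dots,s\}$. On guess $i \geq 1$, it runs $\mathcal{A}$ as before but, upon $\mathcal{A}$'s $i$-th encryption query $m'$, halts \emph{without forwarding that query} and outputs $(m',c_0)$; this pair is fresh by construction and is a valid forgery whenever the $i$-th encryption query would have been the first to hit $c_0$. The guess $i=0$ covers runs in which no encryption query hits $c_0$, where your original analysis is sound. Summing over the guesses gives $\mathrm{Succ}_{CP}(\mathcal{A}) \leq (s+1)\cdot \mathrm{Succ}_{EFP}(\mathcal{B})$, and since $s$ is polynomial in $\lambda$ while Theorem~\ref{MainEFP} bounds the EFP success by $O(st/|G|)$ with $|G| \geq 2^\lambda$, the right-hand side is negligible. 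This polynomial loss is exactly why the statement is restricted to PPT adversaries, and handling the verification query by index-guessing is, in essence, how the cited Even--Mansour inference proceeds; the paper itself gives no details beyond citing that result.
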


As Even and Mansour also note, the above results may be extended to instances where the permutation is a pseudorandom permutation by a simple reduction. Hence, we get the following two results.

\begin{thm}
Assume $P$ is a pseudorandom permutation on $G\in \mathcal{G}$ and let the key $k\in_R G$. For any probabilistic adversary $\mathcal{A}$ with only polynomially many queries to its oracles, the success probability of solving the Existential Forgery Problem is negligible.
\end{thm}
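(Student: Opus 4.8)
The plan is to prove this by reduction to the random-permutation case already established in Theorem~\ref{MainEFP}. Suppose, for contradiction, that there is a probabilistic adversary $\mathcal{A}$ making polynomially many oracle queries whose EFP success probability is non-negligible when the public permutation $P$ is the pseudorandom permutation. From $\mathcal{A}$ I would build a distinguisher $\mathcal{D}$ that breaks the pseudorandomness of $P$, contradicting the PRP assumption.

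First I would describe $\mathcal{D}$'s simulation. The distinguisher $\mathcal{D}$ is given oracle access to a permutation $\mathcal{O}$ that is either $P(k',\cdot)$ for a uniformly random key $k'$, or a uniformly random permutation $\pi\in_R\mathfrak{P}_{G\rightarrow G}$. Now $\mathcal{D}$ samples its own Even-Mansour key $k\in_R G$ and runs $\mathcal{A}$, playing the role of all four EFP oracles: a $P$-query is forwarded directly to $\mathcal{O}$, a $P^{-1}$-query to $\mathcal{O}^{-1}$, an $E_k$-query on $m$ is answered by returning $\mathcal{O}(m\cdot k)\cdot k$, and a $D_k$-query on $c$ by returning $\mathcal{O}^{-1}(c\cdot k^{-1})\cdot k^{-1}$. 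When $\mathcal{A}$ halts with a pair $(m,c)$, $\mathcal{D}$ verifies (with one further $\mathcal{O}$ call) whether $\mathcal{O}(m\cdot k)\cdot k = c$ and whether $(m,c)$ was not previously queried; it outputs $1$ exactly when $\mathcal{A}$ succeeds.

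The key observation is that each branch of $\mathcal{D}$'s behaviour is a perfect simulation of an EFP game. When $\mathcal{O}=\pi$, the distinguisher reproduces the EFP experiment with $P\in_R\mathfrak{P}_{G\rightarrow G}$, so Theorem~\ref{MainEFP} bounds $Pr[\mathcal{D}^{\pi}=1]$ by $O(st/|G|)$, which is negligible since $s,t=poly(\lambda)$ and $|G|\geq 2^\lambda$. When $\mathcal{O}=P(k',\cdot)$, the distinguisher reproduces the EFP experiment with the pseudorandom permutation, so $Pr[\mathcal{D}=1]=Succ(\mathcal{A})$, assumed non-negligible. The difference of these two probabilities is then non-negligible; since $\mathcal{D}$ makes only $s+t+1$ oracle queries, which is still polynomial, this contradicts the pseudorandomness of $P$ and completes the reduction.

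The main obstacle is that the EFP game equips $\mathcal{A}$ with the inverse oracles $D_k$ and $P^{-1}$, whereas the PRP definition only guarantees indistinguishability of the forward oracle $P(k,\cdot)$. The simulation above genuinely invokes $\mathcal{O}^{-1}$, so strictly it requires indistinguishability of the pair $(\mathcal{O},\mathcal{O}^{-1})$, that is, the \emph{super} pseudorandomness of $P$ in the sense of the SPRP definition. I would therefore either strengthen the hypothesis to SPRP, or restrict attention to the subclass of EFP adversaries that never invoke $D_k$ or $P^{-1}$, for which the forward PRP property already suffices; in either case the remainder of the argument is unchanged.
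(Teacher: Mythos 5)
Your reduction is exactly the argument the paper intends: the paper gives no explicit proof, saying only that the result follows from Theorem~\ref{MainEFP} ``by a simple reduction,'' and your distinguisher $\mathcal{D}$ is precisely that reduction. Your closing caveat is also well taken --- since the EFP game grants $\mathcal{A}$ the $D_k$ and $P^{-1}$ oracles, the reduction genuinely needs indistinguishability of the pair $(\mathcal{O},\mathcal{O}^{-1})$, and indeed the paper's own informal statement of this result in the Summary of Results hypothesizes a \emph{super} pseudorandom permutation, so strengthening the hypothesis to SPRP as you suggest is exactly what the paper does.
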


\begin{cor}
Assume $P$ is a pseudorandom permutation on $G\in \mathcal{G}$ and let the key $k\in_R G$. For any probabilistic polynomial-time (PPT) adversary $\mathcal{A}$, the success probability of solving the Cracking Problem is negligible.
\end{cor}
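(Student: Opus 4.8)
The plan is to obtain this corollary from the immediately preceding theorem, which already establishes Existential Forgery security when $P$ is a pseudorandom permutation, by invoking the Even and Mansour inference result that polynomial-time EFP security infers polynomial-time CP security. This is exactly the route used to pass from Theorem~\ref{MainEFP} to its corollary in the random-permutation case, now applied one level up: since the preceding theorem gives EFP security against every adversary making polynomially many oracle queries, in particular against every PPT adversary, the inference result yields CP security against all PPT adversaries. Observe that the inference reduction relates the two games over the \emph{same} public permutation and never swaps $P$, so it introduces no new simulation burden beyond what the preceding theorem already supplies; the restriction to polynomial-time adversaries is inherited precisely from this step, whose crack-to-forge reduction must be run efficiently.

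Alternatively, and matching the phrase ``by a simple reduction'' in the text, I would prove the corollary directly from the random-permutation CP corollary. Suppose, for contradiction, that some PPT adversary $\mathcal{A}$ solves the CP with non-negligible probability when the public permutation is a pseudorandom permutation $P(k',\cdot)$. I would build a distinguisher $\mathcal{D}$ that is given oracle access to a challenge permutation on $G$, either $P(k',\cdot)$ for a uniformly random key $k'$ or a uniformly random $\pi\in_R\mathfrak{P}_{G\rightarrow G}$, together with its inverse. The distinguisher samples its own Even-Mansour key $k\in_R G$, presents $\mathcal{A}$ with a challenge ciphertext $c_0$, and answers each of $\mathcal{A}$'s $E_k$, $D_k$, $P$, and $P^{-1}$ queries by forwarding to the challenge oracle and composing with $k$ as in $E_k(m)=P(m\cdot k)\cdot k$ and $D_k(c)=P^{-1}(c\cdot k^{-1})\cdot k^{-1}$, while enforcing the rule that $D_k$ returns $\perp$ on input $c_0$. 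When $\mathcal{A}$ halts with an output $m$, the distinguisher outputs $1$ exactly when $D_k(c_0)=m$.

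The analysis is then routine. When the challenge oracle is $P(k',\cdot)$, the view of $\mathcal{A}$ is identically distributed to the real CP game with a pseudorandom public permutation, so $\Pr[\mathcal{D}=1]$ equals the non-negligible CP success probability of $\mathcal{A}$; when the challenge oracle is a uniform $\pi$, the view is identically distributed to the CP game with a random public permutation, so by the preceding corollary $\Pr[\mathcal{D}=1]$ is negligible. Subtracting the two gives $\mathcal{D}$ a non-negligible distinguishing advantage, and since $\mathcal{D}$ runs $\mathcal{A}$ once with only polynomial bookkeeping it is itself PPT, contradicting the pseudorandomness of $P$.

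The point demanding the most care is the simulation of the inverse direction. Both the decryption oracle $D_k$ and the direct $P^{-1}$ queries require $\mathcal{D}$ to evaluate the challenge permutation backwards, so this second route genuinely relies on $P$ looking random even under inverse queries, that is, on super-pseudorandomness rather than ordinary one-directional pseudorandomness. I would therefore either strengthen the hypothesis to super-pseudorandomness for the direct reduction, or, keeping the stated PRP hypothesis, rely on the inference-result route of the first paragraph, where the inverse access is already contained in the EFP game underlying the preceding theorem and no separate inverse-oracle simulation is needed here.
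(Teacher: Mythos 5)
Your first route is exactly what the paper does: the corollary is obtained from the immediately preceding theorem (EFP security for pseudorandom $P$) by citing Even and Mansour's inference that polynomial-time EFP security implies polynomial-time CP security, and the paper offers no detail beyond that citation, so on this route your proposal and the paper coincide. Your second, direct reduction is where you add something the paper glosses over, and your caveat there is exactly right: because the Cracking game grants the adversary $D_k$- and $P^{-1}$-oracles, a distinguisher simulating that game must evaluate the challenge permutation backwards, which the paper's formal PRP definition (forward queries only) does not support; the SPRP definition is needed. This is not a defect of your argument but an imprecision in the paper's formal statement: the paper's own informal version of this very corollary in the Summary of Results hypothesizes a \emph{super} pseudorandom permutation, so the strengthened hypothesis you propose is evidently what was intended, and the same issue silently affects the ``simple reduction'' behind the preceding theorem as well, since the EFP game also includes inverse oracles. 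In short, your first route proves the corollary as stated modulo the preceding theorem, which is all the paper does; your second route is sound once the hypothesis is upgraded to SPRP, and identifying that upgrade as necessary is a genuine gain in precision over the paper.
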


\subsection{Pseudorandomness Property of the Group EM Scheme}
Although the above notions of security are strong, we are more interested in any pseudorandomness property the Group EM scheme offers us. Kilian and Rogaway \citep{Kilr} show that the one-key EM scheme satisfies the pseudorandom permutation property, i.e. with only an encryption oracle and the permutation oracles, the EM scheme is indistinguishable from random to any adversary with only polynomially many queries to its oracles. We note that they only show the pseudorandomness property, but state in their discussion section that their proof may be adapted to include a decryption oracle, i.e. that the one-key EM scheme satisfies the super pseudorandom permutation property. Having done the analysis with the decryption oracle, over an arbitrary group, we concur. However, we were also able to generalize the \citep{Kilr} proof to a one-key construction. This not entirely remarkable as the key $k$ will usually be different from its group inverse, hence we were able to use the same proof, but with adjustments to the games and their analysis. 

In the following, we assume that the adversary $\mathcal{A}$ is unbounded computationally, but may only make polynomially many queries to the $E/D$- and $P/P^{-1}$-oracles, where all oracles act as black boxes and $P$ is a truly random permutation. We intend to play the "pseudorandom or random permutation game": $\mathcal{A}$ is given an encryption oracle $E$ (with related decryption oracle $D$) which is randomly chosen with equal probability from the following two options:
\begin{enumerate}
\item A random key $k\in_R G$ is chosen uniformly and used to encrypt as $E(m)=E_k(m)=P(m\cdot k )\cdot k$, or
\item A random permutation $\pi\in_R \mathfrak{P}_{G \rightarrow G}$ is chosen and used to encrypt as $E(m)=\pi(m)$.
\end{enumerate}
The adversary wins the game if it can distinguish how $E$ was chosen, with probability significantly better than $1/2$. More explicitly, we wish to prove the following for the group Even-Mansour scheme.

\begin{thm}\label{PseudoEMbounded}
Assume $P\in_R\mathfrak{P}_{G\rightarrow G}$ and let the key $k\in_R G$. For any probabilistic adversary $\mathcal{A}$, limited to polynomially many $E/D$- and $P/P^{-1}$-oracle queries, the adversarial advantage of $\mathcal{A}$ is bounded by
\begin{align}\label{AdvA}
\text{Adv}(\mathcal{A}) \defeq \left| Pr\left[ \mathcal{A}_{E_k,D_k}^{P,P^{-1}} = 1 \right] - Pr\left[\mathcal{A}_{\pi,\pi^{-1}}^{P,P^{-1}} = 1 \right]\right| = \mathcal{O}\left(\frac{st}{|G|}\right).
\end{align}
where $s$ is the total number of $E/D$-queries and $t$ is the total number of $P/P^{-1}$-queries, i.e. the success probability is negligible.
\end{thm}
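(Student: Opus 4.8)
The plan is to bound the advantage by a game-playing argument in the style of Kilian and Rogaway \citep{Kilr}, adapted to the non-abelian, one-key setting. I would set up two extreme games. In the \emph{real} game $G_0$, the adversary $\mathcal{A}$ is given the four oracles $E_k, D_k, P, P^{-1}$, where $P \in_R \mathfrak{P}_{G \rightarrow G}$ and $k \in_R G$, with $E_k(m) = P(m \cdot k)\cdot k$ and $D_k(c) = P^{-1}(c \cdot k^{-1})\cdot k^{-1}$. In the \emph{ideal} game $G_2$, the construction oracles are replaced by an independent random permutation: $\mathcal{A}$ is given $\pi, \pi^{-1}, P, P^{-1}$ with $\pi, P$ independent and uniform. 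The quantity to bound is $|\Pr[G_0 = 1] - \Pr[G_2 = 1]| = \text{Adv}(\mathcal{A})$.

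First I would reformulate $G_0$ by lazily sampling the single random permutation $P$: maintain a partial table of input--output pairs, and answer all four oracles consistently from it, drawing a fresh uniform value from the unused part of the range (or domain, for inverse queries) whenever a genuinely new point is touched. Under this reformulation each $E$-query on $m$ accesses $P$ at the shifted point $m \cdot k$ and returns the stored image times $k$, each $D$-query accesses $P^{-1}$ at $c \cdot k^{-1}$, while the direct $P/P^{-1}$-queries touch the table at unshifted points. This reformulated game is identically distributed to $G_0$.

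Next I would introduce the \textbf{bad event}: the first time a \emph{construction point} (a domain point $m \cdot k$ or a range point arising from an $E/D$-query) coincides with a \emph{direct point} (a point touched by a $P/P^{-1}$-query), in either the domain or the range of $P$. The core claim, and the step I expect to be the main obstacle, is that the lazily-sampled real game and the ideal game $G_2$ are \emph{identical-until-bad}: as long as no construction point ever collides with a direct point, the images returned to $E/D$-queries are fresh uniform values (shifted by $k$), distributed exactly as the responses of an independent random permutation $\pi$ would be, so the two games induce the same distribution on $\mathcal{A}$'s view. Making this precise requires care with the non-abelian group operation, so that the left/right placement of $k$ and $k^{-1}$ in the shifts is tracked consistently, and, crucially, with the fact that the same key $k$ governs both the domain shift $m \cdot k$ and the range shift by $k$; one must verify that these do not induce a hidden dependence exploitable before bad occurs, using that $k$ and $k^{-1}$ are generically distinct. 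Granting the claim, the Fundamental Lemma of Game Playing yields $\text{Adv}(\mathcal{A}) \leq \Pr[\text{bad}]$.

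Finally I would bound $\Pr[\text{bad}]$. The essential observation is that, conditioned on bad not having yet occurred, the key $k$ remains uniformly distributed and independent of $\mathcal{A}$'s view, since no collision has leaked information about $k$. Hence for any fixed pair consisting of one $E/D$-query carrying data $m$ and one $P/P^{-1}$-query at a point $x$, a domain collision $m \cdot k = x$ forces $k = m^{-1} \cdot x$ and thus occurs with probability $1/|G|$; a range collision is analogous, with probability $O(1/|G|)$ because each fresh image is near-uniform over the unused range. Taking a union bound over all at most $st$ such query pairs, and over the constantly many collision types (domain versus range, encryption versus decryption direction), gives $\Pr[\text{bad}] = O(st/|G|)$, which is the claimed bound and is negligible whenever $s, t$ are polynomial and $|G| \geq 2^\lambda$.
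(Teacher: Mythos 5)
Your proposal is correct and follows essentially the same route as the paper's proof: a Kilian--Rogaway game-playing argument with a lazily-sampled real game, an ideal game, an identical-until-bad analysis of collisions between construction points ($m\cdot k$, $c\cdot k^{-1}$) and direct $P/P^{-1}$-points, and a final union bound of $O(st/|G|)$ on the bad event. The paper handles the two steps you flag as delicate in its appendices --- the equivalence of the patched lazy game with the real cipher (Appendix~\ref{ExplainX}) and the key-independence argument, which it formalizes by deferring the choice of $k$ until after all queries (Game R', Appendix~\ref{App:ReqRR}) rather than by conditioning on the bad event not yet having occurred.
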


\begin{proof}
We may assume that $\mathcal{A}$ is deterministic (in essence, being unbounded computationally affords $\mathcal{A}$ the possibility of derandomizing its strategy by searching all its possible random choices and picking the most effective choices after having computed the effectiveness of each choice. For an example, see \citep{DingDong}.) We may also assume that $\mathcal{A}$ never queries a pair in $S_s$ or $T_t$ more than once, where $S_i$ and $T_i$ are the sets of $i$ $E/D$- and $P/P^{-1}$-queries, respectively. Let us define two main games, that $\mathcal{A}$ could play, through oracle interactions (see next page for the explicit game descriptions.)

Note that the steps in italics have no impact on the response to $\mathcal{A}$'s queries, we simply continue to answer the queries and only note if the key turns bad, i.e. we say that a key $k$ is \textbf{bad w.r.t. the sets $S_s$ and $T_t$} if there exist $i,j$ such that either $m_i \cdot k = x_j$ or $c_i \cdot k^{-1} = y_j$, and $k$ is \textbf{good} otherwise. There are at most $\frac{2st}{|G|}$ bad keys.

\textbf{Game R}: We consider the random game which corresponds to the latter probability in (\ref{AdvA}), i.e.
\begin{align*}
P_R := Pr\left[ \mathcal{A}_{\pi,\pi^{-1}}^{P,P^{-1}}=1\right].
\end{align*}

 From the definition of \textbf{Game R}, we see that, letting $Pr_R$ denote the probability when playing \textbf{Game R},
\begin{align}\label{PR}
Pr_R \left[ \mathcal{A}_{E,D}^{P,P^{-1}}=1\right] = P_R,
\end{align}
as we are simply giving uniformly random answers to each of $\mathcal{A}$'s queries.

\newpage
\begin{footnotesize}\label{GamesXandR}
\noindent\textbf{Notation:} We let $S^1_i = \lbrace m | (m,c)\in S_{i} \rbrace, \hspace*{5pt} S^2_i = \lbrace c | (m,c)\in S_{i} \rbrace, T^1_i = \lbrace x | (x,y)\in T_{i} \rbrace,$ and $\hspace*{5pt} T^2_i = \lbrace y | (x,y)\in T_{i} \rbrace.$
\end{footnotesize}
\begin{small}
\hrule\noindent
\begin{minipage}[t]{0.48\textwidth}
\vspace*{0.01\textheight}\textbf{GAME R:} Initially, let $S_0$ and $T_0$ be empty and flag unset. Choose $k\in_R G$, then answer the $i+1$'st query as follows: \linebreak
\vspace*{0.0005\textheight}

\textbf{$E$-oracle query with $m_{i+1}$:} \\
 \textbf{1.} Choose $c_{i+1}\in_R G\setminus S^2_i$. \\
 \textbf{2.} \textit{If $P(m_{i+1}\cdot k)\in T^2_i$, or $P^{-1}(c_{i+1}\cdot k^{-1})\in T^1_i$, then set flag to \textbf{bad}.} \\
 \textbf{3.} Define $E(m_{i+1})=c_{i+1}$ (and thereby also $D(c_{i+1})=m_{i+1}$) and return $c_{i+1}$. \\

\vspace*{0.023\textheight}
\textbf{$D$-oracle query with $c_{i+1}$:} \\
 \textbf{1.} Choose $m_{i+1}\in_R G\setminus S^1_i$. \\
 \textbf{2.} \textit{If $P^{-1}(c_{i+1}\cdot k^{-1})\in T^1_i$, or $P(m_{i+1}\cdot k)\in T^2_i$, then set flag to \textbf{bad}.} \\
 \textbf{3.} Define $D(c_{i+1}) = m_{i+1}$ (and thereby also $E(m_{i+1})=c_{i+1}$) and return $m_{i+1}$. \\

\vspace*{0.045\textheight}
\textbf{$P$-oracle query with $x_{i+1}$:} \\
 \textbf{1.} Choose $y_{i+1}\in_R G\setminus T^2_i$. \\
 \textbf{2.} \textit{If $E(x_{i+1}\cdot k^{-1})\in S^2_i$, or $D(y_{i+1}\cdot k)\in S^1_i$, then set flag to \textbf{bad}.} \\
 \textbf{3.} Define $P(x_{i+1}) = y_{i+1}$ (and thereby also $P^{-1}(y_{i+1})=x_{i+1}$) and return $y_{i+1}$. \\

\vspace*{0.023\textheight}
\textbf{$P^{-1}$-oracle query with $y_{i+1}$:} \\
 \textbf{1.} Choose $x_{i+1}\in_R G\setminus T^1_i$. \\
 \textbf{2.} \textit{If $D(y_{i+1}\cdot k)\in S^1_i$, or $E(x_{i+1}\cdot k^{-1})\in S^2_i$, then set flag to \textbf{bad}.} \\
 \textbf{3.} Define $P^{-1}(y_{i+1}) = x_{i+1}$ (and thereby also $P(x_{i+1})=y_{i+1}$) and return $x_{i+1}$.
\end{minipage}\hspace*{0.01\textwidth} \vrule \hspace*{0.01\textwidth}
\begin{minipage}[t]{0.48\textwidth}
\vspace*{0.01\textheight}\textbf{GAME X:} Initially, let $S_0$ and $T_0$ be empty and flag unset. Choose $k\in_R G$, then answer the $i+1$'st query as follows: \linebreak
\vspace*{0.0005\textheight}

\textbf{$E$-oracle query with $m_{i+1}$:} \\
 \textbf{1.} Choose $c_{i+1}\in_R G\setminus S^2_i$. \\
 \textbf{2.} If $P(m_{i+1}\cdot k)\in T^2_i$ then redefine $c_{i+1} := P(m_{i+1}\cdot k)\cdot k$ \textit{and set flag to \textbf{bad}}. Else if $P^{-1}(c_{i+1}\cdot k^{-1})\in T^1_i$, \textit{then set flag to \textbf{bad} and} goto Step 1. \\
 \textbf{3.} Define $E(m_{i+1})=c_{i+1}$ (and thereby also $D(c_{i+1})=m_{i+1}$) and return $c_{i+1}$. \\

\textbf{$D$-oracle query with $c_{i+1}$:} \\
 \textbf{1.} Choose $m_{i+1}\in_R G\setminus S^1_i$. \\
 \textbf{2.} If $P^{-1}(c_{i+1}\cdot k^{-1})\in T^1_i$ then redefine $m_{i+1} := P^{-1}(c_{i+1}\cdot k^{-1})\cdot k^{-1}$ \textit{and set flag to \textbf{bad}}. Else if $P(m_{i+1}\cdot k)\in T^2_i$, \textit{then set flag to \textbf{bad} and} goto Step 1. \\
 \textbf{3.} Define $D(c_{i+1}) = m_{i+1}$ (and thereby also $E(m_{i+1})=c_{i+1}$) and return $m_{i+1}$. \\

\textbf{$P$-oracle query with $x_{i+1}$:} \\
 \textbf{1.} Choose $y_{i+1}\in_R G\setminus T^2_i$. \\
 \textbf{2.} If $E(x_{i+1}\cdot k^{-1})\in S^2_i$ then redefine $y_{i+1} := E(x_{i+1}\cdot k^{-1})\cdot k^{-1}$ \textit{and set flag to \textbf{bad}}. Else if $D(y_{i+1}\cdot k)\in S^1_i$, \textit{then set flag to \textbf{bad} and} goto Step 1. \\
 \textbf{3.} Define $P(x_{i+1}) = y_{i+1}$ (and thereby also $P^{-1}(y_{i+1})=x_{i+1}$) and return $y_{i+1}$. \\

\textbf{$P^{-1}$-oracle query with $y_{i+1}$:} \\
 \textbf{1.} Choose $x_{i+1}\in_R G\setminus T^1_i$. \\
 \textbf{2.} If $D(y_{i+1}\cdot k)\in S^1_i$ then redefine $x_{i+1} := D(y_{i+1}\cdot k)\cdot k$ \textit{and set flag to \textbf{bad}}. Else if $E(x_{i+1}\cdot k^{-1})\in S^2_i$, \textit{then set flag to \textbf{bad} and} goto Step 1. \\
 \textbf{3.} Define $P^{-1}(y_{i+1}) = x_{i+1}$ (and thereby also $P(x_{i+1})=y_{i+1}$) and return $x_{i+1}$.
\vspace*{0.02\textheight}
\end{minipage}
\end{small}

\newpage

\textbf{Game X}: Consider the experiment which corresponds to the game played in the prior probability in (\ref{AdvA}) and define this probability as
\begin{align*}
P_X := Pr\left[ \mathcal{A}_{E_k,D_k}^{P,P^{-1}}=1\right].
\end{align*}
We define \textbf{Game X}, as outlined above. Note that again the parts in italics have no impact on the response to $\mathcal{A}$'s queries, however, this time, when a key becomes \emph{bad}, we choose a new random value repeatedly for the response until the key is no longer \emph{bad}, and then reply with this value. Intuitively, \textbf{Game X} behaves like \textbf{Game R} except that \textbf{Game X} checks for consistency as it does not want $\mathcal{A}$ to win on some collision. It is non-trivial to see that, letting $Pr_X$ denote the probability when playing \textbf{Game X},
\begin{align}\label{PX}
Pr_X \left[ \mathcal{A}_{E,D}^{P,P^{-1}}=1\right] =  P_X.
\end{align}
The proof is given in Appendix~\ref{ExplainX}.

We have defined both games in such a way that their outcomes differ only in the event that a key turns \textit{bad}. Thus, any circumstance which causes a difference in the instructions carried out by the games, will also cause both games to set the flag to \emph{bad}. Let $BAD$ denote the event that the flag gets set to \emph{bad} and the case that the flag is not set to \emph{bad} by $\neg BAD$, then the two following lemmas follow from the previous statement.

\begin{lem}\label{BadisBad}
$Pr_R\left[ BAD \right] = Pr_X\left[ BAD \right]$ and $Pr_R\left[ \neg BAD \right] = Pr_X\left[ \neg BAD \right]$.
\end{lem}

\begin{lem}\label{Notbadisnotbad}
$Pr_R\left[ \mathcal{A}_{E,D}^{P,P^{-1}}=1| \neg BAD \right] = Pr_X\left[ \mathcal{A}_{E,D}^{P,P^{-1}}=1| \neg BAD \right]$.
\end{lem}

Using these two lemmas we are able to prove the lemma:

\begin{lem}
$\text{Adv}(\mathcal{A}) \leq Pr_R\left[ BAD \right]$.
\end{lem}

This is because, using (\ref{PR}), (\ref{PX}), and lemmas \ref{BadisBad} and \ref{Notbadisnotbad}, 
\begin{align*}
\text{Adv}(\mathcal{A}) &= |P_X - P_R| \\
						&= \left| Pr_X\left[ \mathcal{A}_{E,D}^{P,P^{-1}}=1 \right] - Pr_R\left[ \mathcal{A}_{E,D}^{P,P^{-1}}=1 \right] \right| \\
						&= | Pr_X\left[ \mathcal{A}_{E,D}^{P,P^{-1}}=1 | \neg BAD \right]\cdot Pr_X\left[\neg BAD \right] \\
							&\hspace*{25pt}+ Pr_X\left[ \mathcal{A}_{E,D}^{P,P^{-1}}=1 | BAD \right]\cdot Pr_X\left[ BAD \right] \\
							&\hspace*{45pt} - Pr_R\left[ \mathcal{A}_{E,D}^{P,P^{-1}}=1 | \neg BAD \right]\cdot Pr_R\left[\neg BAD \right] \\
							&\hspace*{65pt}- Pr_R\left[ \mathcal{A}_{E,D}^{P,P^{-1}}=1 | BAD \right]\cdot Pr_R\left[ BAD \right] | \\
						&= \left| Pr_R\left[ BAD \right]\cdot \left( Pr_X\left[ \mathcal{A}_{E,D}^{P,P^{-1}}=1 | BAD \right] - Pr_R\left[ \mathcal{A}_{E,D}^{P,P^{-1}}=1 | BAD \right]\right) \right| \\
						&\leq Pr_R\left[ BAD \right].
\end{align*}

Let us now define yet another game, \textbf{Game R'}.
\vspace*{0.015\textheight}
\hrule
\begin{footnotesize}
\begin{minipage}[t]{0.9\textwidth}
\vspace*{0.005\textheight} \textbf{GAME R':} Initially, let $S_0$ and $T_0$ be empty and flag unset. Answer the $i+1$'st query as follows: \linebreak
\textbf{$E$-oracle query with $m_{i+1}$:} \\
 \textbf{1.} Choose $c_{i+1}\in_R G\setminus S^2_i$. \\
 \textbf{2.} Define $E(m_{i+1}):=c_{i+1}$ (and thereby also $D(c_{i+1}):=m_{i+1}$) and return $c_{i+1}$. \\

\textbf{$D$-oracle query with $c_{i+1}$:} \\
 \textbf{1.} Choose $m_{i+1}\in_R G\setminus S^1_i$. \\
 \textbf{2.} Define $D(c_{i+1}) := m_{i+1}$ (and thereby also $E(m_{i+1}):=c_{i+1}$) and return $m_{i+1}$. \\

\textbf{$P$-oracle query with $x_{i+1}$:} \\
 \textbf{1.} Choose $y_{i+1}\in_R G\setminus T^2_i$. \\
 \textbf{2.} Define $P(x_{i+1}) := y_{i+1}$ (and thereby also $P^{-1}(y_{i+1}):=x_{i+1}$) and return $y_{i+1}$. \\

\textbf{$P^{-1}$-oracle query with $y_{i+1}$:} \\
 \textbf{1.} Choose $x_{i+1}\in_R G\setminus T^1_i$. \\
 \textbf{2.} Define $P^{-1}(y_{i+1}) := x_{i+1}$ (and thereby also $P(x_{i+1}):=y_{i+1}$) and return $x_{i+1}$.\\
 
\textit{After all queries have been answered, choose $k\in_R G$. If there exists $(m,c)\in S_{s}$ and $(x,y)\in T_{t}$ such that $k$ becomes bad then set flag to \textbf{bad}.}
\vspace*{0.005\textheight}
\end{minipage}
\end{footnotesize}
\hrule
\vspace*{0.015\textheight}
This game runs as \textbf{Game R} except that it does not choose a key until all of the queries have been answered and then checks for badness of the flag (by checking whether or not the key has become bad). It can be shown that the flag is set to \textbf{\textit{bad}} in \textbf{Game R} if and only if the flag is set to \textbf{\textit{bad}} in \textbf{Game R'} (by a consideration of cases (see Appendix~\ref{App:ReqRR}.)) Hence, we get the following lemma.

\begin{lem}
$Pr_R\left[ BAD \right] = Pr_{R'} \left[ BAD \right]$.
\end{lem}

Using the above lemma, we now only have to bound $Pr_{R'} \left[ BAD \right]$ in order to bound $\text{Adv}(\mathcal{A})$, but as the adversary queries at most $s$ elements to the $E/D$-oracles and at most $t$ elements to the $P/P^{-1}$-oracles, and the key $k$ is chosen uniformly at random from $G$, we have that the probability of choosing a bad key is at most $2st/|G|$, i.e.
\begin{align*}
\text{Adv}(\mathcal{A}) \leq Pr_{R'} \left[ BAD \right] = \mathcal{O}\left( \frac{st}{|G|} \right).
\end{align*}

\end{proof}

Restating the theorem, we get our second main result:

\begin{thm}
For any probabilistic adversary $\mathcal{A}$, limited to polynomially many $E/D$- and $P/P^{-1}$-oracle queries, the Group EM scheme over a group $G$ is a super pseudorandom permutation.
\end{thm}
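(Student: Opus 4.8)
The plan is to observe that this statement is simply the asymptotic, qualitative reformulation of the quantitative bound already established in Theorem~\ref{PseudoEMbounded}, so the entire argument amounts to translating $\mathcal{O}(st/|G|)$ into the language of the SPRP definition.

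First I would recall that, in the Random Oracle Model, the public permutation $P$ is modelled as a random permutation oracle to which both the scheme and the adversary have access. Consequently, the advantage appearing in Theorem~\ref{PseudoEMbounded},
\begin{align*}
\text{Adv}(\mathcal{A}) = \left| Pr\left[ \mathcal{A}_{E_k,D_k}^{P,P^{-1}} = 1 \right] - Pr\left[ \mathcal{A}_{\pi,\pi^{-1}}^{P,P^{-1}} = 1 \right]\right|,
\end{align*}
is precisely the distinguishing advantage required by the SPRP definition: the left-hand experiment gives $\mathcal{A}$ the encryption/decryption oracles $E_k, D_k$ of the Group EM scheme under a uniformly random key $k\in_R G$, while the right-hand experiment replaces these by a uniformly random permutation $\pi\in_R \mathfrak{P}_{G\rightarrow G}$ and its inverse, with $\mathcal{A}$ retaining its public $P/P^{-1}$ oracle access in both worlds. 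Matching the keyed permutation of the SPRP definition with $E_k$ (whose inverse is $D_k$), these two quantities coincide.

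Next I would invoke the hypothesis that $\mathcal{A}$ is limited to polynomially many oracle queries, so that $s = poly(\lambda)$ and $t = poly(\lambda)$, whence $st = poly(\lambda)$ as well. Using the standing assumption that $|G| = |G_\lambda| \geq 2^\lambda$, the bound from Theorem~\ref{PseudoEMbounded} yields
\begin{align*}
\text{Adv}(\mathcal{A}) = \mathcal{O}\left( \frac{st}{|G|} \right) \leq \frac{poly(\lambda)}{2^\lambda},
\end{align*}
and the right-hand side is a negligible function of $\lambda$. Taking $negl(\lambda)$ to be this bound gives exactly the inequality demanded by the definition of a super pseudorandom permutation, which completes the argument.

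There is no genuine obstacle here, since the mathematical content lies entirely in Theorem~\ref{PseudoEMbounded}. The only point requiring care is the bookkeeping of the second paragraph: confirming that the two probability experiments in $\text{Adv}(\mathcal{A})$ are literally the two experiments in the SPRP definition, with $E_k$ playing the role of the keyed permutation and the public $P$ available as a random oracle in both worlds, so that the asymptotic negligibility of the explicit bound can be read off directly.
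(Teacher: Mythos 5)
Your proposal is correct and matches the paper exactly: the paper offers no separate proof of this theorem, introducing it with the words ``Restating the theorem,'' i.e.\ it is presented as precisely the qualitative reformulation of Theorem~\ref{PseudoEMbounded} that you describe. Your bookkeeping --- identifying $\text{Adv}(\mathcal{A})$ with the SPRP distinguishing advantage and noting that $st = poly(\lambda)$ together with $|G| \geq 2^\lambda$ makes the bound $\mathcal{O}(st/|G|)$ negligible --- is exactly the implicit content of that restatement.
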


By removing the decryption oracle, we get the following corollary:

\begin{cor}
For any probabilistic adversary $\mathcal{A}$, limited to polynomially many $E$- and $P/P^{-1}$-oracle queries, the Group EM scheme over a group $G$ is a pseudorandom permutation.
\end{cor}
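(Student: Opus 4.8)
The plan is to obtain the pseudorandom permutation property as a direct specialization of the super pseudorandom permutation property just established, rather than re-running the game-based analysis. Recall that a PRP distinguisher is given oracle access only to the encryption oracle $E$ (together with $P$ and $P^{-1}$), whereas an SPRP distinguisher is additionally handed the decryption oracle $D$. Thus every PRP adversary is, syntactically, an SPRP adversary that simply never queries its $D$-oracle. The idea is therefore to lift any PRP adversary to an SPRP adversary with identical behaviour and invoke the bound from Theorem~\ref{PseudoEMbounded}.

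Concretely, given any probabilistic PRP distinguisher $\mathcal{A}$ making $s$ encryption-oracle queries and $t$ permutation/inverse-permutation queries, I would define an SPRP distinguisher $\mathcal{A}'$ that runs $\mathcal{A}$ internally, forwarding each $E$-, $P$-, and $P^{-1}$-query to its own corresponding oracle and relaying the answers verbatim, and finally outputting whatever $\mathcal{A}$ outputs. Since $\mathcal{A}'$ issues no $D$-queries, its total count of $E/D$-queries is exactly $s$ and its total count of $P/P^{-1}$-queries is exactly $t$. Because the two worlds seen by $\mathcal{A}'$, namely the Group EM world $(E_k, D_k, P, P^{-1})$ and the random world $(\pi, \pi^{-1}, P, P^{-1})$, restrict on the sub-collection of oracles that $\mathcal{A}$ actually uses to precisely the two worlds appearing in the PRP definition, the advantage of $\mathcal{A}'$ equals the advantage of $\mathcal{A}$.

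Applying Theorem~\ref{PseudoEMbounded} to $\mathcal{A}'$ then gives $\mathrm{Adv}(\mathcal{A}) = \mathrm{Adv}(\mathcal{A}') = \mathcal{O}(st/|G|)$. Since $\mathcal{A}$ is limited to polynomially many queries, both $s$ and $t$ are bounded by some polynomial in $\lambda$, while $|G| \geq 2^\lambda$; hence $st/|G|$ is negligible in $\lambda$, which is exactly the PRP condition and completes the argument.

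The only point requiring care, and it is a conceptual rather than a computational obstacle, is to verify that suppressing the $D$-oracle cannot increase the distinguishing advantage, that is, that the restriction is genuine and that the bound proved for the full four-oracle setting transfers to the three-oracle setting with $s$ reinterpreted as counting only $E$-queries. This is immediate from the reduction above, since the quantity $s$ in the SPRP bound is the combined $E/D$-query count, which for $\mathcal{A}'$ is just the $E$-query count of $\mathcal{A}$; no further inspection of the games underlying Theorem~\ref{PseudoEMbounded} is needed.
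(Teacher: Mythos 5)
Your proposal is correct and takes essentially the same route as the paper: the corollary is obtained there directly from Theorem~\ref{PseudoEMbounded} by ``removing the decryption oracle,'' which is precisely your observation that a PRP distinguisher is an SPRP distinguisher that never queries $D$, so the $\mathcal{O}(st/|G|)$ bound applies unchanged. Your reduction merely spells out the one-line argument the paper leaves implicit.
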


\begin{rem}
We see that in the group $((\mathbb{Z}/2\mathbb{Z})^n,\oplus)$, our Group EM scheme reduces to the one-key EM scheme given in \citep{DKS}. The proof given in \citep{DKS} proves the security of the scheme, and the proof given in \citep{Kilr} proves the pseudorandomness, equivalently to our claims.
\end{rem}

It can be proven that a multiple round Group EM scheme is an SPRP because the security only depends on the last round, which is also an SPRP.

\subsection{Slide Attack}
We would like to show that the security bound that we have found above is optimal, so we slightly alter the simple optimal attack on the Single-Key Even-Mansour cipher as constructed in \citep{DKS}. The original version works for abelian groups with few adjustments and \citep{DKS} also present another slide attack against a modular addition DESX construction.

Consider the one-key Group Even-Mansour cipher
\begin{align*}
E(x) = P(x \cdot k) \cdot k,
\end{align*}
over a group $G$ with binary operation $\cdot$, where $P$ is a publicly available permutation oracle, $x\in G$, and $k\in_R G$. Define the following values:
\begin{align*}
x=x, \hspace*{5pt} y=x \cdot k, \hspace*{5pt} z= P(y), \hspace*{5pt} w=E(x)=P(x \cdot k) \cdot k.
\end{align*}
We hereby have that $w\cdot y^{-1} = z \cdot x^{-1} $. Consider the attack which follows.
\begin{enumerate}
\item For $d = \sqrt{|G|}$ arbitrary values $x_i\in G$, $i=1,\ldots, d$, and $d$ arbitrary values $y_i\in G$, $i=1,\ldots, d$, query the $E$-oracle on the $x_i$'s and the $P$-oracle on the $y_i$'s. Store the values in a hash table as
\begin{align*}
(E(x_i)\cdot y_i^{-1}, P(y_i)\cdot x_i^{-1}, i),
\end{align*}
sorted by the first coordinate.
\item If there exists a match in the above step, i.e. $E(x_i)\cdot y_i^{-1} = P(y_i)\cdot x_i^{-1}$ for some $i$, check the guess that $k = x_i^{-1}\cdot y_i$.
\end{enumerate}
It can be seen by the Birthday Problem\footnote{Considering the approximation $p(n)\approx \tfrac{n^2}{2m}$, where $p(n)$ is the probability of there being a Birthday Problem collision from $n$ randomly chosen elements from the set of $m$ elements, then $p(\sqrt{|G|})\approx \tfrac{\sqrt{|G|}^2}{2|G|}=1/2$.}, that with non-negligible probability, there must exist a slid pair $(x_i,y_i)$ satisfying the above property, i.e. there exists $1\leq i \leq d$ such that $k = x_i^{-1} \cdot y_i$. For a random pair $(x,y)\in G^2$ it holds that $E(x) = P(y) \cdot x^{-1} \cdot y$ with probability $|G|^{-1}$, so we expect few, if any, collisions in the hash table, including the collision by the slid pair where the correct key $k$ is found. The data complexity of the attack is $d$ $E$-oracle queries and $d$ $P$-oracle queries. Hence the attack bound $d^2 = |G|$, which matches the lower bound given in Theorem~\ref{MainEFP} and Theorem~\ref{PseudoEMbounded}. We have therefore found that our scheme is optimal.

\newpage
\section{Feistel}\label{FeistelSection}
We now consider the Feistel cipher over arbitrary groups, which we will call the Group Feistel cipher. The following is a complement to \citep{PatelRamzanSundaram} who treat the Group Feistel cipher construction with great detail. Our main accomplishment in this section is the settling of an open problem posed by them.

\subsection{Definitions}
We define a Feistel cipher over a group $(G,\cdot)$ as a series of round functions on elements of $G\times G=G^2$.

\begin{defn}
Given an efficiently computable but not necessarily invertible function $f: G \rightarrow G$, called a \textbf{round function}, we define the \textbf{1-round Group Feistel cipher} $\mathcal{F}_{f}$ to be
\begin{align*}
\mathcal{F}_{f}: 	G \times G &\longrightarrow G \times G,\\
				(x,y) &\longmapsto (y, x \cdot f(y)).
\end{align*}
In the case where we have multiple rounds, we index the round functions as $f_i$, and denote the \textbf{$r$-round Group Feistel cipher} by $\mathcal{F}_{f_1,\ldots,f_r}$. We concurrently denote the input to the $i$'th round by $(L_{i-1},R_{i-1})$ and having the output $(L_i,R_i) = (R_{i-1}, L_{i-1}\cdot f_i(R_{i-1}))$, where $L_i$ and $R_i$ respectively denote the left and right parts of the $i$'th output.
\end{defn}

Note that if $(L_i,R_i)$ is the $i$'th round output, we may invert the $i$'th round by setting $R_{i-1}:=L_i$ and then computing $L_{i-1}:= R_i \cdot (f_i(R_{i-1}))^{-1}$ to get $(L_{i-1},R_{i-1})$. As this holds for all rounds, regardless of the invertibility of the round functions, we get that an $r$-round Feistel cipher is invertible for all $r$.

Let $F:G_\lambda \times G \rightarrow G$ be a pseudorandom function. We define the keyed permutation $F^{(r)}$ as
\begin{align*}
F^{(r)}_{k_1,\ldots,k_r}(x,y) \defeq \mathcal{F}_{F_{k_1},\ldots, F_{k_r}}(x,y).
\end{align*}
We sometimes index the keys as $1,2,\ldots, r$, or omit the key index entirely.

\subsection{Results}
For completeness, we show some of the preliminary results for Group Feistel ciphers, not considered in \citep{PatelRamzanSundaram}.

We first note that $F^{(1)}$ is \textit{not} a pseudorandom permutation as 
\begin{align*}
F^{(1)}_{k_1}(L_0,R_0) = (L_1,R_1) = (R_0,L_0\cdot F_{k_1}(R_0)),
\end{align*}
such that any distinguisher $\mathcal{A}$ need only compare $R_0$ to $L_1$.

Also $F^{(2)}$ is \textit{not} a pseudorandom permutation: Consider a pseudorandom function $F$ on $G$. Pick $k_1,k_2\in_R G_\lambda$. Distinguisher $\mathcal{A}$ sets $(L_0,R_0)=(1,g)$ for some $g\in G$, where $1$ is the identity element of $G$, then queries $(L_0,R_0)$ to its oracle and receives,
\begin{center}
$L_2 = L_0 \cdot F_{k_1}(R_0) = F_{k_1}(g)$ and $R_2 = R_0 \cdot F_{k_2}(L_0\cdot F_{k_1}(R_0)) = g \cdot F_{k_2}(F_{k_1}(g))$.
\end{center}
On its second query, the distinguisher $\mathcal{A}$ lets $L_0 \in G \setminus {1}$ but $R_0=g$, such that it receives
\begin{center}
$L_2 = L_0 \cdot F_{k_1}(R_0) = L_0 \cdot F_{k_1}(g)$ and $R_2 = g \cdot F_{k_2}(L_0 \cdot F_{k_1}(g))$.
\end{center}
As $\mathcal{A}$ may find the inverse to elements in $G$, $\mathcal{A}$ acquires $(F_{k_1}(g))^{-1}$, and by so doing, may compute $L_2 \cdot (F_{k_1}(g))^{-1} = L_0$. If $F^{(2)}$ were random, this would only occur negligibly many times, while $\mathcal{A}$ may query its permutation-oracle polynomially many times such that if $L_0$ is retrieved non-negligibly many times out of the queries, $\mathcal{A}$ is able to distinguish between a random permutation and $F^{(2)}$ with non-negligible probability.

As one would expect, the $3$-round Group Feistel cipher (see Figure~\ref{3roundFeistel}) is indeed a pseudorandom permutation. The following proof is based on the proof given in Katz and Lindell \citep{KL} of the analogous result for bit strings with XOR.

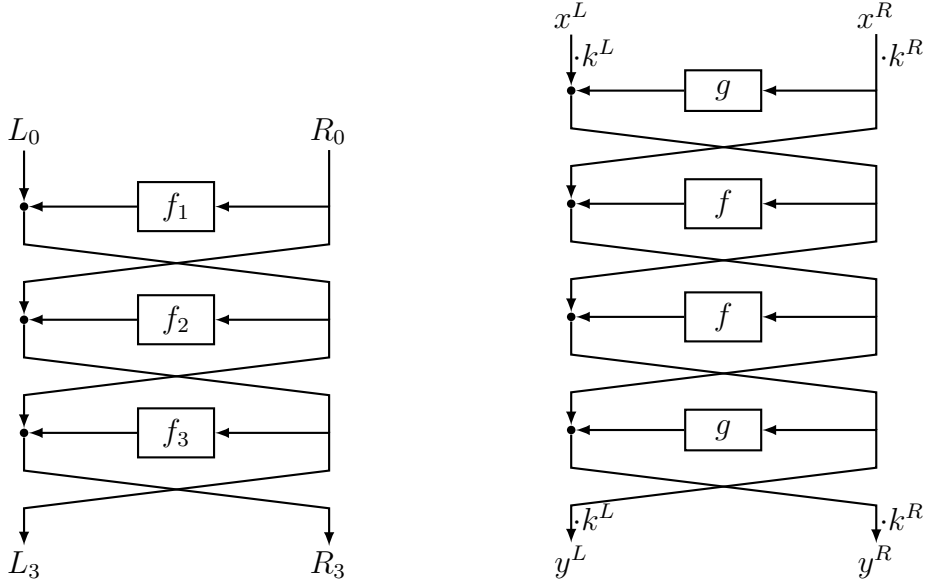
\begin{figure}
\centering
\begin{subfigure}[b]{0.49\textwidth}
\centering
\begin{tikzpicture}

    \tikzstyle{dot} = [
		fill,
		shape=circle,
		minimum size=4pt,
		inner sep=0pt,
	]

    \foreach \z in {1, 2,...,3} {
        \node[draw,thick,minimum width=1cm] (g\z) at ($\z*(0,-1.5cm)$)  {$f_\z$};
        \node (prik\z) [dot, left of = g\z, node distance = 2cm, scale=0.8] {};
        \draw[thick,-latex] (g\z) -- (prik\z);
    
    }
    
    \foreach \z in {1, 2} {
   	 	\draw[thick,latex-latex] (g\z.east) -| +(1.5cm,-0.5cm) -- ($(prik\z) - (0,1cm)$) -- ($(prik\z.north) - (0,1.5cm)$);
   	 	\draw[thick] (prik\z.south) -- ($(prik\z)+(0,-0.5cm)$) -- ($(g\z.east) + (1.5cm,-1cm)$) -- +(0,-0.5cm);
    }

	%% Inputs    
    \node (p0) [above of = g1, minimum width=5cm,minimum height=0.5cm,node distance=1cm] {}; 
    \node (l0) [above of = prik1,node distance=1cm] {$L_0$};
    \node (r0) [right of = l0, node distance = 4cm] {$R_0$};
    \draw[thick,-latex] (l0 |- p0.south) -- (prik1.north);
    \draw[thick] ($(g1.east)+(1.5cm,0)$) -- +(0,0.75cm);

	%% Outputs
    \node (p3) [below of = g3, minimum width=5cm,minimum height=0.5cm,node distance=1.75cm] {}; 
    \node (l3) [below of = prik3,node distance=1.75cm] {$L_3$};
    \node (r3) [right of = l3, node distance = 4cm] {$R_3$};
    \draw[thick,latex-latex] (g3.east) -| +(1.5cm,-0.5cm) -- ($(prik3) - (0,1cm)$) -- (prik3 |- p3.north);
    \draw[thick,-latex] (prik3.south) -- ($(prik3)+(0,-0.5cm)$) -- ($(g3.east) + (1.5cm,-1cm)$) -- +(0,-0.5cm);

\end{tikzpicture}
\captionof{figure}[$3$-round Group Feistel cipher.]{$3$-round Group Feistel cipher.}
\label{3roundFeistel}
\end{subfigure}
\begin{subfigure}[b]{0.49\textwidth}
\centering
\begin{tikzpicture}

    \tikzstyle{dot} = [
		fill,
		shape=circle,
		minimum size=4pt,
		inner sep=0pt,
	]

%%    \foreach \z in {1, 2,...,4} {
%%        \node[draw,thick,minimum width=1cm] (f\z) at ($\z*(0,-1.5cm)$)  {};
%%        \node (xor\z) [dot, left of = f\z, node distance = 2cm, scale=0.8] {};
%%        \draw[thick,-latex] (f\z) -- (xor\z);
    
    \node[draw,thick,minimum width=1cm] (f1) at ($1*(0,-1.5cm)$)  {$g$};
    \node (xor1) [dot, left of = f1, node distance = 2cm, scale=0.8] {};
    \draw[thick,-latex] (f1) -- (xor1);   
    \node[draw,thick,minimum width=1cm] (f2) at ($2*(0,-1.5cm)$)  {$f$};
    \node (xor2) [dot, left of = f2, node distance = 2cm, scale=0.8] {};
    \draw[thick,-latex] (f2) -- (xor2);   
    \node[draw,thick,minimum width=1cm] (f3) at ($3*(0,-1.5cm)$)  {$f$};
    \node (xor3) [dot, left of = f3, node distance = 2cm, scale=0.8] {};
    \draw[thick,-latex] (f3) -- (xor3);   
    \node[draw,thick,minimum width=1cm] (f4) at ($4*(0,-1.5cm)$)  {$g$};
    \node (xor4) [dot, left of = f4, node distance = 2cm, scale=0.8] {};
    \draw[thick,-latex] (f4) -- (xor4);

    \foreach \z in {1, 2,...,3} {
   	 	\draw[thick,latex-latex] (f\z.east) -| +(1.5cm,-0.5cm) -- ($(xor\z) - (0,1cm)$) -- ($(xor\z.north) - (0,1.5cm)$);
   	 	\draw[thick] (xor\z.south) -- ($(xor\z)+(0,-0.5cm)$) -- ($(f\z.east) + (1.5cm,-1cm)$) -- +(0,-0.5cm);
    }

	%% Inputs    
    \node (p0) [above of = f1, minimum width=5cm,minimum height=0.5cm,node distance=1cm] {}; 
    \node (l0) [above of = xor1,node distance=1cm] {$x^L$};
    \node (mid0) [above of = xor1,node distance = .5cm] {};
    \node (kl0) [right of = mid0,node distance=.3cm] {$\cdot k^L$};
    \node (r0) [right of = l0, node distance = 4cm] {$x^R$};
    \node (kr0) [right of = kl0,node distance= 4.05cm] {$\cdot k^R$};
    \draw[thick,-latex] (l0 |- p0.south) -- (xor1.north);
    \draw[thick] ($(f1.east)+(1.5cm,0)$) -- +(0,0.75cm);

	%% Outputs
    \node (p4) [below of = f4, minimum width=5cm,minimum height=0.5cm,node distance=1.75cm] {}; 
    \node (l4) [below of = xor4,node distance=1.75cm] {$y^L$};
    \node (kl4) [below of = kl0,node distance=6.15cm] {$\cdot k^L$};
    \node (r4) [right of = l4, node distance = 4cm] {$y^R$};
    \node (kr4) [right of = kl4,node distance= 4.05cm] {$\cdot k^R$};
    \draw[thick,latex-latex] (f4.east) -| +(1.5cm,-0.5cm) -- ($(xor4) - (0,1cm)$) -- (xor4 |- p4.north);
    \draw[thick,-latex] (xor4.south) -- ($(xor4)+(0,-0.5cm)$) -- ($(f4.east) + (1.5cm,-1cm)$) -- +(0,-0.5cm);

\end{tikzpicture}
\captionof{figure}[Group EM scheme with Feistel.]{Group EM scheme with Feistel.\footnotemark}
\label{GendGenRamPic}
\end{subfigure}
\captionof{figure}{Encryption schemes.}
\end{figure}

\footnotetext{TikZ figure adapted from \citep{TikZhelp}.}

\begin{thm}
If $F$ is a pseudorandom function on $G$, then $F^{(3)}$ is a pseudorandom permutation on $G$.
\end{thm}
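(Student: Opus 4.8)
The plan is to follow the standard two-stage Luby--Rackoff strategy, adapted to the group setting: first pass from the pseudorandom round functions to truly random ones, and then carry out a purely information-theoretic analysis of the three-round cipher built from random round functions.

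For the first stage, fix a $q$-query distinguisher $\mathcal{A}$ (with $q=\mathrm{poly}(\lambda)$) and run a hybrid argument replacing $F_{k_1},F_{k_2},F_{k_3}$ one at a time by independent truly random functions $f_1,f_2,f_3\colon G\to G$. Each replacement changes $\mathcal{A}$'s acceptance probability by at most the PRF-advantage against $F$, since any larger gap would yield a distinguisher for $F$ that simulates the other two Feistel rounds internally. Hence it suffices to bound the advantage of $\mathcal{A}$ against $\mathcal{F}_{f_1,f_2,f_3}$ with random $f_i$, losing only $3\cdot\mathrm{negl}(\lambda)$ in total.

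For the second stage, trace a query $(L_0,R_0)$ through the rounds, recording the two internal ``function inputs'' $R_1=L_0\cdot f_1(R_0)$ (fed to $f_2$) and $R_2=R_0\cdot f_2(R_1)=L_3$ (fed to $f_3$), with output $(L_3,R_3)=(R_2,\ R_1\cdot f_3(R_2))$. Across the $q$ distinct queries I would define two bad events: $\mathsf{Bad}_1$, that the $R_1^i$ collide, and $\mathsf{Bad}_2$, that the $R_2^i$ collide. If $R_0^i=R_0^j$, then distinctness of the queries forces $L_0^i\neq L_0^j$ and hence $R_1^i\neq R_1^j$ automatically; if $R_0^i\neq R_0^j$, then $f_1(R_0^i)$ and $f_1(R_0^j)$ are independent and uniform, so $R_1^i=R_1^j$ occurs with probability $1/|G|$, giving $\Pr[\mathsf{Bad}_1]\le\binom{q}{2}/|G|$. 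Conditioned on $\neg\mathsf{Bad}_1$ the values $f_2(R_1^i)$ are i.i.d.\ uniform, so the $L_3^i=R_0^i\cdot f_2(R_1^i)$ are i.i.d.\ uniform (left multiplication by the fixed $R_0^i$ is a bijection of $G$), whence $\Pr[\mathsf{Bad}_2\mid\neg\mathsf{Bad}_1]\le\binom{q}{2}/|G|$. Conditioned on $\neg\mathsf{Bad}_1\wedge\neg\mathsf{Bad}_2$, the inputs $R_2^i$ to $f_3$ are distinct, so $f_3(R_2^i)$ are i.i.d.\ uniform and independent of everything already fixed, making $(L_3^i,R_3^i)$ exactly i.i.d.\ uniform on $G^2$, i.e.\ distributed precisely as a random function evaluated on distinct inputs. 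Thus $\mathcal{F}_{f_1,f_2,f_3}$ is statistically $O(q^2/|G|)$-close to a random function on $G^2$, and since a random function and a random permutation on a set of size $|G|^2$ differ by only $O(q^2/|G|^2)$ for a $q$-query distinguisher, the construction is $O(q^2/|G|)$-close to a random permutation; with $|G|\ge 2^\lambda$ this is negligible.

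The main obstacle is the conditioning argument in the second stage: one must verify that the \emph{only} way to spoil uniformity of the output is a collision at the input of $f_2$ or at the input of $f_3$, and that away from those collisions the three random functions inject exactly enough fresh randomness to make the outputs uniform and mutually independent. The group-theoretic point to watch --- the only place where the generalization beyond $(\mathbb{Z}/2)^n$ bites --- is that every uniformity and collision-probability estimate must be phrased using only that left (or right) multiplication by a fixed element is a bijection of $G$; commutativity is never needed, but one must keep each factor on the correct side throughout.
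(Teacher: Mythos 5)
Your proposal is correct and follows the same two-stage skeleton as the paper's proof: the hybrid replacement of $F_{k_1},F_{k_2},F_{k_3}$ by truly random $f_1,f_2,f_3$, and then the collision analysis at the inputs to $f_2$ and $f_3$ (your $\mathsf{Bad}_1$, $\mathsf{Bad}_2$ are exactly the paper's ``collision at $R_1$'' and ``collision at $R_2$''), with the same probability bounds and the same observation that only bijectivity of one-sided multiplication is needed, never commutativity. Where you genuinely diverge is the final comparison with the ideal object. The paper compares $F^{(3)}$ to a random permutation directly, arguing informally that the distinguisher's best strategy is to look for a collision among the $L_3^i$ and that this succeeds only negligibly often under either oracle; this is the weakest part of the paper's argument (it even treats the random permutation's outputs as uniform \emph{and independent}, and assigns a full-pair collision probability $q(q-1)/|G|$ to a permutation, although distinct inputs to a permutation never collide). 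You instead interpose a random function: conditioned on the good events the Feistel transcript matches a random function's, giving statistical distance $O(q^2/|G|)$, after which the random-function/random-permutation switching lemma on a domain of size $|G|^2$ costs only a further $O(q^2/|G|^2)$. This route converts the paper's heuristic closing step into a genuine quantitative bound. One small repair you should make: conditioned on $\neg\mathsf{Bad}_1\wedge\neg\mathsf{Bad}_2$ the outputs are \emph{not} exactly i.i.d.\ uniform on $G^2$, since the conditioning forces the $L_3^i$ to be distinct; the precise statement is that the conditional Feistel transcript equals the transcript of a random function conditioned on having no left-half collision, and since that event also has probability $1-O(q^2/|G|)$, your total-variation bound survives unchanged.
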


\begin{proof}
Assume we have chosen uniform keys $k_1, k_2, k_3\in G$. We replace our pseudorandom functions in $F^{(3)}$ by random functions $f_i$ for $i=1,2,3$, which we may by a standard argument: Simply consider an adversary using such a distinguisher as a subroutine. The adversary has an oracle-triple of either three random functions or three independently keyed pseudorandom functions and the reduction easily follows.

Let us continue our proof by letting $\mathcal{A}$ be a probabilistic distinguisher with only polynomially many queries to its permutation-oracle, which is either $F^{(3)}$ or $\pi\in_R \mathfrak{P}_{G^2\rightarrow G^2}$, with equal probability. We wish to show that
\begin{align*}
\left| \underset{f_1,f_2,f_3\in_R \mathfrak{F}_{G\rightarrow G}}{Pr}\left[ \mathcal{A}^{\mathcal{F}_{f_1,f_2,f_3}(\cdot)}(n)=1 \right] - \underset{\pi \in_R \mathfrak{P}_{G^2 \rightarrow G^2}}{Pr}\left[ \mathcal{A}^{\pi(\cdot)}(n)=1 \right] \right|,
\end{align*}
is negligible over uniform and independent choices of random functions $f_1,f_2,$ and $f_3$ in $\mathfrak{F}_{G\rightarrow G}$, where the $n\in \mathbb{N}$ denotes the security parameter (written in unary, which decides the choice of $G$.) We assume that $q(n)=q$ is the polynomial bound on the permutation-oracle, $\mathcal{O}$, queries that $\mathcal{A}$ petitions and let $Q=\lbrace 1,\ldots, q\rbrace$. We may assume that $\mathcal{A}$ never queries the same pair to the $\mathcal{O}$-oracle twice. We let $(L_0^i,R_0^i),(L_1^i,R_1^i),(L_2^i,R_2^i),$ and $(L_3^i,R_3^i)$ denote the respectively $0$th, $1$st, $2$nd, and $3$rd intermediate values of the $i$th query. In this notation, $\mathcal{A}$ only ever sees the corresponding $(L_0^i,R_0^i)$ and $(L_3^i,R_3^i)$ pairs of values upon oracle query.
We wish to prove that the Feistel cipher is collision-free, such that the distinguisher cannot win on there being a collision.

\textbf{Collision at $R_1$:}
If $R_1^i = R_1^j$ for some $i\neq j \in Q$, we say that there is a \textit{collision at $R_1$}. We now consider when such a collision might occur. Assume there exist fixed, distinct $i,j\leq q$ such that $R_0^i = R_0^j$, then, as we assumed $\mathcal{A}$ does not query the same pair twice, $L_0^i \neq L_0^j$, such that
\begin{align*}
R_1^i = L_0^i \cdot f_1(R_0^i) \neq L_0^j \cdot f_1(R_0^j) = R_1^j,
\end{align*}
i.e. there is no collision at $R_1$. If instead $R_0^i \neq R_0^j$, then $f_1(R_0^i)$ and $f_1(R_0^j)$ are uniformly distributed and independent, as we assumed that $f_1$ was random, and so
\begin{align*}
Pr\left[ L_0^i \cdot f_1(R_0^i) = L_0^j \cdot f_1(R_0^j) \right] = Pr\left[ f_1(R_0^j) = (L_0^j)^{-1}\cdot L_0^i \cdot f_1(R_0^i) \right] = \frac{1}{|G|}.
\end{align*}
By the union bound over the $q$ values of $i$ and $q$ values of $j$ for distinct $i,j\in Q$, we have that,
\begin{align*}
Pr\left[ \text{ Collision at } R_1 \right] \leq q^2/|G|,
\end{align*}
which, for large enough $|G|$, is negligible.

\textbf{Collision at $R_2$:}
As with the case of $R_1$, if $R_2^i = R_2^j$ for some $i\neq j\in Q$, we say that there is a \textit{collision at $R_2$}. Obviously, if we condition a collision at $R_2$ by a collision at $R_1$, then the probability of a collision at $R_2$ must be negligible as a collision at $R_1$ is negligible. Let us therefore condition a collision at $R_2$ on there being \textit{no} collision at $R_1$, we wish to prove that also this probability is negligible. Consider again fixed and distinct $i,j\in Q$. By the assumption of no collision at $R_1$, we have $R_1^i \neq R_1^j$. Hence $f_2(R_1^i)$ and $f_2(R_1^j)$ are uniformly distributed and independent because we assumed that $f_2$ was random. Furthermore, $f_1$ and $f_2$ are independent, so one easily sees that
\begin{align}\label{CollatR2}
Pr\left[ L_1^i \cdot f_2(R_1^i) = L_1^j \cdot f_2(R_1^j) \mid \text{No collision at } R_1 \right] = \frac{1}{|G|},
\end{align}
such that again, by taking the union bound over all distinct $i,j\in Q$,
\begin{align*}
Pr\left[ \text{ Collision at } R_2 \mid \text{No collision at } R_1 \right] \leq q^2/|G|,
\end{align*}
which, for large enough $|G|$, is negligible.

Now, conditioned on there being \textit{no} collision at $R_1$, for all distinct $i,j\in Q$, we have that $L_3^i = R_2^i = L_1^i \cdot f_2(R_1^i)$ is independent of $L_3^j$, and uniformly distributed in $G$ (see (\ref{CollatR2})). Additionally, conditioning on there being \textit{no} collision at $R_2$, we get that $L_3^1,\ldots,L_3^q$ are uniformly distributed among all sequences of distinct values in $G$. Again conditioned on there being \textit{no} collision at $R_2$ and using that $L_2^i$ is independent of $L_3^i$ and that $f_3$ is random and independent over uniform and independent input, we similarly see that for $R_3^i = L_2^i \cdot f_3(R_2^i)$, the $q$ values $R_3^1, \ldots, R_3^q$ are uniformly distributed in $G$ and independent of each other, as well as independent of the $L_3^1,\ldots, L_3^q$.

So, when querying the $F^{(3)}$ instantiated $\mathcal{O}$-oracle (with uniform and independent round functions) with $q$ distinct inputs, the outputs $(L_3^1,R_3^1),\ldots,(L_3^q,R_3^q)$ are, except with negligible probability, distributed such that the $\lbrace L_3^i \rbrace$ are uniform and independent elements of $G$ and also the $\lbrace R_3^i \rbrace$ are uniform and independent elements of $G$. However, if the $\mathcal{O}$-oracle is instantiated with a random permutation, the outputs $(L_3^1,R_3^1),\ldots,(L_3^q,R_3^q)$ are uniform and independent elements of $G\times G$. Hence, the distinguisher's best attack would be to guess that the $\mathcal{O}$-oracle is a random permutation if there exist distinct $i,j \in Q$ such that $L_3^i = L_3^j$, which only occurs with negligible probability. Specifically, for a $3$-round Feistel network, $F^{(3)}$, using uniform and independent round functions, there exists a negligible function $negl_1(\cdot)$, such that
\begin{align*}
\underset{F^{(3)}}{Pr}\left[ (L_3^i,R_3^i) = (L_3^j,R_3^j) | i\neq j\in Q \right] \leq negl_1(n),
\end{align*}
for large enough $n$, and for a random permutation $\pi$, there exists a negligible function $negl_2(\cdot)$, such that
\begin{align*}
\underset{\pi}{Pr}\left[ (L_3^i,R_3^i) = (L_3^j,R_3^j) | i\neq j\in Q \right] = \frac{q\cdot (q-1)}{|G|} \leq negl_2(n),
\end{align*}
for large enough $n$. But then, there exists a negligible function $negl(\cdot)$, such that
\begin{footnotesize}
\begin{align*}
\left| \underset{\mathcal{F}}{Pr}\left[ (L_3^i,R_3^i) = (L_3^j,R_3^j) | i\neq j\in Q \right] - \underset{\pi}{Pr}\left[ (L_3^i,R_3^i) = (L_3^j,R_3^j) | i\neq j\in Q \right] \right| \leq negl(n),
\end{align*}
\end{footnotesize}
for large enough $n$. As any collision based attack will not give the distinguisher a non-negligible advantage $F^{(3)}$ must be a pseudorandom permutation.
\end{proof}

Among the considerations in \citep{PatelRamzanSundaram}, they showed that the $3$-round Feistel cipher over abelian groups was not super pseudorandom, but left as an open problem a proof over non-abelian groups. We present such a proof now.

\begin{prop}
The $3$-round Group Feistel cipher is not super pseudorandom.
\end{prop}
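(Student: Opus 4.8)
The plan is to exhibit an explicit, constant-query distinguisher that separates $F^{(3)}$ from a uniformly random permutation on $G^2$, crucially exploiting the inverse oracle. Since we have just shown that $F^{(3)}$ \emph{is} a pseudorandom permutation, any successful attack must query the inverse oracle; accordingly I will use two forward queries sharing a common right half, followed by one carefully crafted inverse query whose answer is forced by the first two in the Feistel world but essentially free in the random world. As in the proof that $F^{(3)}$ is a PRP, I first replace the round functions $f_1,f_2,f_3$ by independent truly random functions via the standard reduction, so it suffices to distinguish $\mathcal{F}_{f_1,f_2,f_3}$ from random.

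Concretely, I would fix $r \in G$ and distinct $a_1,a_2 \in G$, and forward-query $(a_1,r)$ and $(a_2,r)$, obtaining $(S_1,T_1)$ and $(S_2,T_2)$. Writing out the intermediate values from the round definition gives $R_1^i = a_i \cdot f_1(r)$, then $S_i = r\cdot f_2(R_1^i)$ and $T_i = R_1^i \cdot f_3(S_i)$ for $i=1,2$. The decisive observation is the cancellation
\[
R_1^1\cdot (R_1^2)^{-1} = a_1\cdot f_1(r)\cdot f_1(r)^{-1}\cdot a_2^{-1} = a_1\cdot a_2^{-1},
\]
so this quantity is observable even though $f_1(r)$ is unknown. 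I then issue the single inverse query on the ciphertext $\bigl(S_2,\; a_1 a_2^{-1}\cdot T_2\bigr)$ and call its answer $(\hat a,\hat r)$.

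The core of the argument is to trace this inverse query through the three unwound rounds and show that, for the Feistel, $\hat r$ is completely determined by observed values. Using $f_3(S_2)^{-1} = (T_2)^{-1}\cdot R_1^2$ (read off from $T_2 = R_1^2\cdot f_3(S_2)$), the second component unwinds to $R_1 = a_1 a_2^{-1}\cdot T_2 \cdot (T_2)^{-1}\cdot R_1^2 = R_1^1$; substituting $f_2(R_1^1)^{-1} = (S_1)^{-1}\cdot r$ (read off from $S_1 = r\cdot f_2(R_1^1)$) then yields $\hat r = S_2\cdot (S_1)^{-1}\cdot r$. Hence the distinguisher outputs $1$ iff $\hat r = S_2 (S_1)^{-1} r$: under $F^{(3)}$ this holds with probability $1$, whereas for a random permutation the inverse-query point $(S_2, a_1 a_2^{-1} T_2)$ differs from both previously seen ciphertexts except with negligible probability (note $a_1 a_2^{-1}T_2 \neq T_2$ since $a_1\neq a_2$), so its preimage is near-uniform and the check succeeds with probability $O(1/|G|)$. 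This gives advantage $1 - O(1/|G|)$, which is non-negligible.

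The main obstacle, and the whole point of generalizing beyond the abelian case of \citep{PatelRamzanSundaram}, is to verify that the two cancellations above survive without commutativity: the attack works only because the unknown factor $f_1(r)$ cancels on the \emph{right} in $R_1^1 (R_1^2)^{-1}$, and the unknown $f_3(S_2)$ cancels when the inverse-query's multiplier is placed as $a_1 a_2^{-1}$ on the \emph{left} of $T_2$. I would need to be careful that every rewrite respects the left/right placement dictated by the round map $(x,y)\mapsto (y,x\cdot f(y))$, and to dispatch the negligible-probability degenerate cases (e.g. $S_1=S_2$, or the inverse point accidentally coinciding with a forward output) in the random-permutation analysis. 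Specializing $\cdot$ to $\oplus$ recovers exactly the known bit-string attack, which is a useful consistency check.
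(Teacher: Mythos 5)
Your proposal is correct and is essentially the same attack as the paper's: under the mapping $(a_1,r)\mapsto(L_0,R_0)$, $(a_2,r)\mapsto(L'_0,R_0)$, $(S_i,T_i)\mapsto(L_3,R_3),(L'_3,R'_3)$, your inverse query $(S_2,\,a_1a_2^{-1}\cdot T_2)$ and acceptance test $\hat r = S_2\cdot S_1^{-1}\cdot r$ coincide exactly with the paper's query $(L'_3,\,L_0\cdot(L'_0)^{-1}\cdot R'_3)$ and test $R''_0 = L'_3\cdot(L_3)^{-1}\cdot R_0$, and both proofs verify the same non-abelian cancellations of $f_3$ and $f_2$ with the same probability-$1$ versus probability-$O(1/|G|)$ conclusion.
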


\begin{proof}
The proof is a counter-example using the following procedure:
\begin{enumerate}
\item Choose two oracle-query pairs in $G\times G$: $(L_0,R_0)$ and $(L'_0,R_0)$ where $L_0\neq L'_0$.
\item Query the encryption oracle to get $(L_3,R_3)$ and $(L'_3,R'_3)$.
\item Query $(L''_3,R''_3)= (L'_3,L_0\cdot (L'_0)^{-1} \cdot R'_3)$ to the decryption oracle.
\item If $R''_0=L'_3\cdot (L_3)^{-1} \cdot R_0$, guess that the oracle is $F^{(3)}$, else guess random.
\end{enumerate}
For $F^{(3)}$, this algorithm succeeds with probability $1$. For a random permutation, this algorithm succeeds negligibly often.

Explicitly, assume the $\mathcal{O}$-oracle is instantiated using $F^{(3)}$ with the three pseudorandom functions $F_1,F_2,F_3$. If we choose the two $\mathcal{O}$-oracle query pairs in $G\times G$: $(L_0,R_0)$ and $(L'_0,R'_0) = (L'_0,R_0)$ where $L_0 \neq L'_0$, then we get that
\begin{align*}
(L_0,R_0) &\overset{\mathcal{O}}\longmapsto (L_3,R_3)=(R_0\cdot F_2(L_0 \cdot F_1(R_0)),L_0\cdot F_1(R_0)\cdot F_3(L_3)) \\
(L'_0,R_0) &\overset{\mathcal{O}}\longmapsto (L'_3,R'_3) = (R_0\cdot F_2(L'_0 \cdot F_1(R_0)),L'_0\cdot F_1(R_0)\cdot F_3(L'_3)).
\end{align*}
Take the $\mathcal{O}^{-1}$-oracle query pair in $G\times G$:
\begin{align*}
(L''_3,R''_3) = (L'_3,L_0\cdot (L'_0)^{-1} \cdot R'_3),
\end{align*} 
which is different from $L'_3,R'_3$ as $L_0 \neq L'_0$ such that $L_0\cdot (L'_0)^{-1} \neq 1$. We then get the following by following the steps of the $\mathcal{O}^{-1}$-oracle.
\begin{align*}
L''_3 &:= L'_3 = R_0 \cdot F_2(L'_0 \cdot F_1(R_0)), \\
R''_3 &:= L_0 \cdot (L'_0)^{-1} \cdot R'_3 \\
	&= L_0 \cdot (L'_0)^{-1} \cdot L'_0\cdot F_1(R_0)\cdot F_3(L'_3), \\
	&= L_0\cdot F_1(R_0)\cdot F_3(L'_3).
\end{align*}
As $R''_2 := L''_3=L'_3$ and $L''_2 := R''_3 \cdot (F_3(R''_2))^{-1}$, we get that
\begin{align*}
L''_2 &:= R''_3 \cdot (F_3(R''_2))^{-1} \\
	&= L_0\cdot F_1(R_0)\cdot F_3(L'_3) \cdot (F_3(L'_3))^{-1} \\
	&= L_0\cdot F_1(R_0).
\end{align*}
Using that $L''_1 := R''_2 \cdot (F_2(R''_1))^{-1}$, $R''_2 = L'_3$, and $R''_1 := L''_2$, we get that
\begin{align*}
L''_1 &:= R''_2 \cdot (F_2(R''_1))^{-1} \\
	&= L'_3 \cdot (F_2(L_0\cdot F_1(R_0)))^{-1} \\
	&= R_0 \cdot F_2(L'_0 \cdot F_1(R_0)) \cdot (F_2(L_0\cdot F_1(R_0)))^{-1}.
\end{align*}
Using that $R_0$ is known such that we may find its inverse, we hereby get that
\begin{align*}
R''_0 := L''_1 &= R_0 \cdot F_2(L'_0 \cdot F_1(R_0)) \cdot (F_2(L_0\cdot F_1(R_0)))^{-1} \\
			&= R_0 \cdot F_2(L'_0 \cdot F_1(R_0)) \cdot (F_2(L_0\cdot F_1(R_0)))^{-1} \cdot R_0^{-1} \cdot R_0 \\
			&= R_0 \cdot F_2(L'_0 \cdot F_1(R_0)) \cdot (R_0 \cdot F_2(L_0\cdot F_1(R_0)))^{-1} \cdot R_0 \\
			&= L'_3 \cdot (L_3)^{-1} \cdot R_0.
\end{align*}
The elements $L'_3, L_3,$ and $R_0$ are all known values, and so, by querying the $\mathcal{O}$-oracle and its inverse oracle with the above queries, we may determine $R''_0$ with probability $1$ if the oracle is the Feistel network, and with negligible probability if the inverse oracle were random. Hence, we have demonstrated an attack on $F^{(3)}$ such that it cannot be a super pseudorandom permutation over any group $G$.
\end{proof}

For super pseudorandomness of the $4$-round Group Feistel cipher, we refer the reader to \citep{PatelRamzanSundaram}. In the paper, they show a strong result using certain hash functions as round functions, from which the following is a corollary.

\begin{cor}
Let $G$ be a group, with characteristic other than $2$, and let $f,g: G_\lambda \times G \rightarrow G$ be pseudorandom functions. Then, for any adversary $\mathcal{A}$ with polynomially many queries to its $E/D$-oracles, the family $\mathcal{P}$ of permutations on $G\times G$ consisting of permutations of the form $F^{(4)}=\mathcal{F}_{g,f,f,g}$ are indistinguishable from random, i.e. super pseudorandom permutations (SPRPs).
\end{cor}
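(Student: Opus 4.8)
The plan is to treat this as a reduction from the information-theoretic theorem of \citep{PatelRamzanSundaram}: first I would peel off the pseudorandomness of $f$ and $g$, replacing them by truly random functions, and then import their collision analysis for the resulting idealized construction. The whole argument is a triangle inequality between three experiments, with the genuine difficulty quarantined inside the cited result.

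First I would replace the two pseudorandom functions by truly random functions through a standard two-stage hybrid. Let $f^{*},g^{*}$ be independent uniformly random functions $G\rightarrow G$, and let $H_0=\mathcal{F}_{g,f,f,g}$, $H_1=\mathcal{F}_{g^{*},f,f,g^{*}}$, and $H_2=\mathcal{F}_{g^{*},f^{*},f^{*},g^{*}}$ denote the three corresponding keyed permutations on $G\times G$, each accessed by $\mathcal{A}$ together with its inverse. The observation that makes both the $E$- and the $D$-oracle simulatable is the invertibility remark following the definition of the Group Feistel cipher: each round is inverted using only a \emph{forward} evaluation of its round function. Hence a PRF-distinguisher holding an oracle for a candidate $g$ and a self-chosen key for $f$ can answer every encryption \emph{and} every decryption query of $\mathcal{A}$ by running the four rounds forward, so that $|\Pr[\mathcal{A}=1\text{ in }H_0]-\Pr[\mathcal{A}=1\text{ in }H_1]|$ is bounded by the PRF-advantage against $g$, and symmetrically $|\Pr[\mathcal{A}=1\text{ in }H_1]-\Pr[\mathcal{A}=1\text{ in }H_2]|$ by the PRF-advantage against $f$; the random functions are produced by lazy sampling, which is efficient because $\mathcal{A}$ makes only polynomially many queries.

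Having reduced to $H_2=\mathcal{F}_{g^{*},f^{*},f^{*},g^{*}}$, I would then invoke the theorem of \citep{PatelRamzanSundaram}. Their result is stated for round functions drawn from a suitable almost-universal hash family; the point is that a uniformly random function $G\rightarrow G$ is a perfect such hash, since for distinct $x,x'$ the difference $h(x)^{-1}\cdot h(x')$ is uniform on $G$, so the idealized construction satisfies their hypotheses and their advantage bound, of order $q^2/|G|$ for $q$ queries, applies verbatim. Combining the two hybrid steps with this bound by the triangle inequality gives a total distinguishing advantage of $O(q^2/|G|)+negl(\lambda)$, which is negligible for any polynomial $q=q(\lambda)$ because $|G_\lambda|\geq 2^\lambda$; this establishes the SPRP property.

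The genuinely hard part is not the reduction above but the collision analysis inside \citep{PatelRamzanSundaram} that we are importing, and this is precisely where the hypothesis that $G$ has characteristic other than $2$ enters. Because the construction reuses the same function in the two outer rounds and in the two inner rounds (the $g,f,f,g$ pattern, rather than four independent functions), the analysis must rule out a degenerate internal collision whose probability would otherwise be uncontrolled; over a group of exponent $2$, where every element is its own inverse, the relevant difference can vanish identically and the bound breaks. I would therefore flag that the characteristic assumption is essential rather than cosmetic, and defer the detailed case analysis of the intermediate values to \citep{PatelRamzanSundaram}.
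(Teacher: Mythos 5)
Your proposal matches the paper's treatment: the paper gives no independent proof of this corollary, deferring entirely to \citep{PatelRamzanSundaram}, and your two-stage PRF-to-random-function hybrid followed by an invocation of their information-theoretic bound (with uniformly random functions serving as the required universal hash family, and the characteristic-other-than-$2$ hypothesis handling the reuse of $g$ in the outer rounds) is precisely the standard argument that this citation leaves implicit. No gaps; this is essentially the same approach.
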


\newpage
\section{Implementing the Group Even-Mansour Scheme}\label{ImplementEM}
Now that we have shown that both the Even-Mansour scheme and the Feistel cipher are generalizable to arbitrary groups, we might consider how to implement one given the other. Gentry and Ramzan \citep{GentryRamzan} considered exactly this for the two-key EM scheme over $(\mathbb{Z}/2\mathbb{Z})^n$. However, their paper only had sketches of proofs and refer to another edition of the paper for full details. As we are unable to find a copy in the place that they specify it to exist, and as we generalize their result non-trivially, we have decided to fill in the details while generalizing their proof.

In this section, we consider a generalized version of the Gentry and Ramzan \citep{GentryRamzan} construction, namely, the Group Even-Mansour scheme on $G^2$ instantiated with a $4$-round Group Feistel cipher as the public permutation:
\begin{align*}
\Psi_{k}^{f,g}(x)=\mathcal{F}_{g,f,f,g}(x\cdot k)\cdot k,
\end{align*}
where $k=(k^L,k^R)\in G^2$ is a key consisting of two subkeys, chosen independently and uniformly at random, and $f$ and $g$ are round functions on $G$, modelled as random function oracles, available to all parties, including the adversary. We consider the operation $x\cdot k$ for $x=(x^L,x^R)\in G^2$, to be the coordinate-wise group operation, but do not otherwise discern between it and the group operation $\cdot$ on elements of $G$. In the following, we shall follow the proof in \citep{GentryRamzan} closely. However, we make quite a few modifications, mostly due to the nature of our generalization. Note that we consider a one-key scheme, as opposed to the two-key version in \citep{GentryRamzan} (see Figure~\ref{GendGenRamPic}.) Our main theorem for this section is the following.

\begin{thm}\label{GentryRamzanMain}
Let $f,g$ be modelled as random oracles and let the subkeys of $k=(k^L,k^R)\in G^2$ be chosen independently and uniformly at random. Let $\Psi_k^{f,g}(x)=\mathcal{F}_{g,f,f,g}(x\cdot k)\cdot k$, and let $R\in_R \mathfrak{P}_{G^2 \rightarrow G^2}$. Then, for any probabilistic $4$-oracle adversary $\mathcal{A}$ with at most
\begin{itemize}
\item $q_c$ queries to $\Psi$ and $\Psi^{-1}$ (or $R$ and $R^{-1}$),
\item $q_f$ queries to $f$, and
\item $q_g$ queries to $g$,
\end{itemize}
we have
\begin{align*}
&\left| Pr\left[ \mathcal{A}^{\Psi,\Psi^{-1},f,g} = 1\right] - Pr\left[ \mathcal{A}^{R,R^{-1},f,g} = 1\right] \right| \\
	&\hspace*{20pt}\leq (2q_c^2 +4q_fq_c + 4q_gq_c + q_c^2 - q_c)|G|^{-1} + 2\cdot \begin{pmatrix}
	q_c \\ 2
	\end{pmatrix}(2|G|^{-1} + |G|^{-2}).
\end{align*}
\end{thm}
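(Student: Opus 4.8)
The plan is to interpolate between the real world, in which $\mathcal{A}$ interacts with $(\Psi,\Psi^{-1},f,g)$, and the ideal world $(R,R^{-1},f,g)$, through a short sequence of games in which all four oracles are lazily sampled and each transition is charged to a single \emph{bad} event, exactly in the spirit of the Even--Mansour analysis in the proof of Theorem~\ref{PseudoEMbounded}. The essential difficulty, and the reason four Feistel rounds are needed, is that the round functions $f,g$ are \emph{public}: $\mathcal{A}$ can compute the public permutation $\mathcal{F}_{g,f,f,g}$ and its inverse on any input it likes, so the naive modular argument ``the $4$-round Feistel is an SPRP, hence Even--Mansour over a random permutation applies'' is unavailable — a Feistel with public round functions is trivially distinguishable from a random permutation, since the adversary can simply recompute it through $f$ and $g$. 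Security must therefore come entirely from the secret outer key $k=(k^L,k^R)$, which hides the \emph{actual} Feistel input $x\cdot k$ and output from $\mathcal{A}$.

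First I would fix notation for the internal wires of a $\Psi$- (or $\Psi^{-1}$-) query on the $i$th input: writing $(L_0^i,R_0^i)=x_i\cdot k$, the four round-function evaluations occur at $R_0^i$ (to $g$), $R_1^i$ (to $f$), $R_2^i$ (to $f$), and $R_3^i$ (to $g$), with $(L_j^i,R_j^i)=(R_{j-1}^i, L_{j-1}^i\cdot f_j(R_{j-1}^i))$ and the reply being $(L_4^i,R_4^i)\cdot k$. The crux is a \textbf{decoupling} game in which $\Psi$-queries are answered by running the Feistel with \emph{fresh} internal round-function values, drawn independently of the table recording $\mathcal{A}$'s \emph{direct} $f$- and $g$-queries (and never written back into it). The real and decoupled games proceed identically unless one of $\mathcal{A}$'s direct queries lands on an internal wire, or an internal wire lands on a previously queried point: the outer wires $R_0^i,R_3^i$ are randomised by the secret subkeys, while the inner wires $R_1^i,R_2^i$ are randomised by the fresh round-function outputs, so each coincidence has probability $|G|^{-1}$. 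Summing over the at most $q_f$ direct $f$-queries against the two internal $f$-wires per evaluation, and the $q_g$ direct $g$-queries against the two internal $g$-wires, over all $q_c$ evaluations, yields the $4q_fq_c|G|^{-1}$ and $4q_gq_c|G|^{-1}$ contributions.

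After decoupling, the $\Psi$-oracle is an honest $4$-round Feistel with independent uniform round functions, wrapped by the secret Even--Mansour key; here I would invoke the Luby--Rackoff-style collision analysis underlying the $4$-round SPRP result of \citep{PatelRamzanSundaram} together with the Even--Mansour argument of Theorem~\ref{PseudoEMbounded} transported to the group $G^2$. Conditioning on the absence of internal collisions — distinct evaluations agreeing at $R_1$, $R_2$, or $R_3$, each occurring with probability $|G|^{-1}$ per pair by the randomness of the round functions — the left halves $L_4^i$ and right halves $R_4^i$ are jointly uniform and distinct across the $q_c$ queries, so the wrapped replies are distributed exactly as those of a random permutation $R$ on $G^2$ up to the usual Even--Mansour output-collision correction. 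These collision counts produce the $2q_c^2|G|^{-1}$ term, the $(q_c^2-q_c)|G|^{-1}=2\binom{q_c}{2}|G|^{-1}$ Even--Mansour-level term, and the $2\binom{q_c}{2}(2|G|^{-1}+|G|^{-2})$ correction accounting for the difference between a random function and a random permutation on $G^2=G\times G$.

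Finally I would assemble the bound: conditioned on none of the bad events, the real and ideal transcripts are identically distributed, so $\mathcal{A}$'s advantage is at most the sum of the per-transition bad-event probabilities, which is exactly the claimed expression. The step I expect to be the main obstacle is the decoupling transition: one must verify simultaneously that (i) conditioned on no coincidence between direct and internal queries the real and decoupled games are perfectly indistinguishable — which requires checking all four oracle interfaces and both query directions for hidden consistency constraints — and (ii) that every internal wire is genuinely unpredictable to $\mathcal{A}$, drawing on the secret key for the outer wires $R_0,R_3$ and on the freshness of the round functions for the inner wires $R_1,R_2$. Getting the bookkeeping of these coincidences to match the stated constants (the factors of $4$, and the separation of the $|G|^{-1}$ from the $|G|^{-2}$ corrections over $G^2$) is where the care of \citep{GentryRamzan}'s argument, adapted to the one-key group setting, is concentrated.
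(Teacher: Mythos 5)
Your proposal is correct and follows essentially the same route as the paper's proof, which (following Gentry--Ramzan and Naor--Reingold) interpolates $\Psi \to \tilde{\Psi} \to \tilde{R} \to R$, where $\tilde{\Psi}$ is exactly your decoupling game (internal round functions kept consistent across cipher queries but the adversary's direct $g$-oracle replaced by an independent $h$), $\tilde{R}$ answers cipher queries by lazy uniform sampling, and the transitions are charged to the same bad events you identify: direct-query-versus-internal-wire collisions (the paper's $\textsf{BadG}$ and events \textbf{B4}--\textbf{B5}), internal $f$-input collisions (\textbf{B1}--\textbf{B3}), and a random-function-to-random-permutation switch costing $\binom{q_c}{2}|G|^{-2}$. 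The only differences are organizational rather than substantive: the paper charges the direct-$f$-query coincidences in the second transition instead of the first and proves the conditional uniformity directly at the transcript level (Lemma~\ref{Lem43}) rather than citing the Luby--Rackoff/Even--Mansour analyses, and the factors of $4$ you worried about arise from the paper's doubling step in the final assembly, not from counting wires.
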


\subsection{Definitions}
Before we can begin the proof, we will need several definitions all of which are identical to the \citep{GentryRamzan} definitions, up to rewording.

\begin{defn}
Let $P$ denote the permutation oracle (either $\Psi$ or $R$), $\mathcal{O}^f$ and $\mathcal{O}^g$ the $f$ and $g$ oracles, respectively. We get the transcripts: $T_P$, the set of all $P$ queries, $T_f$, the set of all $f$ queries, and $T_g$, the set of all $g$ queries, i.e. the sets
\begin{align*}
T_P &= \lbrace \langle x_1,y_1 \rangle, \langle x_2,y_2 \rangle, \cdots, \langle x_{q_c},y_{q_c} \rangle \rbrace_P, \\
T_f &= \lbrace \langle x'_1,y'_1 \rangle, \langle x'_2,y'_2 \rangle, \cdots, \langle x'_{q_f},y'_{q_f} \rangle \rbrace_f, \\
T_g &= \lbrace \langle x''_1,y''_1 \rangle, \langle x''_2,y''_2 \rangle, \cdots, \langle x''_{q_g},y''_{q_g} \rangle \rbrace_g.
\end{align*}
We discern between two types of oracle queries: Cipher queries $(+,x)=P(x)$ and $(-,y)=P^{-1}(y)$; Oracle queries $(\mathcal{O}^f,x')$ and $(\mathcal{O}^g,x'')$, respectively $f$- and $g$-oracle queries.
\end{defn}

As we have no bounds on the computational complexity of the adversary $\mathcal{A}$, we may assume that $\mathcal{A}$ is deterministic, as we did in the proof of Theorem~\ref{PseudoEMbounded}. Hence, we may consider an algorithm $C_\mathcal{A}$ which, given a set of $\mathcal{A}$'s queries, can determine $\mathcal{A}$'s next query.

\begin{defn}
For $0\leq i \leq q_c$, $0\leq j \leq q_f$, and $0\leq k \leq q_g$, the $i+j+k+1$'st query by $\mathcal{A}$ is
\begin{align*}
C_\mathcal{A}\left[ \lbrace \langle x_1,y_1 \rangle, \ldots, \langle x_{i},y_{i} \rangle \rbrace_P, \lbrace \langle x'_1,y'_1 \rangle, \ldots, \langle x'_{j},y'_{j} \rangle \rbrace_f, \lbrace \langle x''_1,y''_1 \rangle, \ldots, \langle x''_{k},y''_{k} \rangle \rbrace_g \right]
\end{align*}
where the upper equality case on the indexes is defined to be $\mathcal{A}$'s final output.
\end{defn}

\begin{defn}
Let $\sigma = (T_P,T_f,T_g)$ be a tuple of transcripts with length $q_c,q_f,q_g$, respectively. We say that $\sigma$ is a \textbf{possible $\mathcal{A}$-transcript} if for every $1\leq i \leq q_c, 1 \leq j \leq q_f$, and $1\leq k \leq q_g$,
\begin{align*}
C_\mathcal{A}\left[ \lbrace \langle x_1,y_1 \rangle, \ldots, \langle x_{i},y_{i} \rangle \rbrace_P, \lbrace \langle x'_1,y'_1 \rangle, \ldots, \langle x'_{j},y'_{j} \rangle \rbrace_f, \lbrace \langle x''_1,y''_1 \rangle, \ldots, \langle x''_{k},y''_{k} \rangle \rbrace_g \right] \\ \hspace*{70pt}\in \lbrace (+,x_{i+1}), (-,y_{i+1}), (\mathcal{O}^f, x'_{j+1}), (\mathcal{O}^g,x''_{k+1})\rbrace .
\end{align*}
\end{defn}

Let us define two useful ways in which we may answer $\mathcal{A}$'s queries other than what we have already defined.

\begin{defn}
Let $\tilde{\Psi}$ be the process where the $\Psi$- and $\Psi^{-1}$ cipher query oracles use $f$ and $g$, and $\mathcal{O}^f$ uses $f$, but $\mathcal{O}^g$ is replaced by $\mathcal{O}^h$ for another, independent, random function $h$.
\end{defn}

\begin{defn}
Let $\tilde{R}$ denote the process which answers all \underline{oracle} queries using $f$ and $g$, but answers the $i$'th \underline{cipher} query as follows.
\begin{enumerate}
\item If $\mathcal{A}$ queries $(+,x_i)$ and there exists $1 \leq j < i$, such that the $j$'th query-answer pair has $x_j=x_i$, return $y_i:=y_j$.
\item If $\mathcal{A}$ queries $(-,y_i)$ and there exists $1 \leq j < i$, such that the $j$'th query-answer pair has $y_j=y_i$, return $x_i:=x_j$.
\item Otherwise, return uniformly chosen element in $G^2$.
\end{enumerate}
\end{defn}

The latter definition may not be consistent with any function or permutation, so we formalize exactly this event.

\begin{defn}\label{inconsistent}
Let $T_P$ be a possible $\mathcal{A}$-cipher-transcript. $T_P$ is \textbf{inconsistent} if for some $1\leq i < j \leq q_c$ there exist cipher-pairs such that either
\begin{itemize}
\item $x_i = x_j$ but $y_i \neq y_j$, or
\item $x_i \neq x_j$ but $y_i=y_j$.
\end{itemize}
Any $\sigma$ containing such a transcript $T_P$ is called \textbf{inconsistent}.
\end{defn}

\begin{note}
Assume from now on that $\mathcal{A}$ never repeats any part of a query if the answer can be determined from previous queries, i.e. every possible $\mathcal{A}$-transcript $\sigma$ is consistent such that if $i\neq j$, then $x_i \neq x_j$, $y_i\neq y_j$, $x'_i \neq x'_j$, and $x''_i \neq x''_j$.
\end{note}

\begin{note}
Let $T_\Psi, T_{\tilde{\Psi}}, T_{\tilde{R}}, T_R$ denote the transcripts seen by $\mathcal{A}$ when its cipher queries are answered by $\Psi, \tilde{\Psi}, \tilde{R}, R$, respectively, and oracle queries by $\mathcal{O}^f$ and $\mathcal{O}^g$ (noting that in the case of $\tilde{\Psi}$, the function in the $\mathcal{O}^g$ has been replaced by another random function, $h$.) We also note that using this notation, we have that $\mathcal{A}^{\Psi,\Psi^{-1},f,g} = C_\mathcal{A}(T_\Psi)$ (and likewise for $\tilde{\Psi}, \tilde{R}$, and $R$.)
\end{note}

\subsection{Lemmas}
Now, let us begin finding results that will aid us in proving our main theorem. First, we will compare the distributions of $\tilde{R}$ and $R$, using a result by Naor-Reingold\footnote{The proof of the proposition follows the argument of Proposition 3.3 in \citep{NaorReingold}.}. Afterwards, we shall consider when the distributions of $\Psi$ and $\tilde{\Psi}$ are equal. Lastly, we shall consider when the distributions of $\tilde{\Psi}$ and $\tilde{R}$ are equal. Combining these results will allow us to prove our main theorem.

We remark that whenever we write $k= (k^L,k^R)\in_R G^2$, we mean that the subkeys are chosen independently and uniformly at random.

\begin{lem}\label{Lem37}
$\left| \underset{\tilde{R}}{Pr}\left[ C_\mathcal{A}(T_{\tilde{R}})=1 \right] - \underset{R}{Pr}\left[ C_\mathcal{A}(T_{R})=1 \right] \right| \leq \begin{pmatrix}
q_c \\ 2
\end{pmatrix}\cdot |G|^{-2}$.
\end{lem}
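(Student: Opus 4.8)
The plan is to bound the statistical distance between two ways of answering the adversary's cipher queries: the idealized process $\tilde{R}$, which answers fresh cipher queries with uniformly random elements of $G^2$ (and can therefore produce inconsistent transcripts), and the true random permutation $R$, which always answers consistently. Since the oracle queries to $f$ and $g$ are answered identically in both experiments, the only difference in the transcripts $T_{\tilde{R}}$ and $T_R$ stems from the cipher-query answers. The key observation is that $\tilde{R}$ and $R$ behave identically as long as $\tilde{R}$ does not produce an inconsistency in the sense of Definition~\ref{inconsistent}: so long as every fresh cipher query returns a value distinct from all previously returned values (on the same side), the uniform answers of $\tilde{R}$ are indistinguishable from those of a random permutation.

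**First I would** make this precise via a standard ``identical-until-bad'' argument, exactly in the spirit of Lemmas~\ref{BadisBad} and~\ref{Notbadisnotbad} from the Even-Mansour proof. Let $\mathsf{INC}$ denote the event that the transcript produced by $\tilde{R}$ is inconsistent. Because $R$ is a genuine permutation it never produces $\mathsf{INC}$, and conditioned on $\neg\mathsf{INC}$ the two processes give the same distribution over transcripts (each fresh query, conditioned on avoiding collisions, is uniform over the remaining unused values, which is precisely how a random permutation responds). Hence
\begin{align*}
\left| \underset{\tilde{R}}{Pr}\left[ C_\mathcal{A}(T_{\tilde{R}})=1 \right] - \underset{R}{Pr}\left[ C_\mathcal{A}(T_{R})=1 \right] \right| \leq \underset{\tilde{R}}{Pr}\left[ \mathsf{INC} \right].
\end{align*}

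**Then I would** bound $Pr_{\tilde{R}}[\mathsf{INC}]$ by a union bound over pairs of cipher queries. For any fixed pair $1 \leq i < j \leq q_c$, an inconsistency arises only when the $j$'th answer — a uniformly random element of $G^2$ returned by case~3 of $\tilde{R}$ — happens to collide with the $i$'th query or answer in one of the two forbidden ways. Since $|G^2| = |G|^2$, each such collision occurs with probability at most $|G|^{-2}$ (the fresh uniform value lands on a single prescribed element of $G^2$). Summing over all $\binom{q_c}{2}$ pairs yields the claimed bound $\binom{q_c}{2}\cdot |G|^{-2}$.

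**The main obstacle** is getting the per-pair collision probability exactly right, namely verifying that conditioning on the history does not inflate the probability beyond $|G|^{-2}$: one must check that whenever case~3 fires, the returned value is genuinely uniform over all of $G^2$ (not merely over the complement of previously used values), so that the probability of hitting the one ``bad'' target stays at $|G|^{-2}$ rather than something larger like $(|G|^2 - j)^{-1}$ — and that the two inconsistency sub-cases of Definition~\ref{inconsistent} together contribute at most this bound per pair. This is exactly the role of the Naor--Reingold argument referenced in the footnote, and careful bookkeeping of which value is fresh versus fixed in each sub-case is what makes the union bound clean.
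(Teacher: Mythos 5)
Your proposal is correct and follows essentially the same route as the paper's proof: both argue that $\tilde{R}$ and $R$ induce identical transcript distributions conditioned on $T_{\tilde{R}}$ being consistent, so the distinguishing advantage is bounded by $Pr_{\tilde{R}}\left[ T_{\tilde{R}} \text{ is inconsistent} \right]$, which is then bounded by a union bound over the $\binom{q_c}{2}$ pairs of cipher queries, each contributing at most $|G|^{-2}$ because a fresh answer is uniform over all of $G^2$. The point you flag as the main obstacle (uniformity over all of $G^2$ rather than over unused values when case~3 fires) is resolved exactly as you suggest, directly from the definition of $\tilde{R}$.
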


\begin{proof}
Let $\sigma$ be a possible and consistent $\mathcal{A}$-transcript, then
\begin{align*}
\underset{R}{Pr}\left[ T_R = \sigma \right] = \begin{pmatrix}
|G|^{2} \\ q_c \end{pmatrix} = \underset{\tilde{R}}{Pr} \left[ T_{\tilde{R}} = \sigma \mid T_{\tilde{R}} \textit{ is consistent} \right],
\end{align*}
simply because the only difference between $T_R$ and $T_{\tilde{R}}$ is in the cipher queries, and when $T_{\tilde{R}}$ is consistent, we have no overlap on the query-answer pairs, hence we need only consider how to choose $q_c$ elements from $|G|^2$ many possible elements, without replacement. Let us now consider the probability of $T_{\tilde{R}}$ being inconsistent. If $T_{\tilde{R}}$ is inconsistent for some $1\leq i < j \leq q_c$ then either $x_i=x_j$ and $y_i \neq y_j$, or $x_i \neq x_j$ and $y_i = y_j$. For any given $i,j$, this happens with at most probability $|G|^{-2}$, because if $x_i=x_j$ is queried, then the $\tilde{R}$-oracle would return the corresponding $y_i=y_j$, but if $x_i \neq x_j$ is queried, then the $\tilde{R}$-oracle would return a uniformly random element (and likewise if $y_i=y_j$ or $y_i\neq y_j$ were queried to the inverse $\tilde{R}$-oracle.) Hence,
\begin{align*}
\underset{\tilde{R}}{Pr} \left[ T_{\tilde{R}} \textit{ is inconsistent} \right] \leq \begin{pmatrix}
q_c \\ 2 \end{pmatrix} \cdot |G|^{-2}.
\end{align*}
We thereby get that,
\begin{footnotesize}
\begin{align*}
&\left| \underset{\tilde{R}}{Pr}\left[ C_\mathcal{A}(T_{\tilde{R}})=1 \right] - \underset{R}{Pr}\left[ C_\mathcal{A}(T_{R})=1 \right] \right| \\
&\leq \left| \underset{\tilde{R}}{Pr}\left[ C_\mathcal{A}(T_{\tilde{R}})=1 | T_{\tilde{R}} \textit{ is consistent} \right] - \underset{R}{Pr}\left[ C_\mathcal{A}(T_{R})=1 \right]\right| \cdot \underset{\tilde{R}}{Pr}\left[ T_{\tilde{R}} \textit{ is consistent} \right] \\
	&\hspace*{10pt} + \left| \underset{\tilde{R}}{Pr}\left[ C_\mathcal{A}(T_{\tilde{R}})=1 | T_{\tilde{R}} \textit{ is inconsistent} \right] - \underset{R}{Pr}\left[ C_\mathcal{A}(T_{R})=1 \right]\right| \cdot \underset{\tilde{R}}{Pr}\left[ T_{\tilde{R}} \textit{ is inconsistent} \right] \\
&\leq \underset{\tilde{R}}{Pr}\left[ T_{\tilde{R}} \textit{ is inconsistent} \right] \\
&\leq \begin{pmatrix} q_c \\ 2 \end{pmatrix} \cdot |G|^{-2},
\end{align*}
\end{footnotesize}
as the distribution over $R$ is independent of the (in)consistency of $T_{\tilde{R}}$.
\end{proof}

Let us now focus on the distributions of $T_\Psi$ and $T_{\tilde{\Psi}}$, to show that they are identical unless the input to $g$ in the cipher query to $\Psi$ is equal to the oracle input to $h$ in $\mathcal{O}^h$. In order to do so, we first define the event $\textsf{BadG}(k)$.

\begin{defn}
For every specific key $k=(k^L,k^R)\in_R G^2$, we define $\textsf{BadG}(k)$ to be the set of all possible and consistent $\mathcal{A}$-transcripts $\sigma$, satisfying at least one of the following:
\begin{itemize}
\item[\textbf{BG1}:] $\exists i,j, 1\leq i \leq q_c, 1 \leq j \leq q_g$, such that $x_i^R \cdot k^R = x''_j$, or
\item[\textbf{BG2}:] $\exists i,j, 1\leq i \leq q_c, 1 \leq j \leq q_g$, such that $y_i^L \cdot (k^L)^{-1} = x''_j$.
\end{itemize}
\end{defn}

\begin{lem}\label{Lem39}
Let $k=(k^L,k^R)\in_R G^2$. For any possible and consistent $\mathcal{A}$-transcript $\sigma=(T_P,T_f,T_g)$, we have
\begin{align*}
\underset{k}{Pr}\left[ \sigma \in \textsf{BadG}(k) \right] \leq \frac{2q_gq_c}{|G|}.
\end{align*}
\end{lem}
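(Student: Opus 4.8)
The plan is to bound the probability of the bad event by a direct union-bound argument over the two subevents \textbf{BG1} and \textbf{BG2}, exploiting that the key subkeys $k^L, k^R$ are chosen independently and uniformly at random from $G$, while the transcript $\sigma$ is fixed. The crucial observation is that once $\sigma$ is a fixed possible and consistent transcript, the values $x_i^R$, $y_i^L$, and $x''_j$ are all determined constants in $G$; the only randomness in the event $\sigma \in \textsf{BadG}(k)$ comes from the key $k$. So the proof is essentially a counting argument: for each fixed pair of indices we ask how many choices of the relevant subkey make the defining equation hold.

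First I would handle \textbf{BG1}. For fixed indices $i$ and $j$, the equation $x_i^R \cdot k^R = x''_j$ is a single group equation in the unknown $k^R$, whose unique solution is $k^R = (x_i^R)^{-1} \cdot x''_j$. Since $k^R$ is uniform on $G$, this occurs with probability exactly $|G|^{-1}$. There are at most $q_c \cdot q_g$ such index pairs, so by the union bound the probability that \textbf{BG1} holds for \emph{some} $i,j$ is at most $q_g q_c / |G|$. Then I would treat \textbf{BG2} identically: the equation $y_i^L \cdot (k^L)^{-1} = x''_j$ has the unique solution $k^L = (x''_j)^{-1} \cdot y_i^L$ (equivalently $(k^L)^{-1} = (y_i^L)^{-1} \cdot x''_j$), again holding with probability $|G|^{-1}$ for each fixed pair and hence at most $q_g q_c / |G|$ over all pairs. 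Summing the two contributions gives the claimed bound
\begin{align*}
\underset{k}{Pr}\left[ \sigma \in \textsf{BadG}(k) \right] \leq \frac{q_g q_c}{|G|} + \frac{q_g q_c}{|G|} = \frac{2 q_g q_c}{|G|}.
\end{align*}

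The only subtlety worth flagging, and the place where the non-abelian setting demands care, is the uniqueness of the solution in each group equation: I must use that left- and right-multiplication by a fixed element are bijections of $G$, so that each equation pins down exactly one value of the corresponding subkey regardless of commutativity. This is where writing the inverses on the correct side matters, and I would be careful to keep $k^R$ on the right in \textbf{BG1} and $(k^L)^{-1}$ on the right in \textbf{BG2} exactly as the definition states. Beyond that observation the argument is entirely routine, so I do not expect a genuine obstacle; the independence of $k^L$ and $k^R$ is not even needed for this particular bound, since each subevent involves only one of the two subkeys, but I would still note that both events draw on a single uniform subkey each so that the per-pair probability $|G|^{-1}$ is exact rather than merely an upper bound.
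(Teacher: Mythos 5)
Your proof is correct and follows essentially the same route as the paper's: a union bound over \textbf{BG1} and \textbf{BG2}, with each subevent bounded by $q_g q_c \cdot |G|^{-1}$ via a per-pair probability of $|G|^{-1}$ and a union bound over the at most $q_c q_g$ index pairs. The paper states the per-pair bound without justification, whereas you supply the (correct) detail that left/right translation by a fixed group element is a bijection, so each equation pins down the relevant subkey uniquely; this is a welcome elaboration but not a different argument.
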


\begin{proof}
We know that $\sigma \in \textsf{BadG}(k)$ if one of \textbf{BG1} or \textbf{BG2} occur, hence, using the union bound,
\begin{align*}
\underset{k}{Pr}\left[ \sigma \in \textsf{BadG}(k) \right] &= \underset{k}{Pr}\left[ \textbf{BG1} \text{ occurs } \vee \textbf{BG2} \text{ occurs } | \sigma \right] \\
	&\leq  \underset{k}{Pr}\left[ \textbf{BG1} \text{ occurs } | \sigma \right] +  \underset{k}{Pr}\left[ \textbf{BG2} \text{ occurs } | \sigma \right] \\
	&\leq q_gq_c\cdot |G|^{-1} + q_gq_c\cdot |G|^{-1} \\
	&= 2q_gq_c\cdot |G|^{-1}.
\end{align*}
\end{proof}

\begin{lem}\label{NotBadPsitoBarPsi}
Let $\sigma$ be a possible and consistent $\mathcal{A}$-transcript, then
\begin{align*}
\underset{\Psi}{Pr}\left[ T_{\Psi} = \sigma | \sigma \not\in \textsf{BadG}(k) \right] = \underset{\tilde{\Psi}}{Pr}\left[ T_{\tilde{\Psi}} = \sigma \right].
\end{align*}
\end{lem}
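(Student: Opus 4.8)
The plan is to fix the key $k$ and, working over the randomness of the random functions $f,g$ (and $h$ in the case of $\tilde{\Psi}$), to express the transcript probability as a product of three contributions: the direct $f$-oracle answers, the cipher answers, and the direct $g$-oracle answers. The one structural fact I need first is a description of where the inner $g$ is evaluated during a cipher query. Unwinding the four rounds of $\mathcal{F}_{g,f,f,g}$ on the Feistel input $x_i\cdot k=(x_i^L k^L, x_i^R k^R)$, I would record that $g$ is applied exactly twice: in round $1$ at $R_0=x_i^R\cdot k^R$, and in round $4$ at $R_3$. Since the cipher output is $(L_4 k^L, R_4 k^R)$ with $L_4=R_3$, the round-$4$ argument is forced to equal $R_3=y_i^L\cdot(k^L)^{-1}$ whenever the answer is $y_i$ (and symmetrically for inverse cipher queries). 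Thus the set of inner $g$-inputs arising across all cipher queries is exactly $I_g=\{x_i^R k^R\}_i\cup\{y_i^L (k^L)^{-1}\}_i$, while the direct $\mathcal{O}^g$ queries use the inputs $D_g=\{x''_j\}_j$; by definition of $\textbf{BG1}$ and $\textbf{BG2}$, the hypothesis $\sigma\notin\textsf{BadG}(k)$ is precisely the statement $I_g\cap D_g=\emptyset$.

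Next I would argue that, since $\mathcal{A}$ is deterministic and $\sigma$ is a fixed possible and consistent transcript, the interleaved query sequence is determined by $\sigma$, so that the event $T_\Psi=\sigma$ is the conjunction of three matching events: every direct $f$-answer equals $y'_j$, every direct $g$-answer equals $y''_j$, and every cipher query run through $x\mapsto\mathcal{F}_{g,f,f,g}(x\cdot k)\cdot k$ returns $y_i$. The key measurability claim is that the cipher-match event depends on $g$ only through its restriction $g|_{I_g}$: if two functions agree on $I_g$, then, following the round computations, they yield identical round-$1$ outputs, hence identical $f$-inputs $R_1,R_2$, hence identical $R_3$; and either $R_3\ne y_i^L(k^L)^{-1}$, in which case both fail the $L$-part of the output check for the same reason, independently of the round-$4$ value of $g$, or $R_3 = y_i^L(k^L)^{-1}\in I_g$, in which case both agree on the round-$4$ value as well. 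Consequently the cipher-match indicator is a function of $(f, g|_{I_g})$ alone, whereas the $g$-oracle-match indicator is a function of $g|_{D_g}$ alone.

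With these pieces in place the argument is a factorization. Because $g$ is a uniformly random function and $I_g\cap D_g=\emptyset$, the restrictions $g|_{I_g}$ and $g|_{D_g}$ are independent, and both are independent of $f$; hence $\Pr_\Psi[T_\Psi=\sigma]$ factors as the common $f$-contribution, times $\mathbb{E}[\mathbf{1}(\text{cipher matches})]$ averaged over $f$ and $g|_{I_g}$, times $\Pr[g|_{D_g}\text{ matches}]=|G|^{-q_g}$, the last factor using that the $x''_j$ are distinct. For $\tilde{\Psi}$ the very same decomposition holds, the only change being that the direct $g$-answers are produced by the independent random function $h$: the cipher-match factor is unchanged, since $\tilde{\Psi}$ still uses the internal $g$ on $I_g$, and the $g$-oracle factor is $\Pr[h|_{D_g}\text{ matches}]=|G|^{-q_g}$ as well. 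Note that no disjointness hypothesis is needed on the $\tilde{\Psi}$ side, because $g$ and $h$ are independent by construction; the role of $\sigma\notin\textsf{BadG}(k)$ is solely to buy the independence $g|_{I_g}\perp g|_{D_g}$ on the $\Psi$ side. Equating the two factorizations yields the claim.

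The main obstacle I anticipate is the measurability step of the second paragraph: making rigorous that the cipher-match event genuinely does not see $g$ outside $I_g$. This hinges on the twin observations that the forced value $R_3=y_i^L(k^L)^{-1}$ is itself an element of $I_g$, and that the round-$4$ output check $R_3 k^L=y_i^L$ does not involve $g(R_3)$, so that the only place where the round-$4$ value of $g$ can matter is a point already contained in $I_g$. A secondary point to state carefully is that, with $\mathcal{A}$ deterministic and $\sigma$ possible and consistent, both $\Pr_\Psi[T_\Psi=\sigma]$ and $\Pr_{\tilde{\Psi}}[T_{\tilde{\Psi}}=\sigma]$ compute the probability that all answers along the single $\sigma$-determined sequence match, so that the order in which the cipher and oracle queries are interleaved is irrelevant to the resulting product.
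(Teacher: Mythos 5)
Your proof is correct, and it takes a genuinely different route from the paper's. The paper argues dynamically, query type by query type: fixing a good key, it compares the conditional distribution of each successive answer in game $\Psi$ and game $\tilde{\Psi}$, the point being that the first- and fourth-round inputs to $g$ inside cipher queries never coincide with the adversary's direct $\mathcal{O}^g$-queries once \textbf{BG1}, \textbf{BG2} are excluded, so the internal $g$-values look exactly as ``fresh'' in $\Psi$ as they do in $\tilde{\Psi}$, where direct queries go to the independent $h$. You instead argue statically: you identify the internal input set $I_g=\lbrace x_i^R k^R\rbrace_i\cup\lbrace y_i^L(k^L)^{-1}\rbrace_i$, observe that $\sigma\notin\textsf{BadG}(k)$ is exactly the disjointness $I_g\cap D_g=\emptyset$, prove that the cipher-match indicator is measurable with respect to $(f,g|_{I_g})$ (your observation that a failure of the left-half check $R_3\cdot k^L=y_i^L$ is detected without ever evaluating $g(R_3)$ is precisely what makes this true), and then factor the transcript probability using independence of a uniform random function's restrictions to disjoint sets. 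Your route is closer in spirit to the paper's own proof of Lemma~\ref{Lem43} (exact Naor--Reingold-style transcript counting) than to its proof of this lemma. What your route buys: it isolates exactly where the hypothesis is used (only to get $g|_{I_g}$ independent of $g|_{D_g}$), and it automatically handles cross-query coincidences among internal $g$-inputs (repeated right halves across queries, or $x_i^Rk^R=y_j^L(k^L)^{-1}$ for $i\neq j$), which occur identically in both games and which the paper's sequential argument passes over rather informally; what the paper's approach buys is step-by-step intuition, with no need for your measurability lemma. One wording caution: your phrase ``the common $f$-contribution, times $\mathbb{E}[\mathbf{1}(\text{cipher matches})]$'' should not be read as an independence claim between the $f$-oracle-match and cipher-match events --- they are correlated through $f$ (controlling such collisions is the job of events \textbf{B4}, \textbf{B5} of $\textsf{Bad}(k,g)$ in the next lemma); your argument needs, and the rest of your text correctly uses, only that their joint probability is a single factor measurable in $(f,g|_{I_g})$ and therefore literally the same quantity in the two games.
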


\begin{proof}
We want to show that the query answers in the subtranscripts of the games $\Psi$ and $\tilde{\Psi}$ are equally distributed, under the condition that neither of the events \textbf{BG1} nor \textbf{BG2} occur in game $\Psi$. Fix the key $k=(k^L,k^R)\in_R G^2$. Recall that the adversary does not query an oracle if it can determine the answer from previous queries.

In both games, for any $\mathcal{O}^f$-oracle query $x' \in G$, the query answer will be equally distributed in both games as the underlying random function $f$ is the same in both games.

In game $\Psi$, an $\mathcal{O}^g$-oracle query, $x''\in G$, will have a uniformly random answer as $g$ is a random function. Likewise, in game $\tilde{\Psi}$, an $\mathcal{O}^g$-oracle query, $x'' \in G$, will have a uniformly random answer as $h$ is a random function.

Consider now the permutation oracle $P= \mathcal{F}_{g,f,f,g}(x\cdot k)\cdot k$. We consider a query-answer pair $\langle x, y \rangle \in T_P$ for $x,y\in G^2$.

In both games, $x^R\cdot k^R$ will be the input to the first round function, which is $g$. In game $\tilde{\Psi}$ the output is always a uniformly random element, newly selected by $g$. In game $\Psi$, if $x^R\cdot k^R$ has already been queried to the $\mathcal{O}^g$-oracle, the output of the round function is the corresponding oracle answer, else it is a uniformly random element, newly selected by $g$.  As the former event in game $\Psi$ never occurs because the event \textbf{BG1} never occurs, the distributions are equal.

As both games have access to the same random function $f$, the second and third round function outputs will have equal distributions.

In both games, $y^L\cdot (k^L)^{-1}$ will be the input to the fourth round function, which is again $g$. In game $\tilde{\Psi}$ the output is always a uniformly random element, newly selected by $g$, unless $y^L\cdot (k^L)^{-1} = x^R\cdot k^R$, in which case the output is equal to the output of the first round function. In game $\Psi$, if $x^R\cdot k^R$ has already been queried to the $\mathcal{O}^g$-oracle, but not as input to the first round function, the output of the round function is the corresponding oracle answer. If $y^L\cdot (k^L)^{-1} = x^R\cdot k^R$, then the output is equal to the output of the first round function, else it is a uniformly random element newly selected by $g$. As the former event in game $\Psi$ never occurs because the event \textbf{BG2} never occurs, the distributions are equal.

As $\mathcal{A}$ does not ask a query if it can determine the answer based on previous queries, we see that the inverse permutation oracle, using $P^{-1}$, yields analogous distributions. Thus, the distributions for the two games must be equal.
\end{proof}

Let us show that the distributions of $T_{\tilde{\Psi}}$ and $T_{\tilde{R}}$ are identical, unless the same value is input to $f$ on two separate occasions. Here we also define when a key is "bad" as we did above, but altered such that it pertains to our current oracles.

\begin{defn}
For every specific key $k=(k^L,k^R)\in_R G^2$ and function $g\in_R \mathfrak{F}_{G\rightarrow G}$, define $\textsf{Bad}(k,g)$ to be the set of all possible and consistent $\mathcal{A}$-transcripts $\sigma$ satisfying at least one of the following events:
\begin{itemize}
\item[\textbf{B1}:] $\exists 1\leq i < j \leq q_c$, such that
\begin{align*}
x_i^L\cdot k^L \cdot g(x_i^R\cdot k^R) = x_j^L \cdot k^L \cdot g(x_j^R \cdot k^R)
\end{align*}
\item[\textbf{B2}:] $\exists 1\leq i < j \leq q_c$, such that
\begin{align*}
y_i^R\cdot (k^R)^{-1} \cdot \left(g(y_i^L\cdot (k^L)^{-1})\right)^{-1} = y_j^R\cdot (k^R)^{-1} \cdot \left(g(y_j^L\cdot (k^L)^{-1})\right)^{-1}
\end{align*}
\item[\textbf{B3}:] $\exists 1\leq i , j \leq q_c$, such that
\begin{align*}
x_i^L\cdot k^L \cdot g(x_i^R\cdot k^R) = y_j^R\cdot (k^R)^{-1} \cdot \left(g(y_j^L\cdot (k^L)^{-1})\right)^{-1}
\end{align*}
\item[\textbf{B4}:] $\exists 1\leq i \leq q_c, 1\leq j \leq q_f$, such that
\begin{align*}
x_i^L\cdot k^L \cdot g(x_i^R\cdot k^R) = x'_j
\end{align*}
\item[\textbf{B5}:] $\exists 1\leq i \leq q_c, 1\leq j \leq q_f$, such that
\begin{align*}
y_i^R\cdot (k^R)^{-1} \cdot \left(g(y_i^L\cdot (k^L)^{-1})\right)^{-1} = x'_j
\end{align*}
\end{itemize}
\end{defn}

\begin{lem}\label{Lem42}
Let $k=(k^L,k^R)\in_R G^2$. For any possible and consistent $\mathcal{A}$-transcript $\sigma$, we have that
\begin{align*}
\underset{k,g}{Pr}\left[ \sigma \in \textsf{Bad}(k,g)\right] \leq \left( q_c^2+2q_fq_c + 2\cdot \begin{pmatrix} q_c \\ 2 \end{pmatrix} \right)\cdot |G|^{-1}.
\end{align*}
\end{lem}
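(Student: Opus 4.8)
The plan is to bound $\Pr_{k,g}[\sigma \in \textsf{Bad}(k,g)]$ by the union bound over its five defining events, $\Pr_{k,g}[\sigma \in \textsf{Bad}(k,g)] \le \sum_{\ell=1}^{5}\Pr_{k,g}[\textbf{B}\ell]$, and then for each event to apply a further union bound over its index set, reducing everything to bounding the probability of a single collision equation by $|G|^{-1}$. Counting index sets then yields $\binom{q_c}{2}$ equations for each of \textbf{B1} and \textbf{B2}, $q_c^2$ for \textbf{B3}, and $q_fq_c$ for each of \textbf{B4} and \textbf{B5}, which assembles exactly into the claimed bound $\bigl(q_c^2+2q_fq_c+2\binom{q_c}{2}\bigr)|G|^{-1}$.

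The engine for the single-equation bound is a ``fresh $g$-value'' argument: since the key $k$ is independent of the random function $g$, I would condition on $k$ and on all of $g$ except at one carefully chosen input $w$, whereupon $g(w)$ remains uniform on $G$ and the collision equation becomes $g(w) = (\text{fixed element})$, holding with probability exactly $|G|^{-1}$. For \textbf{B4} and \textbf{B5} only a single $g$-evaluation appears, so this applies verbatim. For \textbf{B1} and \textbf{B2} two $g$-values appear with the same ``sign''; if their inputs coincide (e.g.\ $x_i^R k^R = x_j^R k^R$ in \textbf{B1}), then as $k$ is fixed this forces $x_i^R = x_j^R$, and cancelling the common $g$-value and the common $k^L$ reduces the equation to $x_i^L = x_j^L$, whence $x_i = x_j$, contradicting the standing consistency assumption ($i\neq j \Rightarrow x_i \neq x_j$); so that sub-event has probability $0$, while otherwise the two inputs differ and one of them is fresh, giving $|G|^{-1}$.

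First I would dispatch \textbf{B1}, \textbf{B2}, \textbf{B4}, \textbf{B5} as above, then treat \textbf{B3}, which I expect to be the main obstacle. The difficulty is that \textbf{B3} mixes $g(u)$ with $(g(v))^{-1}$, where $u = x_i^R k^R$ and $v = y_j^L (k^L)^{-1}$, so after rearranging the equation reads $g(u)\,g(v) = c$ for a fixed $c$ depending on $(i,j,k)$. When $u \neq v$ the two values are independent and uniform, hence their product is uniform and the probability is $|G|^{-1}$, which is the generic case. The delicate case is the coincidence $u = v$, which collapses the equation to $g(u)^2 = c$: over a non-abelian group (or any group of even order) the number of square roots of $c$ need not be one, so this sub-event is not controlled by the randomness of $g$ alone.

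The way I would attack the obstacle is to observe that $u=v$ is itself a constraint on the key, namely $x_i^R k^R = y_j^L (k^L)^{-1}$, which pins $k^R$ as a function of $k^L$ and so occurs with $\Pr_k[u=v] \le |G|^{-1}$; I would then bound the coincidence contribution through this key-collision probability together with the independence of $g$ from $k$. The genuinely subtle point is reconciling this with the \emph{exact} constant $q_c^2|G|^{-1}$ claimed for \textbf{B3} rather than a looser estimate, since in the abelian XOR setting of \citep{GentryRamzan} the square $g(u)^2=c$ degenerates into the linear condition $c=0$ and the two sub-cases merge cleanly, whereas in the general group the square term must be argued to stay within budget. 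This is precisely where the passage past the abelian case of \citep{GentryRamzan} demands care; once every single equation is bounded by $|G|^{-1}$, summing over the five index sets gives the statement.
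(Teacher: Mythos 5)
Your skeleton --- a union bound over \textbf{B1}--\textbf{B5}, a further union bound over each index set, and a per-equation bound of $|G|^{-1}$ from a fresh $g$-value --- is exactly the paper's proof, and your case analysis for \textbf{B1}, \textbf{B2}, \textbf{B4}, \textbf{B5} (in particular, using consistency to kill the coinciding-input sub-case of \textbf{B1} and \textbf{B2}) correctly fills in details that the paper compresses into the single sentence ``the probability that two elements chosen from $G$ are equal is $|G|^{-1}$.''

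The obstacle you flag in \textbf{B3} is not a defect of your argument: it is a genuine gap in the paper's own proof, and in fact the lemma is false with the stated constant. The paper never considers the coincidence $u := x_i^R\cdot k^R = y_j^L\cdot (k^L)^{-1} =: v$, under which the equation collapses to $g(u)^2 = c$ and the randomness of $g$ gives no control. Your repair --- charging this sub-case to the key via $\Pr_k[u=v]\le |G|^{-1}$ --- is the right one, and the resulting constant $2q_c^2$ cannot be improved back to $q_c^2$. Concretely, take $G=(\mathbb{Z}/2)^n$, $q_c=1$, $q_f=q_g=0$, and the consistent one-query transcript $x_1=y_1=(0,0)$: event \textbf{B3} with $i=j=1$ reads $k^L\oplus g(k^R) = k^R \oplus g(k^L)$, which holds automatically when $k^L=k^R$ and with probability $|G|^{-1}$ otherwise, hence with total probability
\begin{align*}
|G|^{-1} + \left(1-|G|^{-1}\right)|G|^{-1} = 2|G|^{-1}-|G|^{-2} > |G|^{-1} = \left( q_c^2+2q_fq_c + 2\binom{q_c}{2} \right)|G|^{-1},
\end{align*}
exceeding the claimed bound. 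One refinement to your diagnosis: what rescues the constant in \citep{GentryRamzan} is not the XOR or abelian setting per se but their \emph{two-key} structure. There the coincidence constrains $k_1^R\oplus k_2^L$ while the residual degenerate condition constrains $k_1^L\oplus k_2^R$ --- independent key material --- so the coincidence sub-case costs $|G|^{-2}$ and the per-pair total is exactly $|G|^{-1}$; in the one-key scheme both conditions constrain the same pair $(k^L,k^R)$ and, as the example shows, can be simultaneously satisfied with probability $|G|^{-1}$ rather than $|G|^{-2}$. So you should not try to reconcile your estimate with the stated constant; prove the lemma with $2q_c^2$ in place of $q_c^2$, which your two-case analysis already does. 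Downstream this only adds a lower-order term $q_c^2|G|^{-1}$ to the bound of Theorem~\ref{GentryRamzanMain} and its restatement, leaving the asymptotics unchanged.
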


\begin{proof}
We have that $\sigma \in \textsf{Bad}(k,g)$ if it satisfies a $\textit{\textbf{Bi}}$ for some $\textbf{i}=\lbrace1,\ldots,5\rbrace$. Using that $k^L,k^R$ are uniform and independently chosen, and $g\in_R \mathfrak{F}_{G\rightarrow G}$, we may achieve an upper bound on the individual event probabilities, and then use the union bound.

There are $\begin{pmatrix} q_c \\ 2 \end{pmatrix}$ many ways of picking $i,j$ such that $1\leq i< j \leq q_c$, also, $q_fq_c$ many ways of picking $i,j$ such that $1\leq i \leq q_c, 1 \leq j \leq q_f$, and $q_c^2$ many ways of picking $i,j$ such that $1 \leq i,j\leq q_c$. The probability that two elements chosen from $G$ are equal is $|G|^{-1}$, so we may bound each event accordingly and achieve, using the union bound, that
\begin{footnotesize}
\begin{align*}
\underset{k,g}{Pr}\left[ \sigma \in \textsf{Bad}(k,g)\right] &= \underset{k,g}{Pr}\left[ \bigvee_{i=1}^5 \textit{\textbf{Bi}} \text{ occurs } | \sigma \right] \\
&\leq \sum_{i=1}^5 \underset{k,g}{Pr}\left[ \textit{\textbf{Bi}} \text{ occurs } | \sigma \right] \\
&\leq \begin{pmatrix} q_c \\ 2 \end{pmatrix}\cdot |G|^{-1} + \begin{pmatrix} q_c \\ 2 \end{pmatrix}\cdot |G|^{-1} + q_c^2 \cdot |G|^{-1} + q_fq_c \cdot |G|^{-1} + q_fq_c \cdot |G|^{-1} \\
&= \left( q_c^2 + 2q_fq_c + 2\begin{pmatrix} q_c \\ 2 \end{pmatrix} \right) \cdot |G|^{-1}.
\end{align*}
\end{footnotesize}
\end{proof}

\begin{lem}\label{Lem43}
Let $\sigma$ be a possible and consistent $\mathcal{A}$-transcript, then
\begin{align*}
\underset{\tilde{\Psi}}{Pr}\left[ T_{\tilde{\Psi}} = \sigma | \sigma \not\in \textsf{Bad}(k,g)\right] = \underset{\tilde{R}}{Pr}\left[ T_{\tilde{R}} = \sigma \right].
\end{align*}
\end{lem}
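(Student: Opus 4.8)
The plan is to fix a possible and consistent transcript $\sigma = (T_P,T_f,T_g)$ and to prove the equality pointwise in the key and the hidden function: for every pair $(k,g)$ with $\sigma \notin \textsf{Bad}(k,g)$, I will show that running $\tilde{\Psi}$ with that $(k,g)$ produces $\sigma$ with probability exactly $|G|^{-(2q_c+q_f+q_g)}$, and that this is also the probability that $\tilde{R}$ produces $\sigma$. Since this value is the same constant for every good $(k,g)$, averaging over the distribution of $(k,g)$ restricted to the event $\sigma \notin \textsf{Bad}(k,g)$ yields the conditional equality in the statement. This mirrors the conditioning style already used in Lemma~\ref{NotBadPsitoBarPsi}.

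For the $\tilde{R}$ side the computation is immediate: because $\sigma$ is consistent, all $x_i$, all $y_i$, all $x'_j$, and all $x''_j$ are distinct, so every cipher query is fresh and answered by an independent uniform element of $G^2$, every $\mathcal{O}^f$-query by a fresh value of $f$, and every $\mathcal{O}^g$-query by a fresh value of $g$. Multiplying the independent contributions gives $\Pr_{\tilde{R}}[T_{\tilde{R}}=\sigma] = |G|^{-2q_c}\cdot|G|^{-q_f}\cdot|G|^{-q_g}$.

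For the $\tilde{\Psi}$ side I would write out the four-round network $\mathcal{F}_{g,f,f,g}$ explicitly and, for each cipher pair $\langle x_i,y_i\rangle$, read off the round inputs $R_0^i = x_i^R k^R$, $R_1^i = x_i^L k^L\, g(R_0^i)$, $R_3^i = y_i^L (k^L)^{-1}$, and $R_2^i = y_i^R (k^R)^{-1}\bigl(g(R_3^i)\bigr)^{-1}$, all determined by $\sigma$, $k$, and $g$ (invertibility of the Feistel rounds guarantees the same values arise whether the pair was produced by a forward or an inverse query). The two middle rounds use $f$, and consistency of the cipher with $(x_i,y_i)$ is exactly the pair of equations $f(R_1^i) = (R_0^i)^{-1} R_2^i$ and $f(R_2^i) = (R_1^i)^{-1} R_3^i$; because $\mathcal{O}^g$ is answered by the independent function $h$ in $\tilde{\Psi}$, the transcript imposes no constraint on $g$ beyond what is already absorbed into $R_1^i$ and $R_2^i$. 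The role of conditioning on $\sigma \notin \textsf{Bad}(k,g)$ is now visible: the failure of \textbf{B1} and \textbf{B2} makes the $\{R_1^i\}$ and the $\{R_2^i\}$ internally distinct, the failure of \textbf{B3} separates every $R_1^i$ from every $R_2^j$, the failure of \textbf{B4} and \textbf{B5} separates them from the $x'_j$, and consistency keeps the $x'_j$ distinct. Hence the full list of $f$-inputs $\{x'_j\}\cup\{R_1^i\}\cup\{R_2^i\}$ consists of $q_f+2q_c$ pairwise distinct points, so a uniform $f$ meets all the prescribed values with probability $|G|^{-(q_f+2q_c)}$, while the independent $h$ meets the $q_g$ prescribed $\mathcal{O}^g$-answers with probability $|G|^{-q_g}$. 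Multiplying gives $\Pr_{\tilde{\Psi}}[T_{\tilde{\Psi}}=\sigma \mid k,g] = |G|^{-(2q_c+q_f+q_g)}$, which matches the $\tilde{R}$ value and completes the pointwise argument.

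The main obstacle I expect is the bookkeeping of the third paragraph rather than any deep idea: one must verify that the single requirement ``the cipher maps $x_i$ to $y_i$'' collapses to precisely the two fresh $f$-evaluations above with no residual constraint on $g$, and that the algebraic forms appearing in \textbf{B1}--\textbf{B5} coincide exactly with $R_1^i$, $R_2^i$ and their mutual and cross collisions. Handling the inverse cipher queries uniformly with the forward ones (so that the same pair $\langle x_i,y_i\rangle$ yields the same internal round values) is the point where the invertibility of the Group Feistel cipher must be invoked carefully.
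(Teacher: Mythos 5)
Your proposal is correct and follows essentially the same route as the paper's proof: your $R_1^i$ and $R_2^i$ are precisely the paper's $X_i$ and $Y_i$, your two equations $f(R_1^i)=(R_0^i)^{-1}R_2^i$ and $f(R_2^i)=(R_1^i)^{-1}R_3^i$ are the paper's consistency conditions on $f$, and the paper likewise invokes \textbf{B1}--\textbf{B5} to conclude that all $f$-inputs are pairwise distinct, so that both probabilities equal $|G|^{-(2q_c+q_f+q_g)}$. The only cosmetic difference is that you make the final averaging over good pairs $(k,g)$ explicit, whereas the paper leaves that step implicit after fixing such a pair.
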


The following proof is based on the proof in \citep{GentryRamzan} which refers to \citep{NaorReingold} for the first part of their argument. We need the generalization of this argument and so also include it.

\begin{proof}
Since $\sigma$ is a possible $\mathcal{A}$-transcript, we have for all $1\leq i \leq q_c, 1\leq j \leq q_f, 1\leq k \leq q_g$:
\begin{align*}
C_\mathcal{A}\left[ \lbrace \langle x_1,y_1 \rangle, \ldots, \langle x_{i},y_{i} \rangle \rbrace_P, \lbrace \langle x'_1,y'_1 \rangle, \ldots, \langle x'_{j},y'_{j} \rangle \rbrace_f, \lbrace \langle x''_1,y''_1 \rangle, \ldots, \langle x''_{k},y''_{k} \rangle \rbrace_g \right] \\ \hspace*{70pt}\in \lbrace (+,x_{i+1}), (-,y_{i+1}), (\mathcal{O}^f, x'_{j+1}), (\mathcal{O}^g,x''_{k+1})\rbrace .
\end{align*}
Therefore, $T_{\tilde{R}}=\sigma$ if and only if $\forall 1\leq i \leq q_c, \forall 1\leq j \leq q_f$, and $\forall 1\leq k \leq q_g$, the $i,j,k$'th respective answers $\tilde{R}$ gives are $y_i$ or $x_i$, and $x'_j$ and $x''_k$, respectively. As $\mathcal{A}$ never repeats any part of a query, we have, by the definition of $\tilde{R}$, that the $i$'th cipher-query answer is an independent and uniform element of $G^2$, and as $f$ and $g$ were modelled as random function oracles, so too will their oracle outputs be independent and uniform elements of $G$. Hence,
\begin{align*}
\underset{\tilde{R}}{Pr}\left[ T_{\tilde{R}} = \sigma \right] = |G|^{-(2q_c+q_f+q_g)}.
\end{align*}

For the second part of this proof, we fix $k,g$ such that $\sigma \not\in \textsf{Bad}(k,g)$ and seek to compute $\underset{f,h}{Pr}\left[ T_{\tilde{\Psi}} = \sigma \right]$.
Since $\sigma$ is a possible $\mathcal{A}$-transcript, we have that $T_{\tilde{\Psi}}= \sigma$ if and only if
\begin{itemize}
\item $y_i = \mathcal{F}_{g,f,f,g}(x_i\cdot k)\cdot k$ for all $1\leq i \leq q_c$,
\item $y'_j = f(x'_j)$ for all $1\leq j \leq q_f$, and
\item $y''_k = g(x''_k)$ for all $1\leq k \leq q_g$ (note that $g=h$ here.)
\end{itemize}
If we define
\begin{align*}
X_i &:= x_i^L\cdot k^L \cdot g(x_i^R\cdot k^R) \\
Y_i &:= y_i^R \cdot (k^R)^{-1} \cdot \left(g(y_i^L\cdot (k^L)^{-1})\right)^{-1},
\end{align*}
then $(y_i^L, y_i^R) = \tilde{\Psi}(x_i^L,x_i^R)$  if and only if
\begin{center}
$k^R\cdot f(X_i) = (x_i^R)^{-1} \cdot Y_i$ \hspace*{5pt}  and \hspace*{5pt}  $X_i\cdot f(Y_i) = y_i^L \cdot (k^L)^{-1}$,
\end{center}
where the second equality of the latter is equivalent to $(k^L)^{-1}\cdot (f(Y_i))^{-1} = (y_i^L)^{-1}\cdot X_i$. Observe that, for all $1 \leq i < j \leq q_c$, $X_i \neq X_j$ (by \textit{\textbf{B1}}) and $Y_i \neq Y_j$ (by \textit{\textbf{B2}}.) Similarly, $1 \leq i < j \leq q_c$, $X_i \neq Y_j$ (by \textit{\textbf{B3}}.) Also, for all $1 \leq i \leq q_c$ and for all $1 \leq j \leq q_f$, $x'_j \neq X_i$ (by \textit{\textbf{B4}}) and $x'_j \neq Y_i$ (by \textit{\textbf{B5}}.) Hence, $\sigma \not\in \textsf{Bad}(k,g)$ implies that all inputs to $f$ are distinct. This then implies that $Pr_{f,h}\left[ T_{\tilde{\Psi}} = \sigma \right] = |G|^{-(2q_c+q_f+q_g)}$ as $h$ was also modelled as a random function, independent from $g$.
Thus, as we assumed that $k$ and $g$ were chosen such that $\sigma \not\in \textsf{Bad}(k,g)$,
\begin{align*}
\underset{\tilde{\Psi}}{Pr}\left[ T_{\tilde{\Psi}} = \sigma | \sigma \not\in \textsf{Bad}(k,g)\right] &= |G|^{-(2q_c+q_f+q_g)} =  \underset{\tilde{R}}{Pr}\left[ T_{\tilde{R}} = \sigma \right].
\end{align*}
\end{proof}

\subsection{Proof of Theorem~\ref{GentryRamzanMain}}
To complete the proof of Theorem~\ref{GentryRamzanMain}, we combine the above lemmas into the following probability estimation.

\begin{proof}[Proof of Theorem~\ref{GentryRamzanMain}]
Let $\Gamma$ be the set of all possible and consistent $\mathcal{A}$-transcripts $\sigma$ such that $\mathcal{A}(\sigma)=1$. In the following, we ease notation, for the sake of the reader. We let $\textsf{BadG}(k)$ be denoted by $BadG$ and $\textsf{Bad}(k,g)$ by $Bad$. Furthermore, we abbreviate inconsistency as $incon.$. Let us consider the cases between $\Psi,\tilde{\Psi}$ and $\tilde{R}$.

\begin{footnotesize}
\begin{align*}
&\left| Pr_{\Psi}\left[ C_\mathcal{A}(T_\Psi)=1\right] - Pr_{\tilde{\Psi}}\left[ C_\mathcal{A}(T_{\tilde{\Psi}})=1\right] \right| \\
&\leq \left| \sum_{\sigma\in\Gamma} \left( Pr_{\Psi}\left[ T_\Psi = \sigma \right] - Pr_{\tilde{\Psi}}\left[ T_{\tilde{\Psi}} = \sigma \right]\right) \right| + Pr_{\tilde{\Psi}}\left[ T_{\tilde{\Psi}} \hspace*{4pt} incon.\right] \\
&\leq \sum_{\sigma\in \Gamma} \left| Pr_\Psi \left[ T_{\Psi} = \sigma \mid \sigma \not\in BadG \right] - Pr_{\tilde{\Psi}}\left[ T_{\tilde{\Psi}} = \sigma \right] \right| \cdot Pr_k \left[ \sigma \not\in BadG \right] \\
	&\hspace*{15pt}+ \left| \sum_{\sigma \in \Gamma} \left( Pr_{\Psi} \left[ T_\Psi = \sigma \mid \sigma \in BadG \right] - Pr_{\tilde{\Psi}} \left[ T_{\tilde{\Psi}} = \sigma \right]\right) \cdot Pr_k \left[ \sigma \in BadG \right]\right| \\
		&\hspace*{30pt}+ Pr_{\tilde{\Psi}}\left[ T_{\tilde{\Psi}} \hspace*{4pt} incon. \right] \\
&\leq \left| \sum_{\sigma \in \Gamma} \left( Pr_{\Psi} \left[ T_\Psi = \sigma \mid \sigma \in BadG \right] - Pr_{\tilde{\Psi}} \left[ T_{\tilde{\Psi}} = \sigma \right]\right) \cdot Pr_k \left[ \sigma \in BadG \right]\right| + q_c(q_c-1)|G|^{-1},
\end{align*}
\end{footnotesize}
where we in the last estimate used Lemma~\ref{NotBadPsitoBarPsi} and a consideration of the maximal amount of possible inconsistent pairs.

At the same time,
\begin{footnotesize}
\begin{align*}
&\left| Pr_{\tilde{\Psi}}\left[ C_\mathcal{A}(T_{\tilde{\Psi}})=1\right] - Pr_{\tilde{R}}\left[ C_\mathcal{A}(T_{\tilde{R}})=1\right] \right| \\
&\leq \left| \sum_{\sigma\in\Gamma} \left( Pr_{\tilde{\Psi}}\left[ T_{\tilde{\Psi}} = \sigma \right] - Pr_{\tilde{R}}\left[ T_{\tilde{R}} = \sigma \right]\right) \right| + Pr_{\tilde{R}}\left[ T_{\tilde{R}} \hspace*{4pt} incon.\right] + Pr_{\tilde{\Psi}}\left[ T_{\tilde{\Psi}} incon. \right] \\
&\leq \sum_{\sigma\in \Gamma} \left| Pr_{\tilde{\Psi}} \left[ T_{\tilde{\Psi}} = \sigma \mid \sigma \not\in Bad \right] - Pr_{\tilde{R}}\left[ T_{\tilde{R}} = \sigma \right] \right| \cdot Pr_k \left[ \sigma \not\in Bad \right] \\
	&\hspace*{15pt} + \left| \sum_{\sigma \in \Gamma} \left( Pr_{\tilde{\Psi}} \left[ T_{\tilde{\Psi}} = \sigma \mid \sigma \in Bad \right] - Pr_{\tilde{R}} \left[ T_{\tilde{R}} = \sigma \right]\right) \cdot Pr_k \left[ \sigma \in Bad \right]\right| \\
		&\hspace*{30pt}+ Pr_{\tilde{R}}\left[ T_{\tilde{R}} \hspace*{4pt} incon. \right] + Pr_{\tilde{\Psi}}\left[ T_{\tilde{\Psi}} incon. \right] \\
&\leq \left| \sum_{\sigma \in \Gamma} \left( Pr_{\tilde{\Psi}} \left[ T_{\tilde{\Psi}} = \sigma \mid \sigma \in Bad \right] - Pr_{\tilde{R}} \left[ T_{\tilde{R}} = \sigma \right]\right) \cdot Pr_k \left[ \sigma \in Bad \right]\right| + \begin{pmatrix} q_c \\ 2 \end{pmatrix}|G|^{-2} + 2\begin{pmatrix} q_c \\ 2 \end{pmatrix}|G|^{-1},
\end{align*}
\end{footnotesize}
where we in the last estimate used Lemma~\ref{Lem43} and the proof of Lemma~\ref{Lem37}.

Let us use the above in a temporary estimate,
\begin{footnotesize}
\begin{align}
&\left| Pr_\Psi\left[ C_\mathcal{A}(T_\Psi)=1\right] - Pr_R\left[C_\mathcal{A}(T_R)=1\right] \right| \nonumber \\
&= \left| Pr_\Psi\left[ C_\mathcal{A}(T_\Psi)=1\right] - Pr_{\tilde{\Psi}}\left[C_\mathcal{A}(T_{\tilde{\Psi}})=1\right] \right| \nonumber \\
	&\hspace*{40pt}+ \left| Pr_{\tilde{\Psi}}\left[ C_\mathcal{A}(T_{\tilde{\Psi}})=1\right] - Pr_{\tilde{R}}\left[C_\mathcal{A}(T_{\tilde{R}})=1\right] \right| \nonumber \\
		&\hspace*{80pt}+ \left| Pr_{\tilde{R}}\left[ C_\mathcal{A}(T_{\tilde{R}})=1\right] - Pr_R\left[C_\mathcal{A}(T_R)=1\right] \right| \nonumber \\
&\leq \left| \sum_{\sigma \in \Gamma} \left( Pr_{\Psi} \left[ T_\Psi = \sigma \mid \sigma \in BadG \right] - Pr_{\tilde{\Psi}} \left[ T_{\tilde{\Psi}} = \sigma \right]\right) \cdot Pr_k \left[ \sigma \in BadG \right]\right| + q_c(q_c-1)|G|^{-1} \nonumber\\
	&\hspace*{30pt}+ \left| \sum_{\sigma \in \Gamma} \left( Pr_{\tilde{\Psi}} \left[ T_{\tilde{\Psi}} = \sigma \mid \sigma \in Bad \right] - Pr_{\tilde{R}} \left[ T_{\tilde{R}} = \sigma \right]\right) \cdot Pr_k \left[ \sigma \in Bad \right]\right| + \begin{pmatrix} q_c \\ 2 \end{pmatrix}|G|^{-2} + 2\begin{pmatrix} q_c \\ 2 \end{pmatrix}|G|^{-1} \nonumber\\
		&\hspace*{60pt}+ \begin{pmatrix} q_c \\ 2 \end{pmatrix}|G|^{-2} \label{GenRamMainEstimate},
\end{align}
\end{footnotesize}
where we in the last estimate also used Lemma~\ref{Lem37}.

We may assume WLOG that
\begin{footnotesize}
\begin{align*}
\sum_{\sigma \in \Gamma} Pr_\Psi\left[ T_\Psi = \sigma \mid \sigma \in BadG \right] \cdot Pr_k\left[ \sigma \in BadG \right] \leq \sum_{\sigma \in \Gamma} Pr_{\tilde{\Psi}}\left[ T_{\tilde{\Psi}} = \sigma\right] \cdot Pr_k\left[ \sigma \in BadG \right]
\end{align*}
\end{footnotesize}
and likewise,
\begin{footnotesize}
\begin{align*}
\sum_{\sigma \in \Gamma} Pr_{\tilde{\Psi}}\left[ T_{\tilde{\Psi}} = \sigma \mid \sigma \in BadG \right] \cdot Pr_k\left[ \sigma \in BadG \right] \leq \sum_{\sigma \in \Gamma} Pr_{\tilde{R}}\left[ T_{\tilde{R}} = \sigma\right] \cdot Pr_k\left[ \sigma \in BadG \right],
\end{align*}
\end{footnotesize}
such that by Lemma~\ref{Lem39}, respectively Lemma~\ref{Lem42}, we get the following continued estimate from (\ref{GenRamMainEstimate}), using the triangle inequality and that $|\Gamma|\leq |G|^{2q_c+q_f+q_g}$ (every combination of query elements).
\begin{footnotesize}
\begin{align*}
&\left| Pr_\Psi\left[ C_\mathcal{A}(T_\Psi)=1\right] - Pr_R\left[C_\mathcal{A}(T_R)=1\right] \right| \\
&\leq 2 \sum_{\sigma \in \Gamma} Pr_{\tilde{\Psi}} \left[T_{\tilde{\Psi}} = \sigma \right]\cdot Pr_k \left[ \sigma \in BadG \right] + 2q_c(q_c-1)|G|^{-1} \\
	&\hspace*{30pt}+ 2\sum_{\sigma \in \Gamma} Pr_{\tilde{R}}\left[ T_{\tilde{R}}= \sigma\right] \cdot Pr_k \left[\sigma \in Bad \right] \\
		&\hspace*{60pt}+ 2\begin{pmatrix} q_c \\ 2 \end{pmatrix}|G|^{-2} \\
&\leq 2|\Gamma|\cdot |G|^{-(2q_c+q_f+q_g)}\cdot \max_{\sigma \in \Gamma} Pr_k \left[ \sigma \in BadG \right] + 2q_c(q_c-1)|G|^{-1} \\
	&\hspace*{30pt}+ 2|\Gamma|\cdot |G|^{-(2q_c+q_f+q_g)}\cdot\max_{\sigma\in\Gamma} Pr_k \left[ \sigma \in Bad \right] \\
		&\hspace*{60pt}+ 2\begin{pmatrix} q_c \\ 2 \end{pmatrix}|G|^{-2} \\
&\leq 4q_gq_c\cdot |G|^{-1} + 2q_c(q_c-1)|G|^{-1} + 2\left(q_c^2 + 2q_fq_c + 2\begin{pmatrix} q_c \\ 2 \end{pmatrix}\right)|G|^{-1} + 2\begin{pmatrix} q_c \\ 2 \end{pmatrix}|G|^{-2} \\
&= (2q_c^2+4q_gq_c + 4q_fq_c + 2q_c^2-2q_c)|G|^{-1} + 2\begin{pmatrix} q_c \\ 2 \end{pmatrix}\left(2|G|^{-1} + |G|^{-2}\right).
\end{align*}
\end{footnotesize}
\end{proof}

If we denote the total amount of queries as $q=q_c+q_f+q_g$, then we may quickly estimate and reword the main theorem as:

\begin{thm}
Let $f,g$ be modelled as random oracles, let $k=(k^L,k^R)\in_R G^2$, let $\Psi_k^{f,g}(x)=\mathcal{F}_{g,f,f,g}(x\cdot k)\cdot k$, and let $R\in_R \mathfrak{P}_{G^2 \rightarrow G^2}$. Then, for any $4$-oracle adversary $\mathcal{A}$, with at most $q$ total queries, we have
\begin{align*}
\left| Pr\left[ \mathcal{A}^{\Psi,\Psi^{-1},f,g} = 1\right] - Pr\left[ \mathcal{A}^{R,R^{-1},f,g} = 1\right] \right| \leq 2(3q^2-2q)|G|^{-1} + (q^2-q)|G|^{-2}.
\end{align*}
\end{thm}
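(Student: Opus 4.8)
The plan is to obtain this statement directly from Theorem~\ref{GentryRamzanMain} by the substitution $q = q_c + q_f + q_g$, since any $4$-oracle adversary making $q$ total queries in particular makes $q_c \le q$ cipher queries, $q_f \le q$ queries to $f$, and $q_g \le q$ queries to $g$. First I would rewrite the bound of Theorem~\ref{GentryRamzanMain} in a fully expanded form. Using $2\binom{q_c}{2} = q_c^2 - q_c$ and distributing over $2|G|^{-1} + |G|^{-2}$, the right-hand side collapses to
\begin{align*}
\left(5q_c^2 + 4q_fq_c + 4q_gq_c - 3q_c\right)|G|^{-1} + \left(q_c^2 - q_c\right)|G|^{-2}.
\end{align*}
This isolates a $|G|^{-1}$ coefficient and a $|G|^{-2}$ coefficient, each a polynomial in $q_c, q_f, q_g$, which I would then bound separately in terms of $q$.

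For the $|G|^{-2}$ term the coefficient is $q_c^2 - q_c$; since $x \mapsto x^2 - x$ is nondecreasing for $x \ge 1$ and $q_c \le q$, this is at most $q^2 - q$, which already matches the claimed coefficient of $|G|^{-2}$. For the $|G|^{-1}$ term the crucial observation is that the two cross terms collapse, namely $4q_fq_c + 4q_gq_c = 4q_c(q_f + q_g) = 4q_c(q - q_c)$. Substituting this reduces the $|G|^{-1}$ coefficient to $q_c^2 + 4qq_c - 3q_c$, a single-variable expression in $q_c \in [0,q]$.

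It then remains to maximize this over $q_c$. The expression is increasing in $q_c$ (its derivative $2q_c + 4q - 3$ is positive for $q \ge 1$), so it is maximized at $q_c = q$, giving $5q^2 - 3q$, and finally $5q^2 - 3q \le 6q^2 - 4q = 2(3q^2 - 2q)$ because $q^2 - q \ge 0$ for $q \ge 1$. Combining the two bounds then yields
\begin{align*}
\left| Pr\left[ \mathcal{A}^{\Psi,\Psi^{-1},f,g} = 1\right] - Pr\left[ \mathcal{A}^{R,R^{-1},f,g} = 1\right] \right| \le 2(3q^2 - 2q)|G|^{-1} + (q^2 - q)|G|^{-2},
\end{align*}
as required.

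The step I expect to require the most care is not conceptual but bookkeeping: correctly expanding the binomial term and verifying the cross-term collapse so that the $|G|^{-1}$ coefficient reduces to the single-variable form $q_c^2 + 4qq_c - 3q_c$. Once that reduction is in hand, monotonicity in $q_c$ together with the trivial bounds $q_c, q_f, q_g \le q$ makes the maximization routine, so there is no genuine obstacle. The only mild caveat is the harmless assumption $q \ge 1$, under which the final step $q^2 - q \ge 0$ holds; the case $q = 0$ is vacuous.
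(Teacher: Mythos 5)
Your proposal is correct and follows essentially the same route as the paper: both derive the restated bound directly from Theorem~\ref{GentryRamzanMain} by elementary algebra using $q_c+q_f+q_g\le q$ (the paper completes $2(q_c^2+q_fq_c+q_gq_c)-q_c$ to $2q^2-q$ by adding nonnegative cross terms and bounds $2\binom{q_c}{2}\le q^2-q$ separately, while you collapse everything into the $|G|^{-1}$ and $|G|^{-2}$ coefficients and maximize a single-variable quadratic in $q_c$). The algebraic bookkeeping differs slightly but the decomposition and final estimate are the same, and your version is valid.
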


\begin{proof}
Given Theorem~\ref{GentryRamzanMain}, we get, by using that $q_f,q_g\geq 0$,
\begin{align*}
&q_c^2 +2q_fq_c + 2q_gq_c + q_c^2 - q_c \\ 
	&= 2(q_c^2 + q_fq_c + q_gq_c) - q_c \\
	&\leq 2(q_c^2 + q_fq_c + q_gq_c) + (2(q_f+q_g)^2 + 2(q_fq_c+q_gq_c) -q_f-q_g)-q_c \\
	&= 2(q_c^2 + 2q_fq_c  + q_f^2 + 2q_fq_g + 2q_gq_c+ q_g^2) - (q_c+q_f+q_g) \\
	&= 2(q_c+q_f+q_g)^2 - (q_c+q_f+q_g) \\
	&= 2q^2-q.
\end{align*}
As $2\cdot\begin{pmatrix} q_c \\ 2 \end{pmatrix} =q_c^2-q_c \leq q^2-q$, we get the final estimate by some reordering.
\end{proof}

\newpage
\section{Quantum Pseudorandom Permutations}\label{ZhandryResultSection}
A major result in Quantum Cryptography is Zhandry's result \citep{Zhandry} on the implication of the existence of Quantum One-Way Functions to the existence of Quantum Pseudorandom Permutations . His work escapes the use of the existence of One-Way Functions (OWFs) implying the existence of Pseudorandom Generators, implying the existence of Pseudorandom Functions , which in turn imply the existence of Pseudorandom Permutations (PRPs), as we know from the classical setting. Instead, Zhandry analyzes PRP constructions and defines an information theoretic object called a Function to Permutation Converter (FPC) which he claims lies implicit in many constructions of PRPs (especially those from OWFs.) The well-written article refers to many classical results to get its main result, however, one lemma in particular refers to a collection of works based around card shuffles. In the following, we attempt to reproduce Zhandry's work, but over an arbitrary group, and in doing so, aim to generalize the card shuffle results as well.

\subsection{Generalizing Zhandry's Main Result}
Let us first consider what it means to be a Quantum Pseudorandom Permutation. In the classical setting, we consider an adversary employing an efficient computer, i.e. it is a polynomial-time algorithm. The classical adversary interacts with the oracle, sending and receiving bit strings. In the quantum setting, we have an adversary who can use a quantum computer, which does not have polynomial running time. The oracle interactions may be modelled in two ways: using only classical queries (e.g. as we modelled in the first part of the thesis) or using quantum superposition queries. It is not entirely important for this thesis what the quantum superposition queries are but it should be noted that using such superposition queries, an adversary may in theory query the entire message space using only polynomially many queries, giving us a very strong security model. In Zhandry's case, he shows security against the latter adversary: a quantum computer asking quantum superposition queries.

Intuitively, a Function to Permutation Converter (FPC) is an algorithm $P$ on a domain and image space $X$, making oracle queries to a function $O$, such that $P^O$, i.e. the algorithm $P$ making calls to the oracle $O$, is a permutation on $X$. It must also hold that $P^O$ is indistinguishable from a uniformly random permutation, whenever $O$ is a uniformly random function. Note that this intuitive definition works just as well if we consider $X=G$ to be an arbitrary group.

Using a classical FPC which is secure on the entire domain $X$, i.e. when the adversary makes $|X|$ queries, we may build a quantum-secure FPC by the following reduction: Assume there exists a quantum FPC adversary which the classical FPC adversary runs as a subroutine. As the classical FPC adversary may query the entire domain, it can answer any query made by the quantum FPC adversary and the reduction may be completed. If we assume that the oracle function employed by the quantum-secure FPC is a quantum-secure PRF, then we immediately get a quantum-secure PRP.

\subsubsection{Preliminaries}
We wish to use Zhandry's notation, for its convenience, and therefore generalize some notions and introduce others.

Recall that we consider the positive integer $\lambda$ to be the security parameter, specified in unary per convention. We assume that for each $\lambda$ there exists a uniquely specified group $G(\lambda) = G_\lambda \in \mathcal{G}$ with size $|G_\lambda| \geq 2^\lambda$.

\begin{defn}
A \textbf{\textsf{CComp-} (resp. \textsf{QComp-}) one-way function} is an efficient classical function $\text{OWF}: G_\lambda \rightarrow G_*$ such that, for any efficient classical (resp. quantum) adversary $\mathcal{A}$, the probability that $\mathcal{A}$ inverts $\text{OWF}$ on a random input is negligible. That is, there exists a negligible function $negl(\cdot)$ such that
\begin{align*}
\underset{x \in_R G_\lambda}{Pr}\left[ OWF(\mathcal{A}(\lambda, \text{OWF}(x)))= \text{OWF}(x) \right] < negl(\lambda).
\end{align*}
\end{defn}

\begin{defn}
A \textbf{\textsf{(C,Q)-}pseudorandom function (PRF)}, for a pair $\textsf{(C,Q)}\in \lbrace \textsf{(CComp,CQuery)},\textsf{(QComp,CQuery)},\textsf{(QComp,QQuery)}\rbrace$ is a family of efficient classical functions $\text{PRF}_{m,n}:G_\lambda \times G_m \rightarrow G_n$ such that the following holds. For any efficient (resp. quantum) adversary $\mathcal{A}$, $\mathcal{A}$ cannot distinguish $\text{PRF}_{m,n}(k,\cdot)$, for a random $k\in_R G_\lambda$, from a uniformly random function $O\in_R \mathfrak{F}_{G_m\rightarrow G_n}$, the set of functions from $G_m$ to $G_n$. That is, there exists a negligible function $negl(\cdot)$ such that
\begin{align*}
\left| \underset{k\in_R G_{\lambda}}{Pr}\left[ \mathcal{A}^{\text{PRF}_{m,n}(k,\cdot)}(\lambda) = 1 \right] - Pr\left[ \mathcal{A}^{O(\cdot)}(\lambda) = 1 \right] \right| < negl(\lambda).
\end{align*}
\end{defn}

If $PRF: G \times G \rightarrow G$, for some group $G\in \mathcal{G}$, we say that it is a \textbf{pseudorandom function on $G$}.

Let us explain some of the notation used in the above definitions. $\textsf{(C,Q)}$ represents the computational capability of the adversary and the type of queries it is allowed. Hence, $\textsf{(CComp,CQuery)}$ implies classical computation with classical queries, $\textsf{(QComp,CQuery)}$ implies quantum computation with classical queries, and $\textsf{(QComp,QQuery)}$ implies quantum computation with quantum queries. Note that we do not consider classical adversaries making quantum queries.

We make one, rather large, assumption in this part of the thesis, which is that we assume the following theorem holds over groups $G\in \mathcal{G}$.

\begin{thm}\label{OWimplyPRF}
If \textsf{CComp}-one-way functions exist, then so do \textsf{(CComp,CQuery)}-PRFs. Moreover, if \textsf{QComp}-one-way functions exist, then so do \textsf{(QComp,QQuery)}-PRFs.
\end{thm}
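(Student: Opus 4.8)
The plan is to reduce the group statement to the well-known bit-string versions of both implications through an efficient encoding, and then assemble the standard construction chain OWF $\to$ PRG $\to$ PRF in both the classical and quantum regimes.

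First I would fix, for each $\lambda$ (and more generally for each index $m,n$), an efficiently computable and efficiently invertible bijection between $G_\lambda$ (resp.\ $G_m$, $G_n$) and an initial segment of $\lbrace 0,1\rbrace^{poly(n)}$; such a bijection exists since $|G_\lambda|\leq 2^{poly(n)}$ by our standing assumption. Under this encoding a \textsf{CComp}- (resp.\ \textsf{QComp}-) one-way function on groups is, up to composition with the encoding and decoding maps, exactly a one-way function on bit strings, and a family $\text{PRF}_{m,n}$ on groups is a keyed function family on bit strings. Hence it suffices to build a bit-string PRF of the appropriate input and output lengths and then pull it back through the bijection. Indistinguishability is preserved because the bijection is a fixed, public, efficiently computable map, so any group-distinguisher yields a bit-string-distinguisher with the same advantage, and conversely.

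For the classical implication I would then invoke the two workhorse reductions. The Goldreich--Levin theorem converts the OWF into one with a hardcore predicate, from which a length-doubling pseudorandom generator is obtained (the HILL construction), and the GGM tree construction converts any PRG into a \textsf{(CComp,CQuery)}-PRF, with security proved by a hybrid argument over the depth of the tree. Each reduction is black-box and efficient, so the negligible-advantage bound propagates through the chain. For the quantum implication the same constructions are used, but security must now hold against an adversary issuing quantum superposition queries: the Goldreich--Levin reduction and the HILL stretching step are black-box and remain sound against \textsf{QComp} adversaries, while the essential input is Zhandry's theorem \citep{Zhandry} that the GGM construction, instantiated with a \textsf{QComp}-secure PRG, yields a \textsf{(QComp,QQuery)}-PRF.

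The main obstacle is precisely this last point. The classical GGM security proof walks a hybrid from the root of the tree down to the leaves, replacing the output of one PRG invocation at a time; against a quantum-query adversary, who can evaluate the function on all inputs in a single superposition query, one cannot argue point-by-point over the exponentially many leaves. Zhandry circumvents this with his small-range distribution technique together with an oracle-indistinguishability bound on how well $q$ quantum queries can separate two oracle distributions, effectively reducing the number of ``fresh randomness'' points that must be controlled from $|G_m|$ down to $O(q)$. Faithfully invoking this argument, and checking that it is insensitive to whether the domain is $\lbrace 0,1\rbrace^n$ or an encoded group $G_m$, is the delicate part; everything else is the encoding bookkeeping and the classical reductions above. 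This is why it is reasonable, as we do here, to take the theorem as a cited assumption rather than to reprove the entire chain over $\mathcal{G}$.
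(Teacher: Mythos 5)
Your proposal should first be measured against what the paper actually does: the paper does \emph{not} prove Theorem~\ref{OWimplyPRF} at all. It is introduced verbatim as ``one, rather large, assumption,'' and the citation of \citep{GGM86, HILL99, 2Zhandry} is offered only for the bit-string case. So you are attempting strictly more than the paper, namely to derive the group version from the bit-string version by an encoding reduction. The chain you invoke on the bit-string side (Goldreich--Levin/HILL for OWF $\to$ PRG, GGM for PRG $\to$ PRF, and Zhandry's small-range-distribution argument for \textsf{QQuery} security of GGM) is exactly the cited chain and is sound there.

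The genuine gap is your very first step. From $|G_\lambda|\leq 2^{poly(n)}$ you conclude that an efficiently computable, efficiently invertible bijection between $G_\lambda$ (and $G_m$, $G_n$) and an initial segment of $\lbrace 0,1\rbrace^{poly(n)}$ ``exists.'' The cardinality bound only says group elements can be \emph{represented} in polynomially many bits; it does not provide an efficient ranking/unranking map for an arbitrary member of $\mathcal{G}$, which is the family of \emph{all} finite groups. Such a map amounts to an efficient ordering and enumeration of the group, and whether that exists depends on how the group is presented -- it is not implied by the size bound. The paper itself treats this as a substantive extra hypothesis: Lemma~\ref{AnyDexists} is asserted only ``as long as we may order and encode a group efficiently,'' and Section~\ref{ShuffleSection} exists precisely to try to remove that encoding assumption (and only partially succeeds). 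In other words, your reduction silently re-introduces the assumption that the paper flags as nontrivial, which is exactly why the paper states Theorem~\ref{OWimplyPRF} as an assumption rather than proving it. A secondary, fixable omission: even granted the bijection, pulling a bit-string PRF back to a PRF with outputs in $G_n$ requires a range adjustment when $|G_n|$ is not a power of two (a PRF uniform on $\lbrace 0,1\rbrace^n$ is not uniform on an initial segment of size $|G_n|$); the standard remedy is to stretch the output and reduce modulo $|G_n|$, incurring a statistical error that must be folded into the negligible bound, and this should be stated rather than absorbed into ``pull it back through the bijection.''
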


The above theorem, in the bit strings case, follows by combining the results of \citep{GGM86, HILL99, 2Zhandry}.

We also redefine our notion of a Pseudorandom Permutation, as this will be crucial to the remainder of Zhandry's work, btu does not impact our prior work.

\begin{defn}
A \textbf{\textsf{(C,Q,D)}-pseudorandom permutation (PRP)}, for an ordered pair $\textsf{(C,Q)}\in \lbrace \textsf{(CComp,CQuery)},\textsf{(QComp,CQuery)},\textsf{(QComp,QQuery)}\rbrace$ and $\textsf{D}\in \lbrace \textsf{LargeD, AnyD}\rbrace$ is a family of efficient classical function pairs $\textsf{PRP}_o: G_\lambda \times G_o \rightarrow G_o$ and $\textsf{PRP}_o^{-1}: G_\lambda \times G_o \rightarrow G_o$ such that the following holds.

First, for every key $k$ and integer $o$, the functions $\textsf{PRP}_o(k,\cdot)$ and $\textsf{PRP}_o^{-1}(k,\cdot)$ are inverses of each other. That is, $\textsf{PRP}_o^{-1}(k,\textsf{PRP}_o(k,x))=x$ for all $k,x$. This implies that $\textsf{PRP}_o(k,\cdot)$ is a permutation.

Second, for any polynomially-bounded $o=o(\lambda)$ and any efficient adversary $\mathcal{A}$, $\mathcal{A}$ cannot distinguish $\textsf{PRP}_o(k,\cdot)$ for a random $k\in_R G_\lambda$ from a uniformly random permutation $P\in_R \mathcal{P}_{G_o\rightarrow G_o}$. We consider the SPRP variant where $\mathcal{A}$ has access to both $P$ and $P^{-1}$. That is, there exists a negligible $negl(\cdot)$ such that
\begin{footnotesize}
\begin{align*}
\left| \underset{k\in_R G_\lambda}{Pr}\left[ \mathcal{A}^{\textsf{PRP}_o(k,\cdot),\textsf{PRP}_o^{-1}(k,\cdot)}(\lambda)=1 \right] - Pr \left[ \mathcal{A}^{P(\cdot),P^{-1}(\cdot)}(\lambda)=1\right] \right| < negl(\lambda).
\end{align*}
\end{footnotesize}
\end{defn}

We remark that \textsf{(C,Q)} works as before, but now the adversary also gets the same type of query access to the inverse permutations. We also introduced the notation $\textsf{D}$, where $\textsf{D} = \textsf{LargeD}$ means that we only require the security to hold for groups $G$ whose sizes are also lower-bounded by a polynomial and $\textsf{D} = \textsf{AnyD}$ means that security must hold for arbitrarily sized groups (small as well as large.) We often suppress the $o$ in $G_o$, and write $G$, if $o$ is already understood from the context.

Finally, we define the \textit{Group-based Function to Permutation Converters (GFPCs)}.

\begin{defn}
Let $\textsf{Q}\in \lbrace \textsf{CQuery, QQuery}\rbrace$, $\textsf{D}\in \lbrace \textsf{LargeD, AnyD}\rbrace$. Fix a sequence of functions $\mathcal{Q}$ in $o,\lambda$. A $(\textsf{Q,D},\mathcal{Q})$-\textbf{GFPC} is a family of pairs of efficient classical oracle algorithms $\textsf{F}_o,\textsf{R}_o$, where:
\begin{itemize}
\item $\textsf{F}_o,\textsf{R}_o$ take as input $\lambda$ and $x\in G_o$, and output $y\in G_o$,
\item There exist polynomials $m(o,\lambda),n(o,\lambda)$ such that $\textsf{F}_o,\textsf{R}_o$ make queries to a function $O: G_m \rightarrow G_n$,
\item $\textsf{F}_o,\textsf{R}_o$ are efficient, i.e. make at most $poly(o,\lambda)$ queries to $O$,
\item $\textsf{F}_o,\textsf{R}_o$ are inverses: $\textsf{R}_o^O(\lambda, \textsf{F}_o^O(\lambda,x))=x$ for all $x\in G_o$ and all oracles $O$,
\item $\textsf{F}_o,\textsf{R}_o$ are indistinguishable from a random permutation and its inverse, given query access.
\end{itemize}
\end{defn}

The last property is saying that if $o(\lambda)$ is any polynomially-bounded function (with the added lower bound if $\mathsf{D}=\mathsf{LargeD}$) and $q(\lambda)=q(o(\lambda),\lambda)$ is any function in $\mathcal{Q}$, then for any adversary $\mathcal{A}$ making at most $q=q(\lambda)$ queries (either classical or quantum, depending on $\mathsf{Q}$), there exists a negligible function $negl(\lambda)$ such that
\begin{align*}
\left| Pr \left[ \mathcal{A}^{F_o^{O}(\lambda,\cdot),R_o^{O}(\lambda,\cdot)}(\lambda)=1\right] - Pr \left[ \mathcal{A}^{P,P^{-1}}(\lambda)=1 \right] \right| < negl(\lambda).
\end{align*}

We remark here that we are unsure whether a GFPC lies in the construction of pseudorandom permutations on groups as we are simply generalizing Zhandry's results rather than his investigation. Our group-based constructions (EM, Feistel) used PRFs from $G$ to $G$ rather than from one group to another, which might of course be a possibility in other constructions, so we have allowed for this in the generalization of the FPCs.

\subsubsection{Main Results}
Recall that what we wish to show is, if quantum one-way functions exist, then so do quantum pseudorandom permutations. For this, we rely heavily on the following main lemma.

\begin{lem}\label{Lem50}
Let $\mathcal{Q}$ be the set of all polynomials. If (\textsf{C,Q})-PRFs exist and (\textsf{C,Q},$\mathcal{Q}$)-GFPCs exist, then (\textsf{C,Q,D})-PRPs exist.
\end{lem}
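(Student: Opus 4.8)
The plan is to instantiate the inner oracle of the GFPC with a PRF and to argue security by a three-step hybrid. Concretely, let $(\textsf{F}_o,\textsf{R}_o)$ be a $(\textsf{C,Q},\mathcal{Q})$-GFPC making queries to a function $O:G_m\rightarrow G_n$, for the associated polynomials $m=m(o,\lambda)$ and $n=n(o,\lambda)$, and let $\textsf{PRF}_{m,n}:G_\lambda\times G_m\rightarrow G_n$ be a $(\textsf{C,Q})$-PRF. I would define the candidate permutation pair by
\begin{align*}
\textsf{PRP}_o(k,x) &\defeq \textsf{F}_o^{\textsf{PRF}_{m,n}(k,\cdot)}(\lambda,x), \\
\textsf{PRP}_o^{-1}(k,x) &\defeq \textsf{R}_o^{\textsf{PRF}_{m,n}(k,\cdot)}(\lambda,x).
\end{align*}
The permutation and inverse properties are then immediate: the GFPC guarantees $\textsf{R}_o^O(\lambda,\textsf{F}_o^O(\lambda,x))=x$ for every oracle $O$ and all $x$, so the same holds with $O=\textsf{PRF}_{m,n}(k,\cdot)$, making $\textsf{PRP}_o(k,\cdot)$ and $\textsf{PRP}_o^{-1}(k,\cdot)$ mutual inverses. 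Efficiency is also immediate, since $\textsf{F}_o,\textsf{R}_o$ make at most $poly(o,\lambda)$ queries and $\textsf{PRF}$ is efficiently computable.

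For security, I would fix a polynomially-bounded $o=o(\lambda)$ (with the $\textsf{LargeD}$ lower bound if required) and an efficient adversary $\mathcal{A}$ making $q=q(\lambda)$ queries (classical or quantum as dictated by $\mathsf{Q}$), and consider three experiments. In Hybrid $0$, $\mathcal{A}$ is given $(\textsf{F}_o^{\textsf{PRF}(k,\cdot)},\textsf{R}_o^{\textsf{PRF}(k,\cdot)})$ for $k\in_R G_\lambda$; in Hybrid $1$ the PRF is replaced by a uniformly random $O\in_R\mathfrak{F}_{G_m\rightarrow G_n}$, giving $(\textsf{F}_o^{O},\textsf{R}_o^{O})$; and in Hybrid $2$, $\mathcal{A}$ is given a truly random permutation pair $(P,P^{-1})$ with $P\in_R\mathcal{P}_{G_o\rightarrow G_o}$. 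The goal is to bound the difference between $\mathcal{A}$'s acceptance probabilities in Hybrid $0$ and Hybrid $2$ by the triangle inequality across the two adjacent gaps.

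The gap between Hybrid $1$ and Hybrid $2$ I would bound directly by GFPC security: an adversary distinguishing $(\textsf{F}_o^{O},\textsf{R}_o^{O})$ with $O$ random from $(P,P^{-1})$ is precisely a GFPC distinguisher making $q$ queries, and since $q$ is a polynomial it lies in $\mathcal{Q}$, so this gap is negligible with no query blow-up (and the $\mathsf{D}$ parameter of the PRP is simply inherited from that of the GFPC). The gap between Hybrid $0$ and Hybrid $1$ I would bound by PRF security via a reduction: I build a PRF distinguisher $\mathcal{B}$ that runs $\mathcal{A}$ and answers each of $\mathcal{A}$'s permutation/inverse queries by internally executing $\textsf{F}_o$ or $\textsf{R}_o$, forwarding every inner $O$-call to $\mathcal{B}$'s own oracle (which is either $\textsf{PRF}(k,\cdot)$ or a random function). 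Then $\mathcal{B}$'s view reproduces Hybrid $0$ or Hybrid $1$ accordingly, $\mathcal{B}$ is efficient, and it makes at most $q\cdot poly(o,\lambda)$ queries, so the gap is negligible. Summing the two negligible gaps shows the total advantage is negligible, establishing that $(\textsf{PRP}_o,\textsf{PRP}_o^{-1})$ is a $(\textsf{C,Q,D})$-PRP.

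The hard part will be the PRF reduction in the quantum regime ($\mathsf{Q}=\textsf{QQuery}$): when $\mathcal{A}$ queries its permutation oracle in superposition, $\mathcal{B}$ must realize $\textsf{F}_o^{O}$ as a coherent unitary rather than a classical subroutine. This is possible because $\textsf{F}_o$ is a fixed classical algorithm making a bounded number of classical $O$-calls, so it compiles into a reversible circuit in which each $O$-call becomes a single quantum query to $\mathcal{B}$'s oracle; the $(\textsf{QComp,QQuery})$-PRF definition guarantees indistinguishability precisely against such superposition access, so the reduction remains valid. A secondary point to check is that instantiating $O$ with $\textsf{PRF}_{m,n}$ respects the GFPC's declared domain $G_m$ and range $G_n$, which holds because $m,n$ are polynomials and the PRF family is defined for all such domain/range pairs.
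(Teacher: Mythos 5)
Your proposal is correct and follows essentially the same route as the paper: the identical three-hybrid decomposition (PRF-instantiated GFPC, random-function-instantiated GFPC, truly random permutation), with the first gap bounded by a PRF reduction that simulates $\textsf{F}_o/\textsf{R}_o$ and forwards inner oracle calls, and the second gap bounded directly by the GFPC indistinguishability property. Your additional remarks on verifying the inverse/efficiency properties and on coherently simulating $\textsf{F}_o^{O}$ for superposition queries make explicit details the paper's proof leaves implicit, but the argument is the same.
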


\begin{proof}
In order to prove the lemma, it suffices to show it in the case of the triple $(\textsf{QComp,QQuery, AnyD})$ as the other cases are analogous. We prove it by a hybrid argument.

\begin{itemize}
\item[\textbf{Hybrid 0}:] Give the adversary the permutations $P(x) = PRP_o(k,x) = F_o^{PRF_{m,n}(k,\cdot)}(\lambda,x)$ and $P^{-1}(x) = PRP_o^{-1}(k,x) = R_o^{PRF_{m,n}(k,\cdot)}(\lambda,x)$, where $k\in_R G_\lambda$. This is the base case where the GFPCs employ the PRF.
\item[\textbf{Hybrid 1}:] Give the adversary the permutations $P(x) = F_o^{O(\cdot)}(\lambda, x)$ and $P^{-1}(x) = R_o^{O(\cdot)}(\lambda, x)$, for a uniformly random $O\in_R \mathfrak{F}_{G_m\rightarrow G_n}$. This is our intermediate case, where the PRF is exchanged with a random function.
\item[\textbf{Hybrid 2}:] Give the adversary the permutations $P(x) = R(x)$ and $P^{-1}(x) = R^{-1}(x)$, where $R,R^{-1}$ are uniformly random permutations on $G_o$.
\end{itemize}

For the case between \textbf{Hybrid 0} and \textbf{Hybrid 1}, we make the following reduction. Assume there exists an adversary $\mathcal{A}$ which distinguishes between \textbf{Hybrid 0} and \textbf{Hybrid 1} with non-negligible probability. Construct the PRF distinguisher $\mathcal{A}^{f}(\lambda) = \mathcal{A}^{F_o^{f(\cdot)}(\lambda,\cdot),R_o^{f(\cdot)}(\lambda,\cdot)}(\lambda)$, where $f:G_m \rightarrow G_n$. If $f(\cdot) = PRF_{m,n}(k,\cdot)$, then $\mathcal{A}$'s view is identical to \textbf{Hybrid 0}. If $f(\cdot) = O(\cdot)$, then $\mathcal{A}$'s view is identical to \textbf{Hybrid 1}. For any quantum query $\mathcal{A}$ asks its oracle, $\mathcal{A}$ may reply using only a polynomial amount of queries to its own oracle, by the third property of GFPCs. The advantage of $\mathcal{A}$ distinguishing between the PRF and a random function is equal to $\mathcal{A}$'s distinguishing possibility i.e. non-negligible, a contradiction. We therefore get that for some negligible function $negl'(\cdot)$,
\begin{footnotesize}
\begin{align*}
&\left| \underset{k\in_R G_{\lambda}}{Pr}\left[ \mathcal{A}^{F_o^{PRF_{m,n}(k,\cdot)}(\lambda,\cdot),R_o^{PRF_{m,n}(k,\cdot)}(\lambda,\cdot)}(\lambda) = 1 \right] - Pr \left[ \mathcal{A}^{F_o^{O(\cdot)},R_o^{O(\cdot)}}(\lambda) = 1 \right] \right| < negl'(\lambda).
\end{align*}
\end{footnotesize}
For the case between \textbf{Hybrid 1} and \textbf{Hybrid 2}, we need only look to the last property of GFPCs, which gives us that there exists a negligible function $negl''(\cdot)$ such that
\begin{align*}
\left| Pr \left[ \mathcal{A}^{F_o^{O(\cdot)},R_o^{O(\cdot)}}(\lambda) = 1 \right] - Pr\left[ \mathcal{A}^{R(\cdot),R^{-1}(\cdot)}(\lambda) = 1 \right] \right| < negl''(\lambda).
\end{align*}
Combining these results, we get that
\begin{align*}
\left| \underset{k\in_R G_{\lambda}}{Pr}\left[ \mathcal{A}^{F_o^{PRF_{m,n}(k,\cdot)}(\lambda,\cdot),R_o^{PRF_{m,n}(k,\cdot)}(\lambda,\cdot)}(\lambda) = 1 \right] - Pr\left[ \mathcal{A}^{R(\cdot),R^{-1}(\cdot)}(\lambda) = 1 \right] \right| \\ \leq negl'(\lambda) + negl''(\lambda) = negl(\lambda),
\end{align*}
by a simple application of the triangle inequality.
\end{proof}

By this lemma, we need only show that the respective GFPCs exist (as Theorem~\ref{OWimplyPRF} gives us the existence of the PRFs.) By \citep{PatelRamzanSundaram}, we have a classical query GFPC, which gives us the following.

\begin{lem}
Let $\mathcal{Q}$ be the set of polynomially-bounded functions, then we have that $(\mathsf{CQuery, LargeD}, \mathcal{Q})$-GFPCs exist.
\end{lem}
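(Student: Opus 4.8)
The plan is to realize the GFPC by the $4$-round Group Feistel cipher $\mathcal{F}_{g,f,f,g}$ on $G^2 = G \times G$, whose super pseudorandomness was recorded in the corollary to \citep{PatelRamzanSundaram}. For each admissible base group $G$ I would set the GFPC domain to $G_o = G \times G \in \mathcal{G}$, take the forward algorithm $\mathsf{F}_o^{O}(\lambda,\cdot)$ to be the map $\mathcal{F}_{g,f,f,g}$ and the reverse algorithm $\mathsf{R}_o^{O}(\lambda,\cdot)$ to be its round-by-round inverse. The single inner oracle $O\colon G_m \to G_n$ is made to supply both round functions by domain separation: put $G_n = G$ and $G_m = (\mathbb{Z}/2\mathbb{Z}) \times G$, and read $g := O(0,\cdot)$ (used in rounds $1$ and $4$) and $f := O(1,\cdot)$ (used in rounds $2$ and $3$). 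When $O$ is a uniformly random function, $f$ and $g$ are independent uniformly random functions $G \to G$, which is exactly the distribution assumed in \citep{PatelRamzanSundaram}.

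With this construction the structural requirements of a GFPC are immediate. The algorithms take $\lambda$ and an element of $G_o$ and output an element of $G_o$; the indices $m,n$ are polynomially bounded in the bit-length of $G$; and each evaluation of $\mathsf{F}_o$ or $\mathsf{R}_o$ makes exactly $4$ queries to $O$, so both are efficient. The inverse property $\mathsf{R}_o^{O}(\lambda,\mathsf{F}_o^{O}(\lambda,x))=x$ for every $x$ and every oracle $O$ follows from the fact, established in Section~\ref{FeistelSection}, that an $r$-round Feistel cipher can be inverted round by round irrespective of whether its round functions are invertible; in particular this holds for arbitrary (not necessarily random) $O$, as the definition demands.

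It then remains to verify the final, indistinguishability property, and this is where the real content sits. The corollary to \citep{PatelRamzanSundaram} is stated for pseudorandom round functions, but its proof proceeds through the information-theoretic statement that, for independent \emph{random} $f,g$, the permutation $\mathcal{F}_{g,f,f,g}$ on $G^2$ is statistically close to a uniformly random permutation up to a birthday-type bound of order $q^2/|G|$; this is precisely the regime in which the inner oracle $O$ of the GFPC is random, so the last GFPC property reduces to that statement (with access to $\mathsf{R}_o$ corresponding to the inverse-query access in the super pseudorandom setting). The main obstacle is reconciling the query class with this bound: the $4$-round Feistel is only secure up to roughly $|G|^{1/2}$ queries, so the claim can hold only when $q^2/|G|$ is negligible across all of $\mathcal{Q}$. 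This is exactly what the $\mathsf{LargeD}$ restriction buys us — together with the standing convention $|G_\lambda| \geq 2^\lambda$ it forces $|G|$ to be super-polynomial, making $q^2/|G|$ negligible for every polynomially-bounded $q$. A secondary point to dispatch is the characteristic-other-than-$2$ hypothesis of \citep{PatelRamzanSundaram}, which restricts the family of base groups on which this particular GFPC is defined but does not obstruct existence.
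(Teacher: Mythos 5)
Your proposal matches the paper's own treatment: the paper proves this lemma simply by invoking \citep{PatelRamzanSundaram}, i.e.\ by taking the $4$-round Group Feistel cipher $\mathcal{F}_{g,f,f,g}$ over $G^2$ (recorded there as super pseudorandom) as the classical-query GFPC, which is exactly your construction. Your write-up only adds the details the paper leaves implicit --- the domain-separated inner oracle supplying $f$ and $g$, round-by-round invertibility for arbitrary oracles, and the observation that under $\mathsf{LargeD}$ the birthday-type bound $O(q^2/|G|)$ is negligible for every polynomially-bounded query count in $\mathcal{Q}$.
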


Using this lemma and Theorem~\ref{OWimplyPRF} implies the following corollary.
\begin{cor}\label{CorToImprove}
For $\mathsf{C}\in \lbrace \mathsf{CComp, QComp} \rbrace$, if $\mathsf{C}$-one-way functions exist, then $(\mathsf{C, CQuery, LargeD})$-PRPs exist.
\end{cor}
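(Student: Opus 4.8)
The plan is to obtain the corollary as a direct composition of the three preceding results, splitting into the two cases $\mathsf{C} = \mathsf{CComp}$ and $\mathsf{C} = \mathsf{QComp}$. The skeleton is identical in both: first convert the one-way function into a pseudorandom function via Theorem~\ref{OWimplyPRF}, then invoke the preceding lemma to supply the required GFPC, and finally feed both ingredients into Lemma~\ref{Lem50} to manufacture the PRP. Throughout I would take $\mathcal{Q}$ to be the set of polynomially-bounded functions, matching the preceding lemma.

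First I would handle the PRF ingredient. When $\mathsf{C} = \mathsf{CComp}$, Theorem~\ref{OWimplyPRF} immediately hands us $(\mathsf{CComp}, \mathsf{CQuery})$-PRFs from $\mathsf{CComp}$-one-way functions, which is exactly the query type we want. When $\mathsf{C} = \mathsf{QComp}$, Theorem~\ref{OWimplyPRF} instead produces $(\mathsf{QComp}, \mathsf{QQuery})$-PRFs, so here I would insert the observation that a classical query is a special case of a quantum superposition query; hence security against quantum-query adversaries implies security against classical-query adversaries, and the $(\mathsf{QComp}, \mathsf{QQuery})$-PRF is in particular a $(\mathsf{QComp}, \mathsf{CQuery})$-PRF. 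In both cases we end up with $(\mathsf{C}, \mathsf{CQuery})$-PRFs.

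For the GFPC ingredient I would simply cite the preceding lemma, which provides $(\mathsf{CQuery}, \mathsf{LargeD}, \mathcal{Q})$-GFPCs. The point to flag here is that the GFPC indistinguishability guarantee is information-theoretic in the number of queries — it is stated for \emph{any} adversary making at most $q$ queries, with no computational restriction — so the same GFPC serves whether the ambient distinguisher is $\mathsf{CComp}$ or $\mathsf{QComp}$, provided the query type is $\mathsf{CQuery}$. With both ingredients in hand, the final step applies Lemma~\ref{Lem50} with the pair $(\mathsf{C}, \mathsf{CQuery})$ and $\mathsf{D} = \mathsf{LargeD}$ to conclude that $(\mathsf{C}, \mathsf{CQuery}, \mathsf{LargeD})$-PRPs exist.

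The main obstacle is not computational but one of bookkeeping: Theorem~\ref{OWimplyPRF} does not literally produce $(\mathsf{QComp}, \mathsf{CQuery})$-PRFs, so the whole argument hinges on the easy but essential query-downgrade observation in the $\mathsf{QComp}$ case, together with verifying that the $\mathsf{D} = \mathsf{LargeD}$ label propagates correctly from the GFPC through Lemma~\ref{Lem50} to the resulting PRP. I would therefore double-check that the query type used by the GFPC, the query type assumed of the PRF, and the query type appearing in Lemma~\ref{Lem50} all agree at $\mathsf{CQuery}$, since this alignment is precisely what makes the composition type-correct.
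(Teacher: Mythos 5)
Your proposal is correct and follows essentially the same route as the paper, which simply composes Theorem~\ref{OWimplyPRF} (one-way functions give PRFs), the preceding lemma (existence of $(\mathsf{CQuery},\mathsf{LargeD},\mathcal{Q})$-GFPCs), and Lemma~\ref{Lem50} to obtain the PRPs. The paper leaves this composition implicit; your explicit handling of the $\mathsf{QComp}$ case via the $\mathsf{QQuery}$-to-$\mathsf{CQuery}$ downgrade, and your check that the GFPC guarantee is information-theoretic and hence valid against $\mathsf{QComp}$ distinguishers, are exactly the bookkeeping details needed to make that composition type-correct.
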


Here Zhandry remarks that we could possibly use Feistel networks to get the end result, however, he does not know how to do so. We have therefore tried to do so, but failed in trying to even get a reduction to the Hidden-Shift Problem, though there were many promising attempts. We therefore also encourage any reader to attempt such a proof.

Consider that if the amount of queries was $q=|G_o|$, then any adversary (possibly inefficient) may query the entire domain and image of the GFPC, regardless of query type. In this way, the distinction between classical and quantum GFPCs disappears. As GFPCs were defined for arbitrary adversaries, this poses no problems even if adversaries making all those queries are inefficient. Let us therefore introduce the following definition.

\begin{defn}
The family of pairs $(\mathsf{F}_o,\mathsf{R}_o)$ is a \textbf{$\mathsf{D}$-Full Domain GFPC} if it is a $(\mathsf{CQuery, D},\mathcal{Q})$-GFPC where $\mathcal{Q}$ contains a function $q(o,\lambda)$ such that $q(o,\lambda)\geq |G_o|$.
\end{defn}

If an adversary has access to the entire output table of $\mathsf{F}_o$ and $\mathsf{R}_o$, then the type of queries it makes is irrelevant, hence, we get the following lemma.

\begin{lem}\label{Lem54}
Let $\mathcal{Q}$ be any class of functions. Then a $\mathsf{D}$-Full Domain GFPC is also a $(\mathsf{QQuery, D},\mathcal{Q})$-GFPC.
\end{lem}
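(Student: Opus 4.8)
The plan is to reduce the quantum-query distinguishing task to the classical full-domain distinguishing task, exploiting the fact that a complete truth table determines every quantum oracle call. Fix an arbitrary class $\mathcal{Q}$, a function $q\in\mathcal{Q}$, a polynomially-bounded $o=o(\lambda)$, and a quantum adversary $\mathcal{A}$ making at most $q$ quantum queries to its forward and inverse permutation oracles. I want to show that $\mathcal{A}$'s distinguishing advantage is negligible, using only the assumed $(\mathsf{CQuery},\mathsf{D},\mathcal{Q}_0)$-security of the full-domain GFPC, where $\mathcal{Q}_0$ is the class witnessing the full-domain property, so that $\mathcal{Q}_0$ contains some $q_0(o,\lambda)\ge |G_o|$.

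First I would construct a classical-query distinguisher $\mathcal{B}$ that runs $\mathcal{A}$ as a subroutine. Given classical access to the forward oracle (either $\mathsf{F}_o^O(\lambda,\cdot)$ or $P$) and the inverse oracle, $\mathcal{B}$ begins by querying the forward oracle on every one of the $|G_o|$ elements of $G_o$, thereby recording the complete permutation table; since the inverse oracle is just the inverse bijection, the full inverse table is determined as well. This costs exactly $|G_o|\le q_0(o,\lambda)$ classical queries, so $\mathcal{B}$ is a legitimate adversary against the full-domain GFPC. Crucially, the number of queries $\mathcal{B}$ makes is $|G_o|$ regardless of how large $q$ is, so the single reduction works uniformly for every $q\in\mathcal{Q}$.

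Next, with both tables stored locally, $\mathcal{B}$ simulates $\mathcal{A}$ without any further oracle access: each quantum query of $\mathcal{A}$ is a fixed unitary $|x\rangle|y\rangle\mapsto |x\rangle|y\cdot \pi(x)\rangle$ (and the analogous map for $\pi^{-1}$), and this unitary is completely determined by the stored truth table, so $\mathcal{B}$ can apply it by local quantum computation. Because $\mathcal{A}$ is only query-bounded and need not be efficient, there is no difficulty in $\mathcal{B}$ performing this possibly exponential-time simulation. Consequently $\mathcal{B}$'s internal view of $\mathcal{A}$ is information-theoretically identical to $\mathcal{A}$'s view in the real quantum game, so $\mathcal{B}$, on outputting $\mathcal{A}$'s bit, distinguishes with exactly $\mathcal{A}$'s advantage.

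Finally I would conclude by invoking full-domain security: since $\mathcal{B}$ is a classical $q_0$-query distinguisher and the pair is a $(\mathsf{CQuery},\mathsf{D},\mathcal{Q}_0)$-GFPC, $\mathcal{B}$'s advantage is bounded by some $negl(\lambda)$, hence so is $\mathcal{A}$'s. As $\mathcal{A}$, $q\in\mathcal{Q}$, and $\mathcal{Q}$ were arbitrary, the pair $(\mathsf{F}_o,\mathsf{R}_o)$ is a $(\mathsf{QQuery},\mathsf{D},\mathcal{Q})$-GFPC. The only point requiring care, and the nearest thing to an obstacle, is the claim that a quantum oracle call is perfectly simulable from the truth table; this is the standard fact that the oracle unitary $O_\pi$ depends only on $\pi$, so once $\pi$ is fully known no external queries are needed and the simulation incurs zero additional error.
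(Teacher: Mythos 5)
Your proposal is correct and takes essentially the same approach as the paper: the paper's argument is exactly the observation that an adversary able to query the entire domain of $\mathsf{F}_o$ and $\mathsf{R}_o$ holds the complete output table, after which the type (classical or quantum) of the original adversary's queries is irrelevant, which is precisely what your explicit reduction via the classical simulator $\mathcal{B}$ formalizes. Your write-up merely adds detail the paper leaves implicit, namely that the oracle unitary is determined by the permutation's truth table, so the simulation of the quantum adversary incurs zero error and costs no further queries.
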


The following lemma is given by Zhandry under the explanation that several articles\footnote{For a full list refer to \citep{Zhandry}.}, centered around card shuffles, give the existence of Full-Domain FPCs. Especially three articles build upon each other: \citep{HMR, RY13, MR14}. In \citep{HMR}, Hoang, Morris, and Rogaway introduce a \textit{Swap-or-Not Shuffle} on bit strings which they show generalizes to arbitrary domains and works for $\mathsf{LargeD}$. In \citep{MR14}, Morris and Rogaway consider the \textit{Mix-and-Cut Shuffle} of \citep{RY13} and refine it into the \textit{Sometimes-Recurse Shuffle}, which they show, when instantiated with the Swap-or-Not Shuffle as an "inner shuffle", gives $\mathsf{AnyD}$ security on arbitrary domains. Using this work, we get the following lemma as long as we may order and encode a group efficiently, i.e. make a bijection from an ordered set of group elements of $G$ to the set of positive integers of size $|G|$, in polynomial-time.

\begin{lem}\label{AnyDexists}
$\mathsf{AnyD}$-Full Domain GFPCs exist.
\end{lem}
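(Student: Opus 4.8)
The plan is to avoid building a shuffle that operates \emph{natively} on group elements (which, as already noted, appears infeasible for the Sometimes-Recurse construction) and instead to transport the known integer-domain result of \citep{MR14} across the efficient encoding of the group. By hypothesis there is a bijection $\phi: G_o \to \{0,1,\ldots,|G_o|-1\}$ which, together with its inverse $\phi^{-1}$, is computable in time $poly(o,\lambda)$. By \citep{MR14}, the Sometimes-Recurse Shuffle instantiated with the Swap-or-Not Shuffle of \citep{HMR} as its inner shuffle yields an $\mathsf{AnyD}$-Full Domain FPC on every integer domain $[N]=\{0,1,\ldots,N-1\}$; write the resulting pair of efficient oracle algorithms as $(\mathsf{S}_N,\mathsf{S}_N^{-1})$, each making at most $poly(N,\lambda)$ queries to its inner function oracle $O$.

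First I would define the candidate GFPC by conjugating with $\phi$. Writing $N=|G_o|$, set
\begin{align*}
\mathsf{F}_o^{O}(\lambda, x) &\defeq \phi^{-1}\!\left( \mathsf{S}_N^{O}(\lambda, \phi(x)) \right), \\
\mathsf{R}_o^{O}(\lambda, y) &\defeq \phi^{-1}\!\left( (\mathsf{S}_N^{-1})^{O}(\lambda, \phi(y)) \right),
\end{align*}
for $x,y\in G_o$. Then I would check the defining properties of a GFPC in turn: the input/output types are correct because $\phi,\phi^{-1}$ move between $G_o$ and $[N]$; efficiency and the polynomial query bound are inherited, since $\phi,\phi^{-1}$ run in polynomial time and the number of oracle calls is exactly that of $(\mathsf{S}_N,\mathsf{S}_N^{-1})$; and the inverse property holds because conjugation by a bijection preserves inverses, so $\mathsf{R}_o^{O}(\lambda,\mathsf{F}_o^{O}(\lambda,x)) = \phi^{-1}\!\left((\mathsf{S}_N^{-1})^{O}(\lambda,\mathsf{S}_N^{O}(\lambda,\phi(x)))\right) = \phi^{-1}(\phi(x)) = x$ for every $x$ and every oracle $O$.

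The indistinguishability property I would obtain by a direct reduction. Given any adversary $\mathcal{A}$ against the GFPC making $q\geq N$ queries, build $\mathcal{B}$ against the integer-domain FPC that runs $\mathcal{A}$, translating each query $x$ into $\phi(x)$ and each returned value $w$ into $\phi^{-1}(w)$; this is a perfect simulation of both the real and ideal GFPC experiments. The crucial point is that the pushforward of the uniform distribution on $\mathfrak{P}_{G_o\to G_o}$ under $\pi\mapsto \phi\circ\pi\circ\phi^{-1}$ is \emph{exactly} the uniform distribution on permutations of $[N]$, so the ideal sides coincide. Hence $\mathcal{B}$'s advantage equals $\mathcal{A}$'s, and since $(\mathsf{S}_N,\mathsf{S}_N^{-1})$ is secure for \emph{all} domain sizes $N$, the same holds for $(\mathsf{F}_o,\mathsf{R}_o)$ across all $|G_o|$. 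This exhibits a $(\mathsf{CQuery},\mathsf{AnyD},\mathcal{Q})$-GFPC with a function $q\geq|G_o|$ in $\mathcal{Q}$, i.e. an $\mathsf{AnyD}$-Full Domain GFPC.

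I expect the main obstacle to be justifying that the full strength of the integer-domain guarantee really survives the encoding at \emph{small} domain sizes: $\mathsf{AnyD}$ demands security for arbitrarily small $G_o$, where the Sometimes-Recurse recursion bottoms out, so I would need to confirm that the recursion stays well-defined and efficient when its domain is accessed only through $\phi$, and that no group-specific structure is ever invoked inside the shuffle (precisely why the encoding route succeeds where a native group generalization of Sometimes-Recurse does not). A secondary point to verify is that the various sub-domains produced by the recursion can all be ordered and encoded within the single polynomial-time ordering assumption, rather than requiring a fresh efficient bijection to be constructed at each recursive level.
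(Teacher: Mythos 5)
Your proposal is correct and follows essentially the same route as the paper: the paper also obtains Lemma~\ref{AnyDexists} by invoking the Sometimes-Recurse shuffle of \citep{MR14} (instantiated with the Swap-or-Not shuffle of \citep{HMR}) on the integer domain and transporting it to $G_o$ via the assumed polynomial-time ordering/encoding of group elements. Your write-up simply makes explicit the conjugation construction and the security reduction that the paper leaves implicit in its citation-based argument.
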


This will suffice for our generalization of Zhandry's result, however, we attempt to generalize the shuffles to work on arbitrary groups, without any encoding, in the next section. This is because we feel that the encoding sacrifices the structure of the group, which we believe may be interesting to preserve. Regardless, what these articles intuitively do is make a good PRP and then make it even better, using Lemma~\ref{AnyDexists}. This leads to an improvement of Corollary~\ref{CorToImprove}.

\begin{cor}
For $\mathsf{C}\in \lbrace \mathsf{CComp,QComp}\rbrace$, if $\mathsf{C}$-one-way functions exist, then $(\mathsf{C,CQuery,AnyD})$-PRPs exist.
\end{cor}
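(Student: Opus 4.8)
The plan is to mirror the derivation of Corollary~\ref{CorToImprove}, replacing the $\mathsf{LargeD}$ GFPC existence result with Lemma~\ref{AnyDexists}. Concretely, I would apply Lemma~\ref{Lem50} with the pair $(\mathsf{C},\mathsf{Q}) = (\mathsf{C},\mathsf{CQuery})$, with $\mathsf{D}=\mathsf{AnyD}$, and with $\mathcal{Q}$ taken to be the set of all polynomials. For this it suffices to establish the two hypotheses of that lemma: the existence of $(\mathsf{C},\mathsf{CQuery})$-PRFs and the existence of $(\mathsf{CQuery},\mathsf{AnyD},\mathcal{Q})$-GFPCs.

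First I would produce the PRFs from Theorem~\ref{OWimplyPRF}. For $\mathsf{C}=\mathsf{CComp}$ the theorem directly yields $(\mathsf{CComp},\mathsf{CQuery})$-PRFs from $\mathsf{CComp}$-one-way functions. For $\mathsf{C}=\mathsf{QComp}$ the theorem only asserts $(\mathsf{QComp},\mathsf{QQuery})$-PRFs; here I would observe that classical queries are a special case of quantum queries, so every $(\mathsf{QComp},\mathsf{QQuery})$-PRF is in particular a $(\mathsf{QComp},\mathsf{CQuery})$-PRF. Thus in both cases $\mathsf{C}$-one-way functions yield $(\mathsf{C},\mathsf{CQuery})$-PRFs.

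Second I would obtain the required GFPCs from Lemma~\ref{AnyDexists}, which supplies $\mathsf{AnyD}$-Full Domain GFPCs. By definition such a family is a $(\mathsf{CQuery},\mathsf{AnyD},\mathcal{Q}')$-GFPC for some class $\mathcal{Q}'$ containing a function $q(o,\lambda)\geq |G_o|$. The key observation is that indistinguishability against an adversary permitted up to $|G_o|$ queries implies indistinguishability against any adversary making only polynomially many queries, since restricting the number of queries can only decrease the distinguishing advantage. Hence the very same family is a $(\mathsf{CQuery},\mathsf{AnyD},\mathcal{Q})$-GFPC for $\mathcal{Q}$ the set of all polynomials, which is exactly the form that Lemma~\ref{Lem50} requires.

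With both hypotheses in hand, Lemma~\ref{Lem50} yields $(\mathsf{C},\mathsf{CQuery},\mathsf{AnyD})$-PRPs, completing the argument. The only genuinely delicate point, and the one I would be careful to state explicitly, is this monotonicity in the query bound: Lemma~\ref{AnyDexists} is phrased in terms of full-domain security, whereas Lemma~\ref{Lem50} is phrased for the polynomial class $\mathcal{Q}$, and bridging the two relies on the fact that full-domain security is the strongest possible query regime. Everything else is a direct specialization of lemmas already proven above, so I expect no further obstacle.
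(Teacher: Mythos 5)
Your proposal is correct and is essentially the proof the paper intends: the corollary follows by feeding Theorem~\ref{OWimplyPRF} and Lemma~\ref{AnyDexists} into Lemma~\ref{Lem50}, exactly mirroring how Corollary~\ref{CorToImprove} was obtained from the $\mathsf{LargeD}$ GFPC lemma. Your two bridging observations --- that a $(\mathsf{QComp},\mathsf{QQuery})$-PRF is in particular a $(\mathsf{QComp},\mathsf{CQuery})$-PRF, and that full-domain security subsumes every polynomial query bound --- are precisely the steps the paper leaves implicit.
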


Combining Lemma~\ref{OWimplyPRF}, Lemma~\ref{Lem50}, Lemma~\ref{AnyDexists}, and Lemma~\ref{Lem54}, we get Zhandry's main result, but now for arbitrary groups.

\begin{thm}
The existence of $\mathsf{QComp}$-one-way functions implies the existence of $(\mathsf{QComp, QQuery, AnyD})$-PRPs.
\end{thm}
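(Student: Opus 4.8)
The plan is to assemble the theorem directly from the four results developed above, tracking the parameter tags $(\textsf{C},\textsf{Q},\textsf{D})$ carefully so that the query model and domain requirements line up. Starting from the hypothesis that $\textsf{QComp}$-one-way functions exist, the first step is to invoke the ``moreover'' clause of Theorem~\ref{OWimplyPRF} to obtain $(\textsf{QComp},\textsf{QQuery})$-PRFs. This is the only place where the one-way function assumption is consumed, and it is what forces us into the quantum-query regime for the remainder of the argument.

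Next I would secure the GFPC ingredient required by Lemma~\ref{Lem50}. Lemma~\ref{AnyDexists} gives the existence of an $\textsf{AnyD}$-Full Domain GFPC; by definition this is a $(\textsf{CQuery}, \textsf{AnyD}, \mathcal{Q})$-GFPC whose class $\mathcal{Q}$ contains a query bound of at least $|G_o|$. The key move is then to apply Lemma~\ref{Lem54}, which upgrades any $\textsf{D}$-Full Domain GFPC to a $(\textsf{QQuery}, \textsf{D}, \mathcal{Q})$-GFPC for an arbitrary class $\mathcal{Q}$. Taking $\mathcal{Q}$ to be the set of all polynomials, this yields $(\textsf{QQuery}, \textsf{AnyD}, \mathcal{Q})$-GFPCs. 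The conceptual content here is that a full-domain converter lets an adversary reconstruct the entire input--output table, so the distinction between classical and quantum oracle access collapses; this is precisely why the classical-query construction drawn from the shuffle literature suffices even against quantum-query adversaries.

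With both hypotheses of Lemma~\ref{Lem50} now in hand --- namely $(\textsf{QComp},\textsf{QQuery})$-PRFs and $(\textsf{QComp},\textsf{QQuery},\mathcal{Q})$-GFPCs with $\mathcal{Q}$ the set of all polynomials --- the final step is to instantiate that lemma with $(\textsf{C},\textsf{Q}) = (\textsf{QComp},\textsf{QQuery})$ and $\textsf{D} = \textsf{AnyD}$, concluding that $(\textsf{QComp},\textsf{QQuery},\textsf{AnyD})$-PRPs exist. I expect no genuine obstacle in the combination itself, since each implication is already established; the only care needed is bookkeeping of the tags, ensuring in particular that the $\textsf{AnyD}$ domain flag is preserved through Lemma~\ref{Lem54} and matches the $\textsf{D}$ appearing in the conclusion of Lemma~\ref{Lem50}. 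If any real difficulty lurks it is upstream, inside Lemma~\ref{AnyDexists}, whose proof rests on the Swap-or-Not and Sometimes-Recurse shuffles together with an efficient ordering and encoding of the group $G$; but that content is assumed here.
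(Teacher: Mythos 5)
Your proposal is correct and follows exactly the paper's own route: the paper proves this theorem by combining Theorem~\ref{OWimplyPRF}, Lemma~\ref{AnyDexists}, Lemma~\ref{Lem54}, and Lemma~\ref{Lem50} in precisely the order and manner you describe. Your write-up merely makes the tag bookkeeping explicit, which the paper leaves implicit.
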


This result shows that even the concept of an FPC is generalizable to arbitrary groups and reenforces the idea that strong results on bit strings could hold on arbitrary groups. Based on this result we might already consider that groups could be of use in quantum cryptography, especially given that the results of Kuwakado and Morii \citep{ BrokenFeistel, BrokenEM} seem to only apply to bit strings and not necessarily to arbitrary groups (see \citep{Gorjan}.)

\newpage
\section{Generalizing Card Shuffles}\label{ShuffleSection}
In the previous section, we reproduced Zhandry's result over arbitrary groups but assumed the existence of an efficient ordering of group elements and encoding to the integers in order to do so. In this section, we consider whether the card shuffles of \citep{HMR} and \citep{MR14} may be generalized to arbitrary groups, so that we may discard the assumption. The first shuffle construction, \textit{Swap-or-Not}, leads to the creation of a new shuffle we choose to call \textit{Scoot-or-Not}, due to the fact that it no longer swaps elements, but rather "scoots"\footnote{As formulated by Gorjan Alagic} the elements. Despite spending time and effort, the second shuffle construction, \textit{Sometimes-Recurse}, does not quite generalize, for reasons we briefly discuss.

\subsection{Swap-or-Not and Scoot-or-Not Shuffles}
The Swap-or-Not (SN) shuffle  is a new way to turn a PRF into a PRP. It can do so in a bit string format or even an arbitrary finite domain size, as long as the domain is ordered. This is useful in order to encipher information that is not necessarily in a $2^n$ sized domain, e.g. credit-card numbers which are usually in a $10^n$ sized domain. Black and Rogaway \citep{BlackRogaway} also considered this problem but only gave enciphering schemes for "small" and "large" domains. The beauty of the SN shuffle is that it works over any arbitrarily sized domain and even allows an adversary to query almost the entire domain!

Let us start by briefly explaining the terminology used in this section, in terms of an encryption scheme. Consider a \textit{deck} $\mathcal{X}$ of $N\in \mathbb{N}$ cards, which is our domain. For example, $\mathcal{X}$ could be $\lbrace 0,1 \rbrace^n$ for $N=2^n$, $\mathcal{X}$ could be $\left[ N \right]$, i.e. the group of integers with addition modulo $N$, or $\mathcal{X}$ could be some other group $G$. A \textit{shuffle} is then a permutation on the domain, or rather, the block-cipher used to permute the enciphering elements. We may even \textit{cut} the deck, as one would in a game of poker, which would have the analogue of taking subsets, or perhaps subgroups. The randomness used in the shuffling of the deck can be considered the random key.

In this section, we use the concept of an \textit{oblivious} shuffle, originally suggested by Moni Naor. In such a shuffle, we need only consider the trace of a single card, rather than trace all, or a lot, of the other cards, in order to prove security.

The SN shuffle itself works in a series of "rounds" where elements in a domain are swapped, or not, and the output of the shuffle is then the permuted domain. In Figure~\ref{SNRound} and Figure~\ref{abelianSN}, we give the definition of a single round (in the bit string case) and the generalized construction in a finite abelian group, respectively.

\begin{figure}[h!]
\centering
\begin{pseudocode}[framebox]{$SN\left[1,\left( \mathbb{Z}/2\mathbb{Z}\right)^n,\oplus\right]$}{X}
K \in_R \lbrace 0,1 \rbrace^n \\
\FOR \text{the pair of positions } \lbrace X, K \oplus X \rbrace \DO \BEGIN
b\in_R \lbrace 0,1 \rbrace \\
\IF b=1 \THEN \text{swap the cards at positions } X \text{ and } K \oplus X \END
\end{pseudocode}
\caption{A Single Round Swap-or-Not Shuffle}
\label{SNRound}
\end{figure}

In Figure~\ref{SNRound}, we use $\mathcal{X}=\left( \mathbb{Z}/2\mathbb{Z}\right)^n$, where each card is $X\in \lbrace 0,1 \rbrace^n$, i.e. the bit strings. At first a random key $K\in_R \lbrace 0,1 \rbrace^n$ is selected for the round. Each $X$ is paired up with $K\oplus X$, which is a unique pairing by the abelian property of $\left( \mathbb{Z}/2\mathbb{Z}\right)^n$ under the XOR operation, $\oplus$. A random bit is then used to decide whether the pair is swapped or not. The shuffle can be repeated for more rounds where each round uses an independently chosen random key and independently chosen random bit $b$. The random bits may be instantiated with random round functions $F_i:\lbrace 0,1 \rbrace^n \rightarrow \lbrace 0,1 \rbrace$, for each round $i$. The shuffle is repeated for all $X\in \mathcal{X}$ using the same $K$ and $b$. We now give the $r$-round SN shuffle over an abelian group.

\begin{figure}[h!]
\centering
\begin{pseudocode}[framebox]{$SN\left[r,\left[N\right],+\right]$}{X}
\FOR i:= 1 \TO r \DO 
\BEGIN X' := K_i - X, \hat{X}:= \max(X,X') \\ 
\IF F_i(\hat{X})=1 
\THEN \text{ redefine } X := X' 
\END \\
\RETURN X
\end{pseudocode}
\caption{abelian Group Swap-or-Not Shuffle for $r$ rounds}
\label{abelianSN}
\end{figure}

In Figure~\ref{abelianSN} we consider the abelian group $\left[ N \right] = \mathbb{Z}/N\mathbb{Z}$ with addition modulo $N$ as its operation. The keys $K_i\in \left[ N \right]$ are chosen independently at random, but only differ from round to round, they are not rechosen for each element $X\in \left[ N \right]$. Likewise, the $F_i:\left[ N \right] \rightarrow \lbrace 0,1 \rbrace$ are independently chosen round functions, that vary only per round, not per element $X\in \left[ N \right]$. When repeated on each element $X\in \left[ N \right]$, this shuffle permutes the elements of the deck and has an inverse, which is gotten by reversing the rounds, i.e. going from round $r$ to round $1$. Using $SN\left[r,\left[N\right],+\right]$, \citep{HMR} show that in the security setting, an adversary may distinguish between being given the shuffle and a random permutation with advantage at most
\begin{align*}
\frac{8N^{3/2}}{r+4}\left( \frac{q+N}{2N}\right)^{r/4+1},
\end{align*}
when limited to at most $q$ total queries. For good security, the above bound allows only up to $q=(1-\varepsilon)N$ adversarial queries, where $\varepsilon > 0$ is some fixed value, which is why complete information theoretic security escapes us. Hoang, Morris, and Rogaway also consider "tweaking" their shuffle, which we have not considered generalizing.

The SN shuffle itself has two main properties: the same algorithm applies the permutation and the inverse permutation, by simply reversing the rounds, and the elements are uniquely swapped in every round.

Let us try to make a direct generalization of the SN shuffle to an arbitrary group $G$. We first notice that the $\max$ function needs an ordering on the elements, which is not a problem on the integers but might be for other groups (consider very large groups or ones with complex structures.) As the authors remark as well, the choice of the $\max$ function is not entirely important and could have been exchanged with some other function able to pick a canonical element in the forward and inverse permutation algorithms. Let us now consider the generalized pairing function, $k\cdot x^{-1}$. In $G$, we cannot be certain that elements are swapped. Consider an element $x\in G$ and a key $k\in_R G$. Let us say that the algorithm pairs $x$ with $k \cdot x^{-1}$. At the same time it also pairs $k\cdot x^{-1}$ with $k\cdot (k \cdot x^{-1})^{-1} = k \cdot x \cdot k^{-1}$, which, unless $G$ is abelian, will not equal $x$ in most cases. This also means that we cannot use the same pairing function for the forward permutation and the inverse permutation. As we are not able to directly generalize the shuffle, we use these considerations and construct the following \textit{Scoot-or-Not Shuffle (SC)}, see Figure~\ref{ScootShuffle}.

\begin{center}
\begin{figure}[h]
\fbox{
\begin{minipage}{0.46\textwidth}
\centering
\begin{pseudocode}{$SC\left[r,G,\cdot \right]$}{x}
\FOR i:= 1 \TO r \DO 
\BEGIN x' := k_i \cdot x, \hat{x}:= x'\cdot x^{-1} \\ 
\IF f_i(\hat{x})=1 
\THEN \text{ redefine } x := x' 
\END \\
\RETURN x
\end{pseudocode}
\end{minipage}}
\fbox{
\begin{minipage}{0.46\textwidth}
\centering
\begin{pseudocode}{$SC^{-1}\left[r,G,\cdot \right]$}{x}
\FOR i:= 1 \TO r \DO 
\BEGIN x' := k_i^{-1} \cdot x, \hat{x}:= x'\cdot x^{-1} \\ 
\IF f_i(\hat{x}^{-1})=1 
\THEN \text{ redefine } x := x' 
\END \\
\RETURN x
\end{pseudocode}
\end{minipage}}
\caption{Scoot-or-Not Shuffle and its Inverse Shuffle}
\label{ScootShuffle}
\end{figure}
\end{center}

We remark that the $k_i\in G$ and round functions $f_i:G\rightarrow \lbrace 0,1 \rbrace$ for each $i$ are uniformly chosen at random. In our definition of the SC shuffle, we note that our use of $\hat{x}$ is superfluous as $\hat{x}= x'\cdot x^{-1} = k_i$ and likewise $k_i^{-1}$ in the inverse shuffle. This means that regardless of the value of $x$, whether we "scoot" in a round depends solely on the key, and the shuffle does so for every element when in that round. We note that the SC shuffle can be thought of as a "selective" Group Even-Mansour cipher with up to $r$ keys. We mean this in the sense that, if, for example, only two round functions, $f_i$ and $f_j$, choose to scoot $x$, then $x$ gets sent to $k_j\cdot (k_i\cdot x)$ in the shuffle.

The two shuffles in Figure~\ref{ScootShuffle} are indeed inverses. Denote $E_{KF}(x) := SC\left[r,G,\cdot \right](x)$ and $D_{KF}(x) := SC^{-1}\left[r,G,\cdot \right](x)$, where $KF$ is the set of keys and round functions used in the shuffles. Let us show that $D_{KF}(E_{KF}(x))=x$ in the one round case.

Consider $x\in G$, $k\in_R G$ and a random round function $f:G \rightarrow \lbrace 0,1 \rbrace$. From $E_{KF}(x)$, we get that $x'=k\cdot x$ and $\hat{x} = x'\cdot x^{-1} = k$. The round function $f$ now gives us two possibilities.
\begin{align*}
E_{KF}(x) = \begin{cases} x & \text{ if } f(\hat{x})=0, \\
				  k\cdot x & \text{ if } f(\hat{x})=1.	 \end{cases}
\end{align*}
If $E_{KF}(x)=x$, i.e. $ f(k)=0$, then from $D_{KF}(E_{KF}(x))$, we get $x'= k^{-1}\cdot x$ and $\hat{x}=k^{-1}\cdot x \cdot x^{-1} = k^{-1}$. As $f(\hat{x}^{-1}) = f((k^{-1})^{-1}) = f(k) = 0$, we get that $D_{KF}(E_{KF}(x))=x$. If instead $E_{KF}(x)=k\cdot x$, i.e. $ f(k)=1$, then from $D_{KF}(E_{KF}(x))$, we get $x'= k^{-1}\cdot (k\cdot x) = x$ and
\begin{align*}
\hat{x}=x \cdot (k\cdot x)^{-1} = x\cdot x^{-1} \cdot k^{-1} = k^{-1}.
\end{align*}
As $f(\hat{x}^{-1}) = f((k^{-1})^{-1}) = f(k) = 1$, we get that $D_{KF}(E_{KF}(x))=x$. This also holds for $r$ rounds and the reversed case is analogous.

\subsubsection*{Preliminaries}
We wish to prove some form of security for our shuffle, let us therefore consider chosen-ciphertext attack (CCA) security. Let us again consider the positive integer $\lambda$ to be the security parameter, specified in unary per convention. We assume that for each $\lambda$ there exists a uniquely specified group $G(\lambda) = G_\lambda \in \mathcal{G}$ with size $|G_\lambda| \geq 2^\lambda$.

\begin{defn}
Let $P:G_\lambda \times G \rightarrow G$ be a blockcipher on a group $G\in \mathcal{G}$, such that $P_k(\cdot)=P(k,\cdot)$ is a permutation on $G$, with inverse blockcipher $P^{-1}$. For any adversary $\mathcal{A}$, we define the \textbf{CCA advantage} of $\mathcal{A}$ in carrying out an (adaptive) chosen-ciphertext attack on $P$ as
\begin{align*}
Adv_P^{cca}(\mathcal{A}) = \left| \underset{k \in_R G_\lambda}{Pr}\left[ \mathcal{A}^{P_k(\cdot),P_k^{-1}(\cdot)}(\lambda)=1 \right] - \underset{\pi \in_R \mathfrak{P}_{G\rightarrow G}}{Pr}\left[ \mathcal{A}^{\pi(\cdot),\pi^{-1}(\cdot)}(\lambda)=1 \right] \right|,
\end{align*}
where $\mathfrak{P}_{G\rightarrow G}$ is the set of permutations on $G$.
\end{defn}

If the adversary asks no queries to the inverse oracle, then it is carrying out an (adaptive) chosen-plaintext attack (CPA). If the adversary asks the same queries on every run, we call the adversary \textit{non-adaptive}. Define $Adv_P^{cca}(\mathcal{A},q)$ to be the maximum advantage of an adversary $\mathcal{A}$, asking at most $q$ total queries. Likewise, we define $Adv_P^{ncpa}(\mathcal{A},q)$ to be the maximum advantage of an adversary for nonadaptive CPA attacks (NCPA).

If we assume $L,M:G_\lambda \times G \rightarrow G$ to be blockciphers, then we may let $L\circ M$ denote the composed blockcipher $(L\circ M)_{(K,K')}(X)=L_K(M_{K'}(X))$. We refer to Maurer, Pietrzak, and Renner \citep{Maurer} for the following result.

\begin{lem}[Maurer-Pietrzak-Renner]\label{MaurerPietrzakRenner}
If $L$ and $M$ are blockciphers on the same message space then, for any adversary $\mathcal{A}$ and any $q$,
\begin{align*}
Adv_{M^{-1}\circ L}^{cca}(\mathcal{A},q) \leq Adv_L^{ncpa}(\mathcal{A},q) + Adv_M^{ncpa}(\mathcal{A},q).
\end{align*}
\end{lem}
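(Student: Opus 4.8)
The plan is to prove this inside Maurer's random systems framework, where each blockcipher is viewed together with its inverse oracle as a random system that answers forward and backward queries consistently. The key structural observation about the cascade $M^{-1}\circ L$ is that the value on the internal wire between $L$ and $M$ is never exposed to $\mathcal{A}$: a forward query $x$ is processed as $x \mapsto y = L_K(x) \mapsto z = M_{K'}^{-1}(y)$, and a backward query $z$ as $z \mapsto y = M_{K'}(z) \mapsto x = L_K^{-1}(y)$, so that $\mathcal{A}$ only ever sees the outer pair $(x,z)$ while $y$ stays hidden. This hiding of the intermediate value is what will let me trade the adaptive, two-sided (CCA) power of $\mathcal{A}$ for merely non-adaptive, one-sided (NCPA) guarantees on the two components.

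First I would pin down the ideal endpoint. If $S$ is a uniform random permutation chosen independently of $M$, then $M_{K'}^{-1}\circ S$ is again exactly uniform, so the pair $(M_{K'}^{-1}\circ S,\, S^{-1}\circ M_{K'})$ is distributed identically to $(\pi,\pi^{-1})$ for a uniform $\pi$. Hence it suffices to walk from the real system $(M^{-1}\circ L,\, L^{-1}\circ M)$ to the ideal one through a two-step hybrid: one step replaces $M$ by a fresh uniform permutation and the other replaces $L$ by a fresh uniform permutation, each step altering only a single component. By the triangle inequality the total advantage is bounded by the sum of the two per-step gaps.

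The heart of the argument is to show that each per-step gap is controlled by the non-adaptive forward advantage against the component being swapped, rather than by its adaptive or inverse advantage. The intuition is that in each hybrid the other slot already carries a uniform permutation, so the queries that actually reach the swapped component in its \emph{fresh} direction are adversary-chosen inputs, whereas the queries reaching it in the opposite direction land on the uniformly relabelled hidden wire and therefore leak no component-dependent, adaptively usable information back to $\mathcal{A}$. To make this rigorous I would attach to each component a monotone ``bad'' event (e.g.\ a collision among the hidden wire values) and invoke Maurer's conditional-equivalence and composition theorems, which bound the adaptive distinguishing advantage of a cascade by the probabilities of these events, and in turn identify those probabilities with the non-adaptive advantages $Adv_L^{ncpa}(\mathcal{A},q)$ and $Adv_M^{ncpa}(\mathcal{A},q)$.

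The step I expect to be the main obstacle is exactly this ``non-adaptivity for free'' reduction: turning an adaptive CCA distinguisher against the cascade into a non-adaptive CPA distinguisher against a single component. The delicate point is that the simulation of the surrounding component is only clean when that component behaves like a uniform permutation, and the discrepancy when it does not is precisely what gets absorbed into the other term of the bound; one must also treat both query directions symmetrically so that neither component incurs an inverse-query (CCA) cost. Carefully setting up the random-systems bookkeeping for these monotone events, and verifying that the induced component distinguishers are genuinely non-adaptive, is where the real work lies, and for this I would follow the quantitative machinery of Maurer, Pietrzak, and Renner.
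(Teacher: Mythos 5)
The paper never actually proves this lemma: it is imported as a black-box citation to Maurer, Pietrzak, and Renner \citep{Maurer} and then applied to the Scoot-or-Not shuffle, so there is no in-paper argument to compare yours against. Judged on its own terms, your proposal contains a genuine gap, and the gap lies in the skeleton of the argument rather than in the details you defer.

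The two-step hybrid cannot carry the bound, precisely because of the neutralizing property you yourself point out. Replacing \emph{either single} component by a fresh independent uniform permutation already makes the whole cascade perfectly uniform: $M^{-1}\circ S$ and $(S')^{-1}\circ L$ are each distributed exactly as a uniform permutation together with its inverse, where $S,S'$ denote the fresh uniform permutations. Consequently, whichever component you swap first, the intermediate hybrid already coincides with the ideal system: the second per-step gap is identically zero, and the first per-step gap equals the entire quantity $Adv_{M^{-1}\circ L}^{cca}(\mathcal{A},q)$ to be bounded. Your claim that ``each per-step gap is controlled by the non-adaptive forward advantage against the component being swapped'' therefore amounts to asserting that the cascade's CCA advantage is at most the NCPA advantage of a \emph{single} component --- a strictly stronger statement than the lemma, and a false one. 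This thesis itself supplies the counterexample: take $M$ to be the identity blockcipher and $L=F^{(3)}$, the $3$-round Group Feistel cipher of Section~\ref{FeistelSection}, which is a PRP (hence $Adv_L^{ncpa}(\mathcal{A},q)$ is negligible, non-adaptive CPA adversaries being a subclass of adaptive ones) but not an SPRP (hence its two-sided adaptive advantage is close to $1$). Then $M^{-1}\circ L = F^{(3)}$, so the per-step gap for swapping $L$ is close to $1$ while its claimed bound $Adv_L^{ncpa}(\mathcal{A},q)$ is negligible; the lemma's conclusion survives only because $Adv_M^{ncpa}(\mathcal{A},q)\approx 1$ for the identity cipher, i.e., the right-hand side cannot be attributed to the hybrid steps separately. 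Your hedge that the discrepancy ``gets absorbed into the other term'' is exactly the content of the theorem, and no per-component hybrid delivers it; the proof in \citep{Maurer} is a genuinely global argument treating both components and the hidden middle wire simultaneously. Since you then defer precisely that step to their machinery, what remains is, in effect, the same citation the paper makes, wrapped in a decomposition that is unsound if read literally.
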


We will also need the definition of \textit{total variation distance}.
\begin{defn}
Let $\mu$ and $\nu$ be probability distributions on $\Omega$. The \textbf{total variation distance} between distribution $\mu$ and $\nu$ is defined as
\begin{align*}
\norm{\mu - \nu} = \frac{1}{2}\sum_{x\in \Omega} \left|\mu(x) - \nu(x)\right| = \underset{S\subset \Omega}{\max} \lbrace \mu(S) - \nu(S)\rbrace .
\end{align*}
\end{defn}

\subsubsection*{Security of Scoot-or-Not}
Fix a group $G\in \mathcal{G}$ with operation $\cdot$, where $|G| = N \in \mathbb{N}$. We will now consider the $r$-round Scoot-or-Not shuffle $SC\left[ r, G, \cdot \right]$ where we instantiate round $i$ of the shuffle using uniformly random key $k_i\in_R G$ and uniformly random bit $b_i\in_R \lbrace 0,1 \rbrace$ in place of the round function, i.e. we scoot if $b_i=1$.

We may consider this shuffle as a Markov chain $\lbrace W_i | i\geq 0\rbrace$ as each round $i$ is independent of the preceding and proceeding rounds. In this way, we consider the state space of $\lbrace W_i\rbrace$ to be the set of elements, here called cards, of $G$ and interpret $W_i(x)$ to be the position of the card $x$ at time $i$.

In the following, we assume that our adversary $\mathcal{A}$ is a non-adaptive adversary, i.e. deterministic, making exactly $q\leq |G|$ queries. We then analyse the mixing rate of the shuffle. Since $\mathcal{A}$ only makes $q$ queries, we need only bound the mixing rate for some $q$ sized subset of the deck. Let $x_1,\cdots,x_q$ be the set of distinct cards and let $Z_i(x_j)$ be the vector position of the card $x_j$ at time $i$, defining $Z_i(x_1,\cdots,x_j)=(Z_i(x_1),\cdots , Z_i(x_j))$. We call the set of $Z_i$, for $1\leq i \leq r$, the \textit{projected Scoot-or-Not shuffle}. Let $\pi$ denote the stationary distribution of $Z_i$, which is uniform over the set of distinct $q$-tuples of cards, i.e. the uniform distribution of $q$ samples without replacement. We denote $\tau_i$ the distribution of $Z_i$ and show the following.

\begin{thm}[Rapid Mixing]
Consider the Scoot-or-Not shuffle $SC\left[ r, G, \cdot \right]$ for $r,|G|\geq 1$, and let $1\leq q \leq |G|$. Fix distinct $x_1,\ldots, x_q$ and let $\lbrace Z_i | i \geq 0\rbrace$ be the corresponding projected Scoot-or-Not shuffle, let $\pi$ be its stationary distribution, and let $\tau_i$ be the distribution of $Z_i$. Then
\begin{align*}
\norm{\tau_r - \pi} \leq \frac{2|G|^{3/2}}{r+2}\left( \frac{q+|G|}{2|G|}\right)^{r/2+1}.
\end{align*}
\end{thm}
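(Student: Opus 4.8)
The plan is to mimic the rapid-mixing analysis of Hoang, Morris, and Rogaway for the Swap-or-Not shuffle, treating $SC[r,G,\cdot]$ as an \emph{oblivious} shuffle so that it suffices to follow the joint trajectory of the $q$ tracked cards $x_1,\ldots,x_q$ rather than the whole deck. First I would make the single-round transition of the projected chain $\{Z_i\}$ completely explicit: since $\hat{x}=x'\cdot x^{-1}=k_i$ is card-independent, round $i$ either left-multiplies every tracked position by the uniform key $k_i$ (when the round bit $b_i=1$) or leaves all positions fixed (when $b_i=0$), each with probability $1/2$. I would record this as a transition operator $T$ on the space of distinct $q$-tuples and verify that the asserted stationary law $\pi$ (uniform over $q$-samples without replacement) is genuinely $T$-invariant, since this is what the theorem's statement presupposes.

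The heart of the argument is a per-round contraction estimate producing the factor $\alpha := \frac{q+|G|}{2|G|} = 1-\frac{|G|-q}{2|G|}$. Following the template I would argue that, conditioned on the configuration so far, a tracked card makes \emph{fresh} progress toward its stationary position exactly when a scoot occurs (probability $1/2$) and the image $k_i\cdot x$ lands on one of the $|G|-q$ currently unoccupied slots; these combine to a success probability $\frac{|G|-q}{2|G|}$ per round, so the probability that mixing has not yet happened decays geometrically with ratio $\alpha$. To sharpen the naive geometric tail into the stated bound I would pass through the chi-squared distance, using $\norm{\tau_r-\pi} \le \frac{1}{2}\sqrt{\chi^2(\tau_r,\pi)}$; this $L^2$-to-total-variation passage is what halves the exponent (producing $r/2$) and introduces the extra factor of $|G|^{1/2}$, so that the prefactor becomes $|G|\cdot|G|^{1/2}=|G|^{3/2}$. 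The residual $\frac{2}{r+2}$ and the shift to exponent $r/2+1$ I would obtain by writing the relevant quantity as a binomial sum over the number of active rounds and bounding that sum by a single dominant term times an integral estimate, exactly as in the Swap-or-Not computation.

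The step I expect to be the main obstacle is re-deriving the per-round contraction in the group setting, because the SC dynamics differ structurally from Swap-or-Not in precisely the places the original proof exploits. In Swap-or-Not each round is a product of disjoint transpositions decided independently per pair via $F_i(\hat X)$, so each card is swapped with a distinct, canonically-chosen partner; here the move $x\mapsto k_i\cdot x$ is a single \emph{global} left-translation governed by one shared bit, it is not an involution (whence $SC$ and $SC^{-1}$ use different rules, as already observed), and the decision no longer localizes to pairs. I would therefore have to check very carefully that the coupling or strong-stationary-time estimate survives when every tracked card is moved in lockstep by the same key, and that non-commutativity does not spoil the claim that $k_i\cdot x$ is uniform over the free slots conditioned on the history. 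This is also where I would scrutinize the standing assumption that $\pi$ is the true stationary distribution, since the lockstep structure is exactly the delicate feature; if the contraction cannot be recovered at rate $\alpha$, the argument would need to be restructured to exploit additional cross-round randomness before the stated estimate could be reached.
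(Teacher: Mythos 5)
Your plan retraces exactly the route the paper itself takes (which is the Hoang--Morris--Rogaway route from \citep{HMR}): Jensen over the keys, the card-by-card conditional decomposition lemma, Cauchy--Schwarz to pass to an $L^2$ quantity, a per-round contraction, and a final integral estimate. But the obstacle you flag in your last paragraph is not a removable technicality --- it is fatal, both to your plan and, as it turns out, to the paper's own proof, because for the lockstep instantiation the theorem is simply false whenever $q\geq 2$ and $|G|\geq 3$. Since $\hat{x}=k_i$ is card-independent, each round moves either every card or no card, so every relative position is conserved: $Z_{i+1}(x_1)^{-1}Z_{i+1}(x_2)=(k Z_i(x_1))^{-1}(k Z_i(x_2))=Z_i(x_1)^{-1}Z_i(x_2)$, hence $Z_r(x_1)^{-1}Z_r(x_2)=x_1^{-1}x_2$ for all $r$, and $\tau_r$ is supported on the $|G|$-element orbit $\lbrace (g x_1,\ldots,g x_q) : g\in G\rbrace$. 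Taking $A=\lbrace (y_1,\ldots,y_q) \text{ distinct} : y_1^{-1}y_2=x_1^{-1}x_2\rbrace$ gives
\begin{align*}
\norm{\tau_r-\pi}\ \geq\ \tau_r(A)-\pi(A)\ =\ 1-\frac{1}{|G|-1},
\end{align*}
which contradicts the claimed bound once $r$ is large, since $\frac{2|G|^{3/2}}{r+2}\bigl(\frac{q+|G|}{2|G|}\bigr)^{r/2+1}\to 0$ as $r\to\infty$ while $1-\frac{1}{|G|-1}>0$. One subtlety about the check you proposed: $\pi$ (uniform over distinct $q$-tuples) \emph{is} invariant under the transition operator, so that verification would pass; what fails is irreducibility --- started from $(x_1,\ldots,x_q)$ the chain never leaves its orbit, so it converges to the uniform distribution on the orbit, not to $\pi$, and calling $\pi$ ``the'' stationary distribution is already where the statement goes wrong.

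Concretely, the per-round contraction you hoped to recover cannot hold for the conditional chains with $l\geq 1$ tracked cards: given the keys and $Z_i(x_1)$, every other position is determined, $Z_i(x_j)=Z_i(x_1)\cdot x_1^{-1}\cdot x_j$, so the conditional law $p_i(\cdot)$ of card $x_{l+1}$ is a point mass and cannot move toward uniform on $S_i$. This is precisely the step at which the paper's proof breaks: it imports the Swap-or-Not identity $p_{i+1}(h(a))=\tfrac{1}{2}p_i(a)+\tfrac{1}{2}p_i(k_{i+1}\cdot a)$, which in \citep{HMR} is justified because each disjoint pair $\lbrace X, K\oplus X\rbrace$ receives an independent swap bit $F_i(\hat{X})$, but which is false here for $l\geq 1$ because the single global bit $b_{i+1}$ carries no randomness independent of the first card's motion. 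Only the $l=0$ case (equivalently $q=1$) is sound, which is why the single-card ``oblivious'' intuition looks fine. So your instinct to scrutinize the lockstep contraction and the role of $\pi$ was exactly right, and the outcome of that scrutiny is negative: no restructuring of cross-round randomness can rescue the estimate, because a repair requires changing the shuffle itself so that each round is a permutation assembled from per-position or per-pair independent decisions (for instance pairing $x$ with $k_i\cdot x^{-1}$, which is an involution on positions and restores a genuine swap structure) rather than a global translation. With the shuffle as defined, neither your proposal nor the paper's argument establishes the theorem, and the theorem as stated does not hold.
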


\begin{proof}
We start by doing some probability theoretic considerations. We let $\tau_i^k$ be the conditional distribution of $Z_i$ given the random keys $k_1,\cdots,k_r$. Since the $k_i$ are random variables, the same will hold for $\tau_i^k$, such that also $\norm{\tau_i^k - \pi}$ is. As $\tau_r = E(\tau_r^k)$, where $E$ signifies the probabilistic expectation, we get
\begin{align*}
\norm{\tau_r - \pi} = \norm{E(\tau_r^k - \pi)} \leq E(\norm{\tau_r^k - \pi}),
\end{align*}
where the inequality stems from Jensen's inequality as the total variation distance is half of the $\mathcal{L}^1$-norm, which is convex. 

Let $\nu$ be a distribution on $q$-tuples of a set $\Omega$, then we may define
\begin{align*}
\nu(u_1,\cdots,u_j) &= Pr\left[ Y_1=u_1, \cdots , Y_j = u_j \right], \\
\nu(u_j | u_1,\cdots, u_{j-1}) &= Pr\left[ Y_j = u_j | Y_1 = u_1, \cdots, Y_{j-1}=u_{j-1}\right],
\end{align*}
where $(Y_1,\cdots,Y_q)\sim \nu$. In the SC shuffle, $\tau_i(u_1,\cdots,u_j)$ is the probability of the $j$ first cards $x_1,\cdots,x_j$ end up in the positions $u_1,\cdots,u_j$ at time $i$ and $\tau_i(u_j|u_1,\cdots,u_{j-1})$ is the probability of $x_j$ being in spot $u_j$ at time $i$ when the other cards are in given positions. An analogous definition holds for the uniformly random $\pi$. Using this notation, we present the following lemma, whose proof we omit, but a slightly more general proof may be found in \citep{MR14}.

\begin{lem}\label{NormEstimate}
Fix a finite nonempty set $\Omega$ and let $\mu$ and $\nu$ be probability distributions supported on $q$-tuples of elements of $\Omega$, and suppose that $(Y_1,\cdots,Y_q)\sim \mu$, then
\begin{align}\label{ProbabilityEstimate}
\norm{\mu-\nu} \leq \sum_{l=0}^{q-1}E\left( \norm{\mu(\cdot | Y_1,\cdots,Y_l)-\nu(\cdot | Y_1,\cdots, Y_l)}\right).
\end{align}
\end{lem}

Since the $Y_i$ are random variables, so is $\norm{\mu(\cdot | Y_1,\cdots,Y_l)-\nu(\cdot | Y_1,\cdots, Y_l)}$ for each $l< q$.

Fix an integer $l\in \lbrace 0,\cdots, q-1\rbrace$ such that we consider the set of cards $\lbrace x_1,\cdots, x_l\rbrace$ and aim to bound the expected distance between the conditional distributions for $\tau_i^k$ and $\pi$. For $i\geq 0$, we therefore consider the set $S_i = G \setminus \lbrace Z_i(x_1), \cdots Z_i(x_l)\rbrace$, i.e. the set of positions that card $x_{l+1}$ can be in at time $i$ given the positions of cards $x_1,\cdots, x_l$ at time $i$. For $a\in S_i$ we introduce the notation $p_i(a)=\tau_i^k(a | Z_i(x_1), \cdots Z_i(x_l))$. For $m = |S_i| = |G|-l$, we have
\begin{align*}
\norm{\tau_i^k( \cdot | Z_i(x_1,\cdots,x_l)) - \pi( \cdot | Z_i(x_1,\cdots,x_l))} = \frac{1}{2}\sum_{a\in S_i} |p_i(a)-1/m |,
\end{align*}
for which we get the following estimate,
\begin{align}
\left( E\left( \sum_{a\in S_i} |p_i(a)-1/m |\right) \right)^2 &\leq E\left( \left( \sum_{a\in S_i} |p_i(a)-1/m |\right)^2 \right) \nonumber \\
&\leq m \cdot E\left( \sum_{a\in S_i} (p_i(a)-1/m )^2 \right) \nonumber \\
&\leq |G| \cdot E\left( \sum_{a\in S_i} (p_i(a)-1/m )^2 \right), \label{NEstimate}
\end{align}
by two applications of Cauchy-Schwarz and using that $m\leq |G|$. We now wish to show by induction that
\begin{align}\label{HMREstimate}
E\left( \sum_{a\in S_i} (p_i(a)-1/m )^2 \right) \leq \left( \frac{l+|G|}{2|G|}\right)^i,
\end{align}
for every $i\leq r$. For the induction start, $i=0$, consider that the initial card positions are deterministic, such that
\begin{align*}
E\left( \sum_{a\in S_0} (p_i(a)-1/m )^2 \right) = (1-1/m)^2 + (m-1)\cdot (0-1/m)^2 = 1 - 1/m < 1.
\end{align*}
This is because $S_0$ is the set of positions that card $x_{l+1}$ could be located in at time $0$, given the positions of cards $x_1,\cdots , x_l$. Since $p_0(a)$ is the probability that $x_{l+1}$ is in position $a$ at time $0$ and because the positions of the cards at time $0$ are deterministic, we know exactly which $a$ the card $x_{l+1}$ is sent to.
Let us now consider the induction step and assume that Equation (\ref{HMREstimate}) holds for $i$. Define $s_i = \sum_{a\in S_i} (p_i(a)-1/m )^2$. As $E(E(X|Y))=E(X)$, for independent $X$ and $Y$, and $E(cX)=cE(X)$, for a constant $c$, we need only show that
\begin{align*}
E(s_{i+1}|s_i) = \left( \frac{l+|G|}{2|G|}\right)s_i.
\end{align*}
Define the auxilliary function $h:S_i \rightarrow S_{i+1}$,
\begin{align*}
h(a) = \begin{cases} a & \hspace*{10pt} \text{ if } a\in S_{i+1}, \\ k_{i+1}\cdot a & \hspace*{10pt} \text{ else. } \end{cases}
\end{align*}
This function is well-defined: per definition of the shuffle, if $a\in S_i$ then either $a\in S_{i+1}$ or $k_{i+1}\cdot a \in S_{i+1}$. It is also bijective\footnote{For bijectivity argument, see Appendix~\ref{Bijectivityofh}.}, having the inverse,
\begin{align*}
h^{-1}(b) = \begin{cases} b & \hspace*{10pt} \text{ if } b\in S_{i}, \\ k_{i+1}^{-1}\cdot b & \hspace*{10pt} \text{ else. } \end{cases}
\end{align*}

Notice also that because of the uniformity of the round function (here instantiated by a random bit,) we get
\begin{align*}
p_{i+1}(h(a)) = \begin{cases} p_i(a) & \hspace*{10pt} \text{ if } k_{i+1}\cdot a\not\in S_{i}, \\ \tfrac{1}{2}p_i(a) + \tfrac{1}{2}p_i(k_{i+1}\cdot a) & \hspace*{10pt} \text{ else. } \end{cases}
\end{align*}

We note that
\begin{align*}
E(s_{i+1}|s_i) = \sum_{b\in S_{i+1}} E\left( \left(p_{i+1}(b)-1/m\right)^2 | s_i\right) = \sum_{a\in S_{i}} E\left( \left(p_{i+1}(h(a))-1/m\right)^2 | s_i\right),
\end{align*}
so we aim to bound the summands of the latter sum.

As $k_{i+1}$ is independent of the other rounds, we have that
\begin{align*}
Pr\left[ k_{i+1}\cdot a = y | y\in G \right] = 1/|G|.
\end{align*}
Recall that $m=|S_i|=|G|-l$. By conditioning on the value of $k_{i+1}\cdot a$ we get that
\begin{footnotesize}
\begin{align*}
E\left( \left( p_{i+1}(h(a))-1/m\right)^2 | s_t \right) = \frac{l}{|G|}(p_i(a)-1/m)^2 + \frac{1}{|G|}\sum_{y\in S_i}\left(\frac{p_i(a)+p_i(y)}{2}- \frac{1}{m}\right)^2,
\end{align*}
\end{footnotesize}
where the sum may be rewritten as
\begin{align*}
&\hspace*{10pt} \sum_{y\in S_i}\tfrac{1}{4}\left( (p_i(a)-1/m)+ (p_i(y)-1/m) \right)^2 \\
&= \tfrac{1}{4} \sum_{y\in S_i} (p_i(a)-1/m)^2 + \tfrac{1}{4} \sum_{y\in S_i} (p_i(y)-1/m)^2 + \tfrac{1}{2}(p_i(a)-1/m)\sum_{y\in S_i}(p_i(y)-1/m) \\
&= \tfrac{m}{4}(p_i(a)-1/m)^2 + \tfrac{1}{4}s_i,
\end{align*}
where we used that $\sum_{y\in S_i} (p_i(y) - 1/m)=0$. Hence we get that
\begin{align*}
E\left( \left( p_{i+1}(h(a))-1/m\right)^2 | s_t \right) = \frac{4l+m}{4|G|}(p_i(a)-1/m)^2 + \frac{s_i}{4|G|},
\end{align*}
which further gives us, by evaluating at each $a\in S_i$, that
\begin{align*}
E(s_{i+1}|s_i) &= \frac{4l+m}{4|G|}\sum_{a\in S_i}(p_i(a)-1/m)^2 + \frac{ms_i}{4|G|} \\
	&= \frac{4l+m}{4|G|}s_i + \frac{ms_i}{4|G|} \\
	&= \frac{l+|G|}{2|G|}s_i,
\end{align*}
such that we have proven our induction step.

Let us now use this to conclude our theorem. Letting $i=r$ in the above, we get that
\begin{align*}
\norm{\tau_r - \pi} &\leq E(\norm{\tau_r^k - \pi})  \\
	&\leq \sum_{l=0}^{q-1} E \left(\norm{\tau_i^k( \cdot | Z_i(x_1,\cdots,x_l)) - \pi( \cdot | Z_i(x_1,\cdots,x_l))}\right) && \text{by Lemma~\ref{NormEstimate}}\\
	&\leq \sum_{l=0}^{q-1} \frac{1}{2} \left( |G| \cdot E \left( \sum_{a\in S_r} (p_r(a)-1/m )^2\right) \right)^{1/2} && \text{by (\ref{NEstimate})} \\
	&\leq \sum_{l=0}^{q-1} \frac{\sqrt{|G|}}{2}\left( \frac{l+|G|}{2|G|}\right)^{r/2} && \text{by (\ref{HMREstimate})} \\
	&\leq \frac{\sqrt{|G|}}{2}\int_0^{q-1} \left( \frac{1}{2} + \frac{l}{2|G|}\right)^{r/2} dl \\
	&= \frac{\sqrt{|G|}}{2}\int_0^{\tfrac{q-1}{2|G|}}\left( \frac{1}{2}+ x \right)^{r/2}\cdot 2|G| dx \\
	&\leq |G|^{3/2} \int_0^{\tfrac{q}{2|G|}} \left( \frac{1}{2}+ x\right)^{r/2} dx \\
	&= |G|^{3/2} \int_{\tfrac{1}{2}}^{\tfrac{q}{2|G|}+\tfrac{1}{2}}  y^{r/2} dy \\
	&= \frac{2|G|^{3/2}}{r+2} \left( \left( \frac{q}{2|G|}+ \frac{1}{2} \right)^{r/2+1}- \left(\frac{1}{2}\right)^{r/2+1}\right) \\
	&\leq \frac{2|G|^{3/2}}{r+2}\left( \frac{q+|G|}{2|G|}\right)^{r/2+1}.
\end{align*}
\end{proof}

We note that the above estimate possibly may be reduced as it uses rather large estimates. However, for our purposes, this estimate is sufficient and also equivalent to the estimate given in \citep{HMR}. By an analogous argument, it is possible to prove that the inverse SC shuffle satisfies the exact same bound. Using Lemma~\ref{MaurerPietrzakRenner}, we may therefore obtain CCA security as well.

\begin{thm}
Let $P= SC\left[ 2r, G, \cdot \right]$, then
\begin{align*}
Adv_P^{cca}(\mathcal{A},q) \leq \frac{4|G|^{3/2}}{r+2}\left( \frac{q+|G|}{2|G|}\right)^{r/2+1}.
\end{align*}
\end{thm}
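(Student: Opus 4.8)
The plan is to obtain CCA security for the $2r$-round Scoot-or-Not shuffle by combining the just-proved rapid-mixing bound on the projected shuffle with the Maurer-Pietrzak-Renner composition lemma (Lemma~\ref{MaurerPietrzakRenner}). The key structural observation is that $SC[2r,G,\cdot]$ decomposes as a composition of two $r$-round shuffles: writing $M = SC[r,G,\cdot]$ for the first $r$ rounds and $L = SC[r,G,\cdot]$ for the last $r$ rounds (each with independent keys and round functions), we have $SC[2r,G,\cdot] = L \circ M$. To apply Lemma~\ref{MaurerPietrzakRenner} in the form $Adv_{M^{-1}\circ L}^{cca}(\mathcal{A},q) \leq Adv_L^{ncpa}(\mathcal{A},q) + Adv_M^{ncpa}(\mathcal{A},q)$, I must express the shuffle as $M^{-1}\circ L$ for appropriate component blockciphers. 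Since the SC shuffle and its inverse satisfy the same mixing bound (as remarked just before the statement, by an analogous argument), and since the composition structure lets the forward $2r$-round shuffle be viewed in the $M^{-1}\circ L$ form after relabeling, the two NCPA advantages will each be controlled by the rapid-mixing theorem.

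First I would bound the NCPA advantage of the $r$-round shuffle by the total variation distance of the projected shuffle from its stationary distribution. For a nonadaptive chosen-plaintext adversary making $q$ queries, its distinguishing advantage is exactly the statistical distance between the distribution of the $q$ outputs under the shuffle and the distribution of $q$ samples without replacement from $G$ (the uniform random-permutation behavior on $q$ distinct inputs). This is precisely $\norm{\tau_r - \pi}$ from the Rapid Mixing theorem, so
\begin{align*}
Adv_{SC[r,G,\cdot]}^{ncpa}(\mathcal{A},q) \leq \norm{\tau_r - \pi} \leq \frac{2|G|^{3/2}}{r+2}\left( \frac{q+|G|}{2|G|}\right)^{r/2+1}.
\end{align*}
The same bound holds for the inverse shuffle by the analogous argument already noted.

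Next I would invoke Lemma~\ref{MaurerPietrzakRenner} directly. Taking both $L$ and $M$ to be $r$-round Scoot-or-Not shuffles with independent randomness, the composed cipher $M^{-1}\circ L$ is distributed exactly as $SC[2r,G,\cdot]$ (since an inverse $r$-round SC shuffle composed with a forward $r$-round SC shuffle, with fresh independent keys, yields $2r$ rounds of independent scoot-or-not steps). Then
\begin{align*}
Adv_{SC[2r,G,\cdot]}^{cca}(\mathcal{A},q) \leq Adv_L^{ncpa}(\mathcal{A},q) + Adv_M^{ncpa}(\mathcal{A},q) \leq 2\cdot \frac{2|G|^{3/2}}{r+2}\left( \frac{q+|G|}{2|G|}\right)^{r/2+1},
\end{align*}
which is exactly the claimed bound $\frac{4|G|^{3/2}}{r+2}\left( \frac{q+|G|}{2|G|}\right)^{r/2+1}$.

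The main obstacle I expect is the careful verification that $SC[2r,G,\cdot]$ genuinely coincides in distribution with $M^{-1}\circ L$ for two independent $r$-round SC shuffles, and in particular that the non-commutative group structure does not spoil this identification. Because, unlike the abelian Swap-or-Not case, the SC shuffle is not self-inverse (the forward and inverse shuffles use different pairing rules, $k_i\cdot x$ versus $k_i^{-1}\cdot x$, as established in the inverse-correctness computation above), I must confirm that reversing the rounds of one $r$-round block and relabeling its keys produces a genuinely independent and identically distributed $r$-round forward shuffle, so that the MPR hypotheses (two independent blockciphers on the same message space) are met. Once this identification is justified — which follows from the independence and uniformity of the per-round keys and bits, together with the proven correctness of the inverse shuffle — the remaining steps are immediate applications of the rapid-mixing theorem and the composition lemma.
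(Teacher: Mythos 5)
Your proposal is correct and follows essentially the same route as the paper: apply Lemma~\ref{MaurerPietrzakRenner} with both components being $r$-round Scoot-or-Not shuffles and bound each NCPA advantage by the Rapid Mixing theorem, using the fact that the inverse shuffle obeys the same bound. The only difference is cosmetic—the paper instantiates $M = P^{-1}$ directly so that $(P^{-1})^{-1}\circ P = SC[2r,G,\cdot]$, whereas you take $M$ forward and justify the distributional identification $M^{-1}\circ L \stackrel{d}{=} SC[2r,G,\cdot]$ explicitly (a point the paper treats tersely); both arguments need the same two ingredients and yield the same constant.
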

\begin{proof}
Consider $P=SC\left[ r, G, \cdot \right]$ for the group $G \in \mathcal{G}$, and $P^{-1} = SC^{-1}\left[ r, G, \cdot \right]$, then Lemma~\ref{MaurerPietrzakRenner} gives us that
\begin{align*}
Adv_{(P^{-1})^{-1}\circ P}^{cca}(\mathcal{A},q) &\leq Adv_{P}^{ncpa}(\mathcal{A},q) + Adv_{P^{-1}}^{ncpa}(\mathcal{A},q) \\
	&\leq \frac{4|G|^{3/2}}{r+2}\left( \frac{q+|G|}{2|G|}\right)^{r/2+1}.
\end{align*}
As $(P^{-1})^{-1}$ is $P^{-1}$ reversed, i.e. $P$, the above is equivalent to using $P$ with $2r$ rounds instead of $r$ rounds.
\end{proof}

\subsection{Generalizing the Sometimes-Recurse Shuffle}
The \textit{Sometimes-Recurse} shuffle (SR) is a shuffle which essentially takes a weak PRP and turns it into a stronger PRP. Specifically, the SR shuffle takes an \textit{inner shuffle}, e.g. the SN shuffle, and uses it to shuffle the cards. Then it cuts the deck in some predefined \textit{split}, e.g. in half, and shuffles one of the new smaller decks. It then continues the procedure recursively until there is only one card left to shuffle which per definition is already shuffled. The security proof is based on the security of the inner shuffle which we assume to shuffle some half amount of the cards in the deck (or the recursive deck) to within $\varepsilon >0$ of uniform. In the case of the deck of cards being the set of integers $\lbrace 1,\cdots, N\rbrace$, whenever we make a split of the deck, we may reenumerate the cards to be the sets $\lbrace 1,\cdots, N/2 \rbrace$ and $\lbrace (N+1)/2, \cdots, N\rbrace$, and so forth. In the arbitary group setting, simply splitting into smaller sets give us problems with whether the sets are subgroups. If we consider using a bijection into some smaller group, which is not necessarily a subgroup, then we might as well do this to begin with, but then there is no reason to consider a shuffle on the group itself as we may use the SR and SN shuffle on the initial set of group elements by giving it a bijection to the set of integers of size $|G|$, i.e. an encoding.

We leave as an open question the construction of a shuffle turning our SC shuffle into a PRP which is secure under $q=|G|$ queries by an adversary.

\newpage
\section{Conclusion}\label{ConclusionSection}
We generalized the Even and Mansour scheme as well as the Feistel cipher to work over arbitrary groups and proved that classical results pertain to the group versions. We also generalized the Function to Permutation Converters of Zhandry to get the existence of quantum-secure PRPs from the existence of quantum-secure OWFs. Lastly, we introduced the Scoot-or-Not shuffle, a cipher based on card shuffles.
 
Due to the work by Alagic and Russell in \citep{Gorjan}, we hope that our results open avenues to proving that classical schemes may be made quantum-secure by generalizing them to certain groups. For further work, we suggest generalizing other classical schemes and using the underlying group structures to do Hidden Shift reductions, especially those schemes considered by Kaplan et al in \citep{Kaplan}. Although it has not been included in this thesis, we have constructed a "tweakable" cipher over arbitrary groups which does seem to have a Hidden Shift reduction. Security proofs and further reductions for this cipher would be the basis of further work.

I would like to thank my thesis advisor Gorjan Alagic for the topic, enlightening questions and answers, as well as the encouragements along the way. I would also like to thank my thesis advisor Florian Speelman for the support, fun discussions, and for letting me be his first Master's thesis student.

\pagebreak
\bibliographystyle{alpha}
\bibliography{Articlebib}

\clearpage
\appendix
\newpage
\section{Proof of probability of Game X}\label{ExplainX}
Recall the definition of $S^1_i, S^2_i, T^1_i$ and $T^2_i$ (see p.~\pageref{GamesXandR}.) We write $S_s$ and $T_t$ to denote the final transcripts. We drop the index $i$ if it is understood. We begin by defining \textbf{Game X'}.

\vspace*{0.015\textheight}
\hrule
\begin{footnotesize}
\begin{minipage}[t]{0.9\textwidth}
\vspace*{0.005\textheight}\textbf{GAME X':} Initially, let $S_0$ and $T_0$ be empty. Choose $k\in_R G$, then answer the $i+1$'st query as follows:

\textbf{$E$-oracle query with $m_{i+1}$:} \\
 \textbf{1.} If $P(m_{i+1}\cdot k)\in T^2_i$ return $P(m_{i+1}\cdot k)\cdot k$ \\
 \textbf{2.} Else choose $y_{i+1}\in_R G\setminus T^2_i$, define $P(m_{i+1}\cdot k) = y_{i+1}$, and return $y_{i+1}\cdot k$. \\

\textbf{$D$-oracle query with $c_{i+1}$:} \\
 \textbf{1.} If $P^{-1}(c_{i+1}\cdot k^{-1}) \in T^1_i$, return $P^{-1}(c_{i+1}\cdot k^{-1})\cdot k^{-1}$. \\
 \textbf{2.} Else choose $x_{i+1}\in_R G\setminus T^1_i$, define $P^{-1}(c_{i+1}\cdot k^{-1})=x_{i+1}$, and return $x_{i+1}\cdot k^{-1}$. \\

\textbf{$P$-oracle query with $x_{i+1}$:} \\
 \textbf{1.} If $P(x_{i+1})\in T^2_i$, return $P(x_{i+1})$. \\
 \textbf{2.} Else choose $y_{i+1}\in_R G\setminus T^2_i$, define $P(x_{i+1})=y_{i+1}$, and return $y_{i+1}$. \\

\textbf{$P^{-1}$-oracle query with $y_{i+1}$:} \\
 \textbf{1.} If $P^{-1}(y_{i+1})\in T^1_i$, return $P^{-1}(y_{i+1})$. \\
 \textbf{2.} Else choose $x_{i+1}\in_R G\setminus T^1_i$, define $P^{-1}(y_{i+1})=x_{i+1}$, and return $x_{i+1}$.
\vspace*{0.005\textheight}
\end{minipage}
\end{footnotesize}
\hrule
\vspace*{0.015\textheight}

Notice that the only difference between \textbf{Game X'} and the game defining $P_X$ is that the latter has defined all values for the oracles beforehand while the former "defines as it goes." Still, an adversary cannot tell the difference between playing the \textbf{Game X'} or the game defining $P_X$. Thus, $Pr_{X'}\left[ \mathcal{A}_{E,D}^{P,P^{-1}}=1 \right] = P_X$.

What we wish to show is that
\begin{align*}
Pr_X \left[ \mathcal{A}_{E,D}^{P,P^{-1}}=1\right] = Pr_{X'} \left[ \mathcal{A}_{E,D}^{P,P^{-1}}=1\right],
\end{align*}
i.e. that no adversary $\mathcal{A}$ may distinguish between playing \textbf{Game X} and playing \textbf{Game X'}, even negligibly. We will do this by showing that no adversary $\mathcal{A}$ may distinguish between the outputs given by the two games. As both games begin by choosing a uniformly random key $k$ and as we show that for this value the games are identical, we hereby assume such a key $k$ to be a fixed, but arbitrary, value for the remainder of this proof.

Considering the definitions of \textbf{Game X} and \textbf{Game X'}, we see that the two games define their $E/D$- and $P/P^{-1}$-oracles differently: the former defining both, while the latter defines only the $P/P^{-1}$-oracle and computes the $E/D$-oracle. We show that \textbf{Game X} also answers its $E/D$-oracle queries by referring to $P/P^{-1}$, although not directly.

Given the partial functions $E$ and $P$ in \textbf{Game X}, i.e. functions having been defined for all values up to and including the $i$'th query, define the partial function $\widehat{P}$ as the following.
\begin{align*}
\widehat{P}(x) \defeq 	\begin{cases}
							P(x) & \text{if } P(x) \text{ is defined,} \\
							E(x\cdot k^{-1}) \cdot k^{-1} & \text{if } E(x\cdot k^{-1}) \text{ is defined, and} \\
							\text{undefined} & \text{otherwise.}
						\end{cases}
\end{align*}
Using the above definition, defining a value for $E$ or $P$ implicitly defines a value for $\widehat{P}$. The first question is, whether or not $\widehat{P}$ is well-defined, i.e. whether there are clashes of values (that is, differences between values differing by other than $\cdot k$ (or $\cdot k^{-1}$)) for some $x$ for which both $P(x)$ and $E(x\cdot k^{-1})$ are defined.

\begin{lem}
Let $E$ and $P$ be partial functions arising in \textbf{Game X}, then the partial function $\widehat{P}$ is well-defined.
\end{lem}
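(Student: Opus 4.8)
The plan is to prove the lemma by induction on the number of oracle queries answered in \textbf{Game X}, maintaining throughout the single invariant that for every $m\in G$ at which both $E(m)$ and $P(m\cdot k)$ have been defined, the Even-Mansour relation $E(m)=P(m\cdot k)\cdot k$ holds. Since the second case of the definition of $\widehat P$ gives $\widehat P(x)=E(x\cdot k^{-1})\cdot k^{-1}$, writing $m=x\cdot k^{-1}$ (so $x=m\cdot k$) shows that this invariant is exactly the absence of clashes: it forces $P(x)=E(x\cdot k^{-1})\cdot k^{-1}$ at every $x$ where both defining cases apply. The base case is immediate, as $E$ and $P$ start out empty and the invariant holds vacuously.

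For the inductive step I would first observe that a single query updates $\widehat P$ at only one point. Defining $E(m_{i+1})$ (equivalently $D(c_{i+1})$) can newly trigger the $E$-branch of $\widehat P$ at $x$ only when $x\cdot k^{-1}=m_{i+1}$, i.e. at $x=m_{i+1}\cdot k$; symmetrically, defining $P(x_{i+1})$ (equivalently $P^{-1}(y_{i+1})$) affects $\widehat P$ only at $x=x_{i+1}$. Hence in each of the four query types it suffices to check the one point where a clash could arise. Throughout I will use that the transcripts $S_i,T_i$ record consistent partial bijections, so that a condition such as ``$P(m_{i+1}\cdot k)\in T_i^2$'' is equivalent to ``$P$ is already defined at $m_{i+1}\cdot k$'', and likewise for the remaining oracles.

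The heart of the argument is then the four-case verification. For an $E$-query on $m_{i+1}$: if $P(m_{i+1}\cdot k)$ is already defined, \textbf{Game X} redefines $c_{i+1}:=P(m_{i+1}\cdot k)\cdot k$, giving $E(m_{i+1})=P(m_{i+1}\cdot k)\cdot k$ directly, and otherwise the invariant holds vacuously at $m_{i+1}\cdot k$ (the subsequent ``else if'' branch only resamples $c_{i+1}$ and leaves $m_{i+1}$ fixed). For a $D$-query on $c_{i+1}$, when $P^{-1}(c_{i+1}\cdot k^{-1})$ is defined the game sets $m_{i+1}:=P^{-1}(c_{i+1}\cdot k^{-1})\cdot k^{-1}$, so that $m_{i+1}\cdot k=P^{-1}(c_{i+1}\cdot k^{-1})$ and hence $P(m_{i+1}\cdot k)\cdot k=(c_{i+1}\cdot k^{-1})\cdot k=c_{i+1}=E(m_{i+1})$. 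The $P$- and $P^{-1}$-queries are symmetric: when $E(x_{i+1}\cdot k^{-1})$ is defined the game forces $y_{i+1}:=E(x_{i+1}\cdot k^{-1})\cdot k^{-1}$, which rearranges to $E(x_{i+1}\cdot k^{-1})=P(x_{i+1})\cdot k$, the invariant at $m=x_{i+1}\cdot k^{-1}$.

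I expect the only real subtlety to be bookkeeping rather than mathematics. One must verify that the flag-setting branches which trigger a resample (the ``goto Step 1'' cases) are reached precisely when the counterpart value is \emph{not} yet defined, so that they never threaten forward consistency; they instead concern the injectivity of $\widehat P$, which is the subject of the following lemma. Once this separation is made explicit, each of the four cases collapses to a one-line group computation, and the induction closes.
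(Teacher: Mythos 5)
Your overall strategy --- induction on the oracle answers with the invariant that $E(m)=P(m\cdot k)\cdot k$ wherever both sides are defined, followed by a four-case check using the redefinition branches of \textbf{Game X} --- is exactly the paper's argument, and your treatment of the $E$-query, the $P$-query, and the ``if'' branch of the $D$-query is correct. The gap is in your final paragraph, where you claim that the ``goto Step 1'' resample branches are reached precisely when the counterpart value is not yet defined, never threaten forward consistency, and only concern injectivity. That claim is true for the $E$- and $P$-queries but false for the $D$- and $P^{-1}$-queries, and these are exactly the cases your case analysis leaves open. In a $D$-query on $c_{i+1}$ with $P^{-1}(c_{i+1}\cdot k^{-1})$ undefined, the resample branch fires precisely when the randomly chosen $m_{i+1}$ has $P(m_{i+1}\cdot k)$ \emph{already defined}; and since $c_{i+1}\cdot k^{-1}$ is then not in the image of $P$, we have $P(m_{i+1}\cdot k)\neq c_{i+1}\cdot k^{-1}$, so defining $E(m_{i+1})=c_{i+1}$ for such an $m_{i+1}$ would be a genuine well-definedness violation, not an injectivity issue. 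The resampling is therefore not a side remark to be deferred to a later lemma (the paper has no such lemma); it is the very mechanism that protects your invariant in these two query types, which is exactly what the paper's proof says about step $D$-$2$.

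The fix is short: observe that the resampling loop in the $D$-query (resp.\ $P^{-1}$-query) terminates only when the finally chosen $m_{i+1}$ satisfies $P(m_{i+1}\cdot k)\notin T_i^2$ (resp.\ $E(x_{i+1}\cdot k^{-1})\notin S_i^2$), so at the moment step 3 defines $E(m_{i+1})=c_{i+1}$ (resp.\ $P(x_{i+1})=y_{i+1}$) the forward counterpart is undefined and the invariant holds vacuously at that point. With this substituted for your claimed separation, the induction closes and the proof coincides with the paper's.
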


\begin{proof}
Proof by induction on the number of "Define" steps in \textbf{Game X} (i.e. steps $E-3, D-3, P-3,$ and $P^{-1}-3$) as these are the steps where $\widehat{P}$ becomes defined. The initial case of the induction proof is trivial as $S_0$ and $T_0$ are empty such that no values may clash. Suppose now that in step $E-3$ we define $E(m)=c$. The only possibility that $\widehat{P}$ becomes ill-defined will occur if the new $E(m)$ value clashes with a prior defined $P(m\cdot k)$ value: If $P(m \cdot k)$ was not defined, then no clashes can arise. If $P(m\cdot k)$ was defined, then by step $E-2$, the value is $E(m)\cdot k^{-1}$, such that there is no clash.

For $D-3$, the argument is similar as $E(m)$ will become defined as well. Although, for the case where $P(m\cdot k)$ is defined, step $D-2$ forces a new uniformly random value of $m$ to be chosen until no clash occurs.

Analogously, for $P$ and $P^{-1}$, no clashes will arise, hence, $\widehat{P}$ must be well-defined.
\end{proof}

We may also consider $\widehat{P}$ in \textbf{Game X'}, in the sense that when we define a value for $P$ in the game, we implicitly define a value for $\widehat{P}$ where $\widehat{P}(x)=P(x)$ as $E(x\cdot k^{-1})=P(x)$ in \textbf{Game X'}.

We wish now to show that the oracle query-answers of $E, D, P,$ and $P^{-1}$ in \textbf{Game X}, expressed in terms of $\widehat{P}$, correspond exactly to those in \textbf{Game X'}.

\textbf{Case 1: $E$-oracle query.} Beginning with \textbf{Game X}, we first note that \textbf{Game X} never defines $E(m)$ unless $m$ has been queried to the $E$-oracle, or alternately, the $D$-oracle has been queried with a $c$ such that $E(m)=c$. However, as $\mathcal{A}$ never repeats a query if it can guess the answer, i.e. never queries any part of an already defined $E/D$-oracle pair, we may assume that $E(m)$ is undefined when $m$ is queried. Therefore, we see that concurrently with $m$ being queried, we have that $\widehat{P}(m\cdot k)$ will be defined if and only if $P(m\cdot k)$ is defined, and $\widehat{P}(m\cdot k) = P(m\cdot k)$. Let us consider the two cases: when $\widehat{P}(m\cdot k)$ is defined and when it is undefined.
\begin{itemize}
\item[\textit{Case 1a}:] When $\widehat{P}(m\cdot k)$ is defined, then \textbf{Game X} returns $c = \widehat{P}(m\cdot k)\cdot k$. Setting $E(m)=c$ leaves $\widehat{P}$ unchanged, i.e. the value $\widehat{P}(m\cdot k)$ remains the same, unlike the next case.
\item[\textit{Case 1b}:] \item[\textit{Case 1b}:] \begin{sloppypar} When $\widehat{P}(m \cdot k)$ is undefined, then \textbf{Game X} repeatedly chooses ${c\in_R G \setminus S^2}$ uniformly until $P^{-1}(c\cdot k^{-1})$ is undefined, i.e. the set $U=\lbrace c\in G | P^{-1}(c\cdot k^{-1})\not\in T^1\rbrace$. It follows that $y=c\cdot k^{-1}$ is uniformly distributed over $G \setminus \widehat{T}^2$.\footnote{$\widehat{T}^1$ and $\widehat{T}^2$ are the corresponding sets on the query pairs of $\widehat{P}$.} This can be seen by showing that $S^2 \cup U^\complement = \widehat{T}^2\cdot k$, where the only non-triviality in the argument follows from the definition of $\widehat{P}$. In this case, setting $E(m)=c$ also sets $\widehat{P}(m\cdot k)=y$, in contrast to the prior case as it is now defined. \end{sloppypar}
\end{itemize}
We now consider the same query on \textbf{Game X'}.
\begin{itemize}
\item[\textit{Case 1a'}:] When $\widehat{P}(m\cdot k)=P(m\cdot k)$ is defined, $c=P(m\cdot k)\cdot k$ is returned, and $\widehat{P}$ is unchanged.
\item[\textit{Case 1b'}:] When $\widehat{P}(m\cdot k)=P(m\cdot k)$ is undefined, we choose $y\in_R G\setminus T^2 = G\setminus \widehat{T}^2$, $\widehat{P}(m\cdot k)$ is set to $y$, and $c=y\cdot k$ is returned.
\end{itemize}
Thus, the behaviour of \textbf{Game X} and \textbf{Game X'} are identical on the $E$-oracle queries.

We will be briefer in our arguments for the following $3$ cases as the arguments are similar.

\textbf{Case 2: $D$-oracle query.} Here we again assume that no element of an $E/D$-oracle pair $(m,c)$, such that $E(m)=c$, has been queried before. Like in the above case, we see that, as $\widehat{P}(m \cdot k) = P(m\cdot k)$, we also have $\widehat{P}^{-1}(c \cdot k^{-1}) = P^{-1}(c\cdot k^{-1})$.
\begin{itemize}
\item[\textit{Case 2a $+$ 2a'}:] When $\widehat{P}^{-1}(c\cdot k^{-1})=P^{-1}(c\cdot k^{-1})$ is defined, then $m=P^{-1}(c\cdot k^{-1})\cdot k^{-1}$ is returned, leaving $\widehat{P}^{-1}(c \cdot k^{-1})$ unchanged in both games.
\item[\textit{Case 2b $+$ 2b'}:] If $\widehat{P}^{-1}(c\cdot k^{-1})=P^{-1}(c\cdot k^{-1})$ is undefined, then $x \in_R G \setminus \widehat{T}^1$ is chosen uniformly and $\widehat{P}^{-1}(c\cdot k^{-1}) = x$, in both cases.
\end{itemize}
Thus, the behaviour of \textbf{Game X} and \textbf{Game X'} are identical on the $D$-oracle queries.

\textbf{Case 3: $P$-oracle query.} Here we instead assume that no element of a $P/P^{-1}$-oracle pair $(x,y)$ such that $P(x)=y$, has been queried before.
\begin{itemize}
\item[\textit{Case 3a $+$ 3a'}:] Using the definition of the $E$- and $P$-oracles in \textbf{Game X} and the definition of $\widehat{P}$ we see that $P(x)$ is defined if and only if $E(x\cdot k^{-1})$ is defined, but then this also holds if and only if $\widehat{P}(x)$ is defined (by the assumption in the beginning of case $3$). Hence, if $\widehat{P}(x)$ is defined, then $y=E(x\cdot k^{-1})\cdot k^{-1}=\widehat{P}(x)$. Indeed, both games secure this value.
\item[\textit{Case 3b $+$ 3b'}:] If $\widehat{P}(x)$ is undefined, then $y \in_R G \setminus \widehat{T}^2$ is chosen uniformly and $\widehat{P}(x)$ is defined to be $y$, in both cases. 
\end{itemize}
Thus, the behaviour of \textbf{Game X} and \textbf{Game X'} are identical on the $P$-oracle queries.

\textbf{Case 4: $P^{-1}$-oracle query.} Again, we assume that no element of a $P/P^{-1}$-oracle pair $(x,y)$ such that $P(x)=y$, has been queried before.
\begin{itemize}
\item[\textit{Case 4a $+$ 4a'}:] Using the definition of \textbf{Game X} and the definition of $\widehat{P}$, as well as our case $4$ assumption, we see that $\widehat{P}^{-1}(y)$ is defined if and only if $D(y\cdot k)$ is defined. Hence, if $\widehat{P}^{-1}(y)$ is defined, then $x=D(y\cdot k) \cdot k = \widehat{P}^{-1}(y)$. Indeed, both games secure this value.
\item[\textit{Case 4b $+$ 4b'}:] If $\widehat{P}^{-1}(y)$ is undefined, then $x \in_R G \setminus \widehat{T}^1$ is chosen uniformly and $\widehat{P}^{-1}(y)$ is defined to be $x$, in both cases. 
\end{itemize}
Thus, the behaviour of \textbf{Game X} and \textbf{Game X'} are identical on the $P^{-1}$-oracle queries. Q.E.D.

\newpage
\section{Proof that the probability of Game R and Game R' match}\label{App:ReqRR}
Recall the definition of $S^1_i, S^2_i, T^1_i$ and $T^2_i$ (see p.~\pageref{GamesXandR}). We write $S_s$ and $T_t$ to denote the final transcripts. We also introduce the following definition.
\begin{defn}\label{overlapidentical}
We say that two $E/D$-pairs $( m_i,c_i )$ and $( m_j,c_j )$ \textbf{overlap} if $m_i=m_j$ or $c_i=c_j$. If $m_i=m_j$ and $c_i=c_j$, we say that the pairs are \textbf{identical}. Likewise for $P/P^{-1}$-pairs $( x_i,y_i )$ and $( x_j,y_j )$.
\end{defn}

If two pairs overlap, then by the definition of the $E/D$- and $P/P^{-1}$-oracles, they must be identical. Therefore, WLOG, we may assume that all queries to the oracles are non-overlapping. Let us now prove the lemma.

\begin{lem}
$Pr_R\left[ BAD \right] = Pr_{R'} \left[ BAD \right]$.
\end{lem}

\begin{proof}
We need to prove that \textbf{Game R} has its flag set to \textbf{bad} if and only if \textbf{Game R'} has its flag set to \textbf{bad}.
	
"$\Rightarrow$": We want to show that there exists $(m,c)\in S_s$ and $(x,y)\in T_t$ such that either $m\cdot k = x$ or $c\cdot k^{-1} = y$ (i.e. such that $k$ becomes bad). We have to consider the $8$ cases where the flag is set to bad. All of the cases use an analogous argument to the following: If $P(m \cdot k)$ is defined then $P(m\cdot k) = y = P(x)$ for some $(x,y)\in T_{t}$ such that, as overlapping pairs are identical, $m\cdot k = x$.

"$\Leftarrow$": We assume that there exists $(m,c)\in S_s$ and $(x,y)\in T_t$  such that $k$ becomes bad. i.e. such that either $m\cdot k = x$ or $c\cdot k^{-1} = y$. We need to check that in all four oracle queries, the flag in \textbf{Game R} is set to bad, which needs a consideration of 8 cases.

Assume that $m\cdot k = x$, then
\begin{align*}
E\text{-oracle on } m &: P(m\cdot k)=P(x) = y \in T_t^2, \\
D\text{-oracle on } c &: P(m\cdot k)=P(x) = y \in T_t^2, \\
P\text{-oracle on } x &: E(x\cdot k^{-1}) = E(m) = c \in S_s^2, \\
P^{-1}\text{-oracle on } y &: E(x\cdot k^{-1}) = E(m) = c \in S_s^2.
\end{align*}

Assume now that $c\cdot k^{-1} = y$, then
\begin{align*}
E\text{-oracle on } m &: P^{-1}(c\cdot k^{-1})=P^{-1}(y) = x \in T_t^1, \\
D\text{-oracle on } c &: P^{-1}(c\cdot k^{-1})=P^{-1}(y) = x \in T_t^1, \\
P\text{-oracle on } x &: D(y\cdot k) = D(c) = m \in S_s^1, \\
P^{-1}\text{-oracle on } y &: D(y\cdot k) = D(c) = m \in S_s^1.
\end{align*}
\end{proof}

\newpage
\section{Bijectivity of $h$}\label{Bijectivityofh}
We first note that $h^{-1}$ is well-defined by the same argument that $h$ is well-defined, but by considering the inverse shuffle.
If $b\in S_i$, then
\begin{align*}
h(h^{-1}(b)) = h(b) = b,
\end{align*}
as $b\in S_{i+1}$ per definition of $h^{-1}: S_{i+1} \rightarrow S_i$.
If $b\not\in S_i$, then
\begin{align*}
h(h^{-1}(b)) = h(k_{i+1}^{-1}\cdot b) = k_{i+1}\cdot k_{i+1}^{-1}\cdot b = b,
\end{align*}
because, as $k_{i+1}^{-1}\cdot b\in S_i$, either $k_{i+1}^{-1}\cdot b\in S_{i+1}$ or $k_{i+1}\cdot k_{i+1}^{-1}\cdot b = b \in S_{i+1}$, and we already know that $b\in S_{i+1}$ such that $k_{i+1}^{-1}\cdot b\not\in S_{i+1}$.

If $a\in S_{i+1}$, then
\begin{align*}
h^{-1}(h(a)) = h^{-1}(a) = a,
\end{align*}
per definition of $h: S_{i} \rightarrow S_{i+1}$.
If $a\not\in S_{i+1}$, then
\begin{align*}
h^{-1}(h(a)) = h^{-1}(k_{i+1}\cdot a) = k_{i+1}^{-1}\cdot k_{i+1}\cdot a = a,
\end{align*}
because, as $k_{i+1}\cdot a\in S_{i+1}$, either $k_{i+1}\cdot a\in S_{i}$ or $k_{i+1}^{-1}\cdot k_{i+1}\cdot a = a \in S_{i}$, but we already know that $a\in S_{i}$ such that $k_{i+1}\cdot a\not\in S_{i}$.

\end{document}